\newtheorem{theorem}{Theorem}[section]
\newtheorem{lemma}[theorem]{Lemma}
\newtheorem{proposition}[theorem]{Proposition}
\theoremstyle{definition}
\newtheorem{definition}[theorem]{Definition}
\newtheorem{fact}[theorem]{Fact}
\theoremstyle{remark}
\newtheorem{remark}[theorem]{Remark}
\newtheorem{example}[theorem]{Example}
\numberwithin{equation}{section}
\newcommand{\Leb}{\mathrm{Leb}}
\newcommand{\conv}{\mathrm{conv}}
\newcommand{\ud}{\,\mathrm{d}}
\newcommand{\e}{\mathrm{e}}
\newcommand{\Exp}{\mathrm{Exp}}
\DeclareMathOperator*{\esssup}{ess\,sup}
\begin{document}
%
%
%
%
\title[Utility from inter-temporal wealth]{Infinite horizon utility
  maximisation from inter-temporal wealth}

\author[Michael Monoyios]{Michael Monoyios} 

\address{Mathematical Institute \\ 
University of Oxford \\
Radcliffe Observatory Quarter\\
Woodstock Road\\
Oxford OX2 6GG\\
UK}

\email{monoyios@maths.ox.ac.uk}

\date{\today}

\thanks{Part of this work was carried out during a visit to the
  Laboratoire de Probabilit\'es et Mod\`eles Al\'eatoires,
  Universit\'e Paris Diderot. I am very grateful to Huy\^en Pham
  for generous hospitality.}

\begin{abstract}

We develop a duality theory for the problem of maximising expected
lifetime utility from inter-temporal wealth over an infinite horizon,
under the minimal no-arbitrage assumption of No Unbounded Profit with
Bounded Risk (NUPBR). We use only deflators, with no arguments
involving equivalent martingale measures, so do not require the
stronger condition of No Free Lunch with Vanishing Risk (NFLVR). Our
formalism also works without alteration for the finite horizon version
of the problem. As well as extending work of Bouchard and Pham
\cite{bp04} to any horizon and to a weaker no-arbitrage setting, we
obtain a stronger duality statement, because we do not assume by
definition that the dual domain is the polar set of the primal
space. Instead, we adopt a method akin to that used for inter-temporal
consumption problems, developing a supermartingale property of the
deflated wealth and its path that yields an infinite horizon budget
constraint and serves to define the correct dual variables. The
structure of our dual space allows us to show that it is convex,
without forcing this property by assumption. We proceed to enlarge the
primal and dual domains to confer solidity to them, and use
supermartingale convergence results which exploit Fatou convergence,
to establish that the enlarged dual domain is the bipolar of the
original dual space. The resulting duality theorem shows that all the
classical tenets of convex duality hold. Moreover, at the optimum, the
deflated wealth process is a potential converging to zero. We work out
examples, including a case with a stock whose market price of risk is
a three-dimensional Bessel process, so satisfying NUPBR but not NFLVR.

\end{abstract}

\maketitle

\tableofcontents

\section{Introduction}
\label{sec:intro}

Let $U:\mathbb{R}_{+}\to\mathbb{R}$ be a classical utility function
and $(X_{t})_{t\geq 0}$ a non-negative wealth process generated from
self-financing investment in a semimartingale incomplete market on a
complete stochastic basis
$(\Omega,\mathcal{F},\mathbb{F}:=(\mathcal{F}_{t})_{t\in[0,\infty)},
\mathbb{P})$, with the filtration $\mathbb{F}$ satisfying the usual
hypotheses of right-continuity and augmentation with $\mathbb{P}$-null
sets of $\mathcal{F}$. Under the minimal no-arbitrage assumption of No
Unbounded Profit with Bounded Risk (NUPBR), we develop a duality
theory for a problem in which utility is derived from inter-temporal
wealth over the infinite horizon:
\begin{equation}
\mathbb{E}\left[\int_{0}^{\infty}U(X_{t})\ud\kappa_{t}\right] \to \max !
\label{eq:basicproblem}
\end{equation}
In \eqref{eq:basicproblem}, $\kappa:[0,\infty)\to\mathbb{R}_{+}$ is a
non-decreasing c\`adl\`ag adapted process that will act as a finite
measure to assign a weight to utility of wealth at each time. We focus
on the infinite horizon case, but our approach also works without
alteration for the finite horizon version of \eqref{eq:basicproblem},
as we re-iterate in Remark \ref{rem:fh}.

Problems of the type in \eqref{eq:basicproblem} can arise when
traditional utility of terminal wealth problems have a random horizon
date, as we shall illustrate by some examples in Section
\ref{subsubsec:sexamples}, but can just as well be considered in their
own right as one possible objective for a long-lived investment
fund. A duality theory for such problems was developed by Bouchard and
Pham \cite{bp04} over a finite horizon, with a no-arbitrage assumption
that allowed for the existence of equivalent local martingale measures
(ELMMs), so tantamount to assuming No Free Lunch with Vanishing Risk
(NFLVR) in the terminology of Delbaen and Schachermayer
\cite{ds94}. Here, the underlying assumptions as well as the approach
and construction of the dual space are different to those in
\cite{bp04}, as we now describe.

First, as indicated above, we relax the no-arbitrage assumption from
NFLVR to NUPBR, so we do not rely on the existence of ELMMs, only on
the existence of a class of deflators that multiply admissible wealth
processes to create supermartingales. It was first made explicit by
Karatzas and Kardaras \cite{kk07} (though was implicit in the terminal
wealth problem of Karatzas et al \cite{klsx91} in an incomplete It\^o
process market, in which which ELMMs were not invoked at all) that all
one needs for well-posed utility maximisation problems is the
existence of a suitable class of deflators to act as dual
variables. In particular, ELMMs are not needed. This is a first reason
for adopting NUPBR as our no-arbitrage condition.

Aside from weakening the no-arbitrage assumption, there are other
sound reasons for avoiding the use of ELMMs. It is well known that
ELMMs will typically not exist over the infinite horizon, because the
candidate change of measure density process is not a uniformly
integrable martingale. This is the case for the Black-Scholes model
for example, as discussed in Karatzas and Shreve \cite[Section
1.7]{ks98}. Moreover, even if ELMMs might exist when restricted to a
finite horizon, one needs to proceed with some care in invoking them
in an infinite horizon model, by ensuring that events in the tail
$\sigma$-algebra
$\mathcal{F}_{\infty}:=\sigma\left(\bigcup_{t\geq0}\mathcal{F}_{t}\right)$
have been excluded in a consistent way. We discuss this issue further
in Section \ref{subsubsec:completion}. Irrespective of such
subtleties, since deflators are the key ingredient for establishing a
duality for utility maximisation problems, it is natural to construct
a theory which uses only deflators, and makes no use whatsoever of
constructions involving ELMMs, and this is what we do. A key step in
this approach will be the use of the Stricker and Yan \cite{sy98}
version of the Optional Decomposition Theorem (ODT) to establish
bipolarity relations between the primal and dual domains, as opposed
to variants of the ODT which state the result in terms of ELMMs.

Second, our approach to establishing the duality between the primal
problem in \eqref{eq:basicproblem} and an appropriately defined dual
problem differs quite markedly from that in Bouchard and Pham
\cite{bp04}, and our basic duality statement is strengthened compared
to that in \cite{bp04}, in essence because we are able to prove, as
opposed to assume by definition, that the dual domain is the polar of
the primal domain, as we now describe.

The approach taken in \cite{bp04}, over a finite horizon time $[0,T]$,
is to define the dual domain (in the case where the initial value of
the dual variables is unity) as the set of processes $Y$ such that
$\mathbb{E}\left[\int_{0}^{T}X_{t}Y_{t}\ud\kappa_{t}\right]\leq 1$ for
all admissible wealth processes with unit initial capital. In other
words, the dual domain was explicitly defined in \cite{bp04} as the
polar of the primal domain. This automatically renders the dual domain
convex and closed, so bypasses some steps in establishing bipolarity
relations between the primal and dual spaces, and hence the duality
theorem, but at the expense of weakening the final statement to some
degree. This is also the reason that the bulk of the remaining
analysis in \cite{bp04} takes place in the primal domain.

In our method, by contrast, we find the form of the dual problem and
the associated dual domain by seeking a supermartingale property
satisfied by the pair $(X_{t},(X_{s})_{0\leq s\leq t})_{t\geq 0}$,
that is, the value of an admissible wealth process at any time, as
well as the wealth path up to that time, as follows. Let $S$ be any
classical supermartingale deflator, so $XS$ is a supermartingale for
all admissible wealth processes, and let $\beta$ be a non-negative
process such that $\int_{0}^{\cdot}\beta_{s}\ud\kappa_{s}$ is almost
surely finite. We define associated supermartingales $R$ and processes
$Y$ by
\begin{equation} 
R := \exp\left(-\int_{0}^{\cdot}\beta_{s}\ud\kappa_{s}\right)S, \quad  
Y := \beta R =
\beta\exp\left(-\int_{0}^{\cdot}\beta_{s}\ud\kappa_{s}\right)S. 
\label{eq:RYintro}
\end{equation}
With these processes in place, we show that
$M:=XR+\int_{0}^{\cdot}X_{s}Y_{s}\ud\kappa_{s}$ is a supermartingale
for all admissible wealth processes. The wealth-path deflators $Y$ are
then the appropriate dual variables for the problem in
\eqref{eq:basicproblem}. They involve the auxiliary dual control
$\beta$ above and beyond that implicit in the choice of
supermartingale deflator, a typical feature of wealth path dependent
utility maximisation problems.

This program yields an infinite horizon budget constraint satisfied by
the wealth path, similar to that in Bouchard and Pham \cite{bp04},
over our infinite horizon:
$\mathbb{E}\left[\int_{0}^{\infty}X_{t}Y_{t}\ud\kappa_{t}\right]\leq
1$ for all admissible wealth processes with unit initial capital and
all deflators with unit initial value. The budget constraint so formed
acts (at this point) as a necessary condition for admissible wealth
processes and serves to define the appropriate dual variables $Y$. The
form of the dual problem then emerges as
\begin{equation*}
\mathbb{E}\left[\int_{0}^{\infty}V(Y_{t})\ud\kappa_{t}\right] \to \min!  
\end{equation*}
over deflators with initial value $Y_{0}=y>0$, where
$V:\mathbb{R}_{+}\to\mathbb{R}$ is the convex conjugate of the utility
function.

The particular structure in \eqref{eq:RYintro} of the inter-temporal
wealth deflators, involving the supermartingale deflators and the
auxiliary dual control $\beta$, is crucial, as it allows us to show
that the dual space which emerges is convex. We then enlarge the
primal domain to encompass processes dominated by admissible wealths
(similar in spirit to the procedure used by Kramkov and Schachermayer
\cite{ks99,ks03} for the terminal wealth utility maximisation
problem), and show that the budget constraint is also a sufficient
condition for admissible primal processes, using the Stricker and Yan
\cite{sy98} version of the Optional Decomposition Theorem. Finally, we
enlarge the dual domain in a similar manner, to encompass processes
dominated by the deflators, show that the resulting dual domain is
closed in an appropriate topology (that of convergence in measure
$\mu:=\kappa\times\mathbb{P}$) by exploiting Fatou convergence of
supermartingales, and obtain perfect bipolarity relations between the
enlarged primal and dual domains. This bipolarity underlies the
subsequent duality results.

We thus prove (as opposed to impose, by definition) that our dual
domain has the required convexity and closedness properties needed to
establish bipolarity and hence duality, with a supermartingale
constraint involving the admissible wealths as a starting point. Put
another way, the procedure developed by Kramkov and Schachermayer
\cite{ks99,ks03} for the terminal wealth problem is adapted and made
to work for an inter-temporal wealth problem under NUPBR and over the
infinite (or indeed, finite) horizon.

The main duality result (Theorem \ref{thm:itwd}) shows that all the
tenets of the theory hold in our scenario: the marginal utility of
optimal wealth is equal to the optimal deflator with initial value
equal to the derivative of the primal value function, and the primal
and dual value functions are mutually conjugate. Moreover, at the
optimum, the supermartingale $M$ becomes a uniformly integrable
martingale $\widehat{M}$, leading to an interesting additional
representation of the optimal wealth process:
\begin{equation}
\widehat{X}_{t}\widehat{R}_{t} = \mathbb{E}\left[\left.\int_{t}^{\infty}
\widehat{X}_{s}\widehat{Y}_{s}\ud\kappa_{s}\right\vert
\mathcal{F}_{t}\right], \quad t\geq 0,
\label{eq:owpintro}
\end{equation}
where $\widehat{X},\widehat{R},\widehat{Y}$ are the optimal
manifestations of the processes $X,R,Y$. The supermartingale $XR$
becomes a potential (satisfying
$\lim_{t\to\infty}\mathbb{E}[\widehat{X}_{t}\widehat{R}_{t}]=0$) at
the optimum, and also converges almost surely to
$\widehat{X}_{\infty}\widehat{R}_{\infty}=0$.

The philosophy of our approach can thus be summarised as: the use of a
natural supermartingale property to derive an inter-termporal budget
constraint as a means of identifying the dual space, then suitably
enlarging that space and using supermartingale convergence techniques
to get the bipolarity relations. A similar philosophy was applied
recently by Monoyios \cite{mmc20} to the infinite horizon optimal
consumption problem under NUPBR.  There, using an appropriate class of
consumption deflators (differing from the wealth-path deflators used
here), deflated wealth plus cumulative deflated consumption at the
optimum becomes a uniformly integrable martingale, while deflated
wealth becomes a potential converging to zero. The final results in
\cite{mmc20} thus have a similar flavour to those here (as they must,
since classical tenets of duality theory are shoiwn to hold in both
cases), but the problem studied in \cite{mmc20} is quite distinct from
the one here, involving a different primal variable (consumption, as
opposed to the wealth path), and a different set of concomitant
deflators.  It turns out that the primal domain in \cite{mmc20} is
$L^{1}(\mu)$-bounded, while here it is the dual space that has this
property. We thus do not use any results from \cite{mmc20} here, as we
need to formulate a separate, complete proof.

Inter-temporal utility maximisation problems have a long history,
usually treating the problem of maximising utility of consumption, as
opposed to the less standard objective here, which involves aggregate
expected utility from inter-temporal wealth, in the absence of a
consumption stream. The consumption literature begins with Merton's
\cite{merton69} dynamic programming solution of the problem in a
constant coefficient complete Brownian model, extended to cover issues
such as non-negativity constraints on consumption, and bankruptcy, by
Karatzas et al \cite{kal86} using similar methods. The infinite
horizon problem for utility from consumption in a complete It\^o
market was treated via duality methods by Huang and Pag\`es
\cite{hp92}, while Foldes \cite{foldes90,foldes92} characterised
optimal consumption plans in semimartingale markets with well-defined
``shadow prices'' (local martingales that characterise marginal
utility of consumption processes). These correspond to some class of
deflators in the modern mathematical finance terminology. The set-up
of \cite{foldes90,foldes92} is very much rooted in traditional
economic phraseology, so certain contemporary notions such as the
underlying no-arbitrage condition, and the completeness or otherwise
of the market, are not entirely transparent. Karatzas and
\v{Z}itkovi\'{c} \cite{kz03} and \v{Z}itkovi\'{c} \cite{z05} treated
problems of optimal consumption with an additional random endowment,
in incomplete semimartingale markets on a finite horizon and under the
classical no-arbitrage assumption of the existence of ELMMs,
equivalent to NFLVR. The consumption problem in an infinite horizon
semimartingale incomplete market under NFLVR was treated by Mostovyi
\cite{most15}, and later by Chau et al \cite{chauetal17} under NUPBR,
with the recent treatment in \cite{mmc20} establishing the duality
without recourse to any arguments involving ELMMs.

In contrast to the consumption problems analysed by these papers, the
problem studied here, of maximising utility from inter-temporal
wealth, has received much less attention. As indicated earlier, a dual
theory for such problems, over a finite horizon and under NFLVR, was
developed by Bouchard and Pham \cite{bp04}. Aside from the duality
developed in \cite{bp04}, wealth-path-dependent utility maximisation
problems have arisen in models which consider investment and
consumption with a random horizon, such as Blanchet-Scalliet et al
\cite{bsekmm08} (in complete Brownian markets with deterministic
parameters), or Vellekoop and Davis \cite{dv09} (who consider a
Merton-type problem of optimal consumption in a Black-Scholes model,
but with randomly terminating income). Federico et al \cite{fgg15}
analyse wealth-path-dependent problems from the dynamic programming
and Hamilton-Jacobi-Bellman (HJB) equation viewpoint, using viscosity
solution methods to establish regularity of the value functions in
Markovian market scenarios driven by Brownian motions. As well as
arising from a random termination date, the problem studied here can
also be viewed in its own right as describing an objective for a
long-lived investment fund, looking to build, as oppposed to consume,
wealth.

The rest of the paper is structured as follows. In Section
\ref{sec:fmpf} we describe the financial market, introduce various
classes of deflators and the primal problem, list some examples which
fit into our set-up, then derive the budget constraint and formulate
the dual problem. In Section \ref{sec:tmd} we give the main duality
theorem (Theorem \ref{thm:itwd}), and describe how the result may be
re-cast in the case when $\kappa$ is absolutely continuous with
respect to Lebesgue measure (Remark \ref{rem:sc}). In Section
\ref{sec:abpd} we formulate the primal and dual problems in abstract
notation on a finite measure space with product measure
$\mu:=\kappa\times\mathbb{P}$. We re-cast the optimisation problems
over suitably enlarged primal and dual domains, and present the 
bipolarity relations between these spaces (Proposition \ref{prop:abp})
as well as the abstract version of the duality theorem (Theorem
\ref{thm:adt}). In Section \ref{sec:bpr} we prove Proposition
\ref{prop:abp}. In many respects this is the heart of the paper. We
use the Stricker and Yan \cite{sy98} optional decomposition results to
show that the budget constraint is also a sufficient condition for
primal admissibility, then show that the dual domain we have
constructed is convex and closed, and make comparisons with the
approach of Bouchard and Pham \cite{bp04}. In Section \ref{sec:pdt} we
prove the abstract duality theorem in the classical manner of Kramkov
and Schachermayer \cite{ks99,ks03}, from which the concrete duality
theorem follows, and also prove the novel representation
\eqref{eq:owpintro} of the optimal wealth process (Proposition
\ref{prop:owp}). In Section \ref{sec:examples} we work out two
examples with power and logarithmic utility: a model whose market
price of risk is a three-dimensional Bessel process (so satisfying
NUPBR but not NFLVR) with stochastic volatility and correlation, and a
Black-Scholes market.

\section{Financial market and problem formulation}
\label{sec:fmpf}

\subsection{The financial market}
\label{subsec:fm}

We have an infinite horizon financial market on a complete stochastic
basis
$(\Omega,\mathcal{F},\mathbb{F}:=(\mathcal{F}_{t})_{t\in[0,\infty)},
\mathbb{P})$, with the filtration $\mathbb{F}$ satisfying the usual
hypotheses of right-continuity and augmentation with $\mathbb{P}$-null
sets of $\mathcal{F}$. The market contains $d$ stocks and a cash
asset, the latter with strictly positive price process. We shall use
the cash asset as num\'eraire, so without loss of generality (as we
shall affirm in Remark \ref{rem:du}) its price is normalised to unity
and we work with discounted quantities throughout. The (discounted)
price processes of the stocks are given by a non-negative c\`adl\`ag
vector semimartingale $P=(P^{1},\ldots,P^{d})$.

The $\sigma$-algebra $\mathcal{F}$ can contain more information than
that generated by the asset prices, so can include, for example, a
random time at which investment ceases, as this is one scenario where
inter-temporal wealth utility maximisation can arise. Bouchard and
Pham \cite{bp04} had a similar feature in a finite horizon version of
our utility maximisation problem under NFLVR. Note that our formalism
and results can be transferred with no alteration to the finite
horizon setting, as we re-iterate in Remark \ref{rem:fh}.

A financial agent can trade a self-financing portfolio of the stocks
and cash. The agent has initial capital $x>0$, with the trading
strategy represented by a $d$-dimensional predictable $P$-integrable
process $H=(H^{1},\ldots,H^{d})$, with $H^{i},\,i=1,\ldots,d$ the
process for the number of shares of the $i^{\mathrm{th}}$ stock in the
portfolio. The agent's wealth process $X$ is given by
\begin{equation*}
X_{t} := x + (H\cdot P)_{t}, \quad t\geq 0, \quad x>0,
\end{equation*}
where $(H\cdot P):=\int_{0}^{\cdot}H_{s}\ud P_{s}$ denotes the
stochastic integral. Let $\mathcal{X}(x)$ denote the
set of non-negative wealth processes with initial wealth $x>0$:
\begin{equation*}
\mathcal{X}(x) := \{X: X=x+(H\cdot P)\geq 0,\, \mbox{almost surely}\},
\quad x>0.
\end{equation*}
We write $\mathcal{X}\equiv\mathcal{X}(1)$ and we have
$\mathcal{X}(x)=x\mathcal{X}=\{xX:X\in\mathcal{X}\}$ for $x>0$. The
set $\mathcal{X}$ is a convex (and hence so is
$\mathcal{X}(x),\,x>0$).

For $y>0$, let $\mathcal{S}(y)$ denote the set of
\textit{supermartingale deflators} (SMDs), positive c\`adl\`ag
processes $S$ with $S_{0}=y$ such that the deflated wealth $SX$ is a
supermartingale for all $X\in\mathcal{X}$:
\begin{equation}
\mathcal{S}(y) := \left\{S>0,\,\mbox{c\`adl\`ag},\, S_{0}=y:
\mbox{$SX$ is a supermartingale for all
$X\in\mathcal{X}$}\right\}.
\label{eq:smd}
\end{equation} 
We write $\mathcal{S}\equiv\mathcal{S}(1)$, and we have
$\mathcal{S}(y)=y\mathcal{S}$ for $y>0$. The set $\mathcal{S}$ is
clearly convex. Since the constant process $X\equiv 1$ lies in
$\mathcal{X}$, each $S\in\mathcal{S}$ is a supermartingale. The
supermartingale deflators are the processes used as dual variables by
Kramkov and Schachermayer \cite{ks99,ks03} in their treatment of the
terminal wealth utility maximisation problem. The dual domain for the
forthcoming inter-temporal wealth problem will be based on
$\mathcal{S}(y)$ but will not coincide with this space, as we shall
see shortly.

Let $\mathcal{Z}$ denote the set of \textit{local martingale
deflators} (LMDs), positive c\`adl\`ag local martingales $Z$ with
unit initial value such that deflated wealth $XZ$ is a local
martingale for all $X\in\mathcal{X}$:
\begin{equation}
\mathcal{Z} := \left\{Z>0,\,\mbox{c\`adl\`ag},\, Z_{0}=1:
\mbox{$XZ$ is a local martingale for all
$X\in\mathcal{X}$}\right\}. 
\label{eq:mcZ}
\end{equation}
Since the local martingale $XZ\geq 0$ for all $X\in\mathcal{X}$, it is
also a supermartingale and, since $X\equiv 1$ lies in $\mathcal{X}$,
each $Z\in\mathcal{Z}$ is also a supermartingale, and we have the
inclusion
\begin{equation}
\mathcal{S} \supseteq \mathcal{Z}.
\label{eq:inclusion}
\end{equation}
The set $\mathcal{Z}$ is convex, and contains the density processes of
equivalent local martingale measures (ELMMs) in situations where such
measures would exist. A feature of our approach is that we shall not
be using any constructions involving ELMMs, even restricted to a
finite horizon, as we discuss further below in Section
\ref{subsubsec:completion}.

The standing no-arbitrage assumption we shall make is that the set of
supermartingale deflators is non-empty:
\begin{equation}
\mathcal{S}(y)\neq \emptyset.  
\label{eq:noarb}
\end{equation}
The condition \eqref{eq:noarb} is equivalent to the condition of no
unbounded profit with bounded risk (NUPBR) (also referred to as no
arbitrage of the first kind, $\mathrm{NA}_{1}$), weaker than the no
free lunch with vanishing risk (NFLVR) condition, the latter being
equivalent to the existence of equivalent local martingale measures
(ELMMs), as established by Delbaen and Schachermayer \cite{ds94} for
the case of a locally bounded semimartingale stock price
process. There are various characterisations of NUPBR, including that
the set $\mathcal{Z}$ of LMDs is non-empty: see Karatzas and Kardaras
\cite{kk07}, Kardaras \cite{k12}, Takaoka and Schweizer \cite{ts14}
and Chau et al \cite{chauetal17}, as well as the recent overview by
Kabanov, Kardaras and Song \cite{kks16}.

\subsubsection{Completion of the stochastic basis and equivalent
measures}
\label{subsubsec:completion}

As indicated earlier, we shall not use equivalent local martingale
measures (ELMMs), even restricted to a finite horizon. This is partly
for aesthetic reasons: since we work under NUPBR and assume only the
existence of various classes of deflators, which is the minimal
requirement for well posed utility maximisation problems, it is
natural to seek proofs which use only deflators.

There is also a mathematical rationale for avoiding ELMMs.  We are
working on an infinite horizon and have have assumed the usual
conditions. Thus, each element of the filtration
$\mathbb{F}=(\mathcal{F}_{t})_{t\geq 0}$ includes all the
$\mathbb{P}$-null sets of
$\mathcal{F}:=\sigma(\bigcup_{t\geq
  0}\mathcal{F}_{t})=:\mathcal{F}_{\infty}$, the tail
$\sigma$-algebra. So, ultimate events (as time $t\uparrow\infty$) of
$\mathbb{P}$-measure zero are included in any finite time
$\sigma$-field $\mathcal{F}_{T},\,T<\infty$.

It is well-known that in such a scenario many financial models will
not admit an equivalent martingale measure over the infinite horizon,
because the candidate change of measure density is not a uniformly
integrable martingale. (This is true of the Black-Scholes model, see
Karatzas and Shreve \cite[Section 1.7]{ks98}.) One then has to proceed
with caution when invoking arguments which utilise equivalent
measures, by finding a consistent way to eliminate the tail
$\sigma$-algebra from the picture when restricting to a finite horizon
$T<\infty$.

One possible way forward is to not complete the space. This route was
taken by Huang and Pag\`es \cite{hp92} in an infinite horizon
consumption model in a complete Brownian market. This is sound, though
care is needed to ensure that no results are used which require the
usual hypotheses to hold.

Another way to proceed, if one wishes to consider equivalent measures
restricted to a finite horizon $T<\infty$, is to augment the space
with null events of a $\sigma$-field generated over a finite horizon
at least as big as $T$, that is by
$\sigma\left(\bigcup_{0\leq t\leq T^{\prime}}\mathcal{F}_{t}\right)$, for some
$0\leq T\leq T^{\prime}<\infty$. This can be done in a consistent way,
and relies on an application of Carath\'eodory's extension theorem
(Rogers and Williams \cite[Theorem II.5.1]{rwvol1}). One can then
obtain equivalent measures in an infinite horizon model when
restricting such measures to any finite horizon. This procedure is
carried out in a Brownian filtration in Karatzas and Shreve
\cite[Section 1.7]{ks98}, with a cautionary example \cite[Example
1.7.6]{ks98}, showing that augmenting the $\sigma$-field generated by
Brownian motion over any finite horizon with null sets of the
corresponding tail $\sigma$-algebra would render invalid the
construction of equivalent measures, even over a finite horizon.

The message is that one has to be careful in using any constructions
involving equivalent measures, even restricted to a finite horizon,
when working in an infinite horizon financial model.

We avoid any such pitfalls, since we avoid ELMMs entirely. In
particular, in Section \ref{sec:bpr} we establish bipolarity results
between the primal and dual domains using only the Stricker and Yan
\cite{sy98} version of the optional decomposition theorem, relying on
deflators rather than martingale measures.

We mention this issue because many papers appear to use a complete
stochastic basis on an infinite horizon, and at the same time then use
equivalent measures over a finite or infinite horizon, without any
statement about the elimination of the tail $\sigma$-field. This
applies to some proofs in papers tackling the infinite horizon
consumption problem (see Mostovyi \cite[Lemma 4.2]{most15} and Chau et
al \cite[Lemma 1]{chauetal17}). In a similar vein, some celebrated
papers working on an infinite horizon, such as the seminal connection
between ELMMs and NFLVR of Delbaen and Schachermayer \cite{ds94}, and
the optional decomposition result of Kramkov \cite{k96}, invoke ELMMs
over an infinite horizon, without seeming to address the issue that
these will not exist over a perpetual timeframe in even the simplest
Brownian model such as the Black-Scholes model, and that care must
sometimes be taken to eliminate the tail $\sigma$-algebra if invoking
ELMMs (even restricted to a finite horizon) in an infinite horizon
model.

We would suggest that it was taken as implicit in the papers cited
above that, when necessary, the tail $\sigma$-algebra was eliminated
in a consistent way when invoking arguments involving ELMMs. But it
should be said that no such qualifying statements were made. We
conjecture that all the arguments in these and other papers where such
potential inconsistencies may arise can be rendered sound by
amendments as described above. This is an issue for possible future
investigation, though fortunately not one we need to address, as we
bypass all these problems by arguments which avoid the use of ELMMs
entirely.

\subsection{The primal problem}
\label{subsec:primal}

Let $U:[0,\infty)\to\mathbb{R}$ denote the agent's utility function,
assumed to be strictly increasing, strictly concave, continuously
differentiable and satisfying the Inada conditions
\begin{equation}
U^{\prime}(0) := \lim_{x\downarrow 0}U^{\prime}(x) = +\infty, \quad
U^{\prime}(\infty) := \lim_{x\to\infty}U^{\prime}(x) = 0. 
\label{eq:inada}
\end{equation}

Let $\kappa:[0,\infty)\to\mathbb{R}_{+}$ be a non-negative,
non-decreasing c\`adl\`ag adapted process, which will act as a finite
measure that will discount utility from inter-temporal
wealth. We assume that $\kappa$ satisfies
\begin{equation*}
\kappa_{0}=0, \quad \mathbb{P}[\kappa_{\infty}>0]>0, \quad
\kappa_{\infty}\leq K,
\end{equation*}
for some finite constant $K$, so that 
$\mathbb{E}\left[\int_{0}^{\infty}\ud\kappa_{t}\right]$ is bounded.

The agent's primal problem is to maximise expected utility from
inter-temporal wealth over the infinite horizon. The primal value
function $u(\cdot)$ is defined by
\begin{equation}
u(x) := \sup_{X\in\mathcal{X}(x)}\mathbb{E}\left[
\int_{0}^{\infty}U(X_{t})\ud\kappa_{t}\right], \quad x>0. 
\label{eq:primal}
\end{equation}
To exclude a trivial problem, we shall assume throughout that the
primal value function satisfies
\begin{equation*}
u(x) > -\infty, \quad \forall\, x>0.
\end{equation*}
This is a mild condition, which can be guaranteed by assuming that for
all wealth processes $X\in\mathcal{X}(x)$ we have
$\mathbb{E}\left[\int_{0}^{\infty}
\min(0,U(X_{t}))\ud\kappa_{t}\right]>-\infty$.

\begin{remark}[Discounted units]
\label{rem:du}

There is no loss of generality in working with discounted quantities
(so in effect a zero interest rate). To see this, suppose instead that
we have a positive interest rate process $r=(r_{t})_{t\geq 0}$, so the
cash asset with initial value $1$ has positive price process
$A_{t}=\e^{\int_{0}^{t}r_{s}\ud s},\,t\geq 0$. If $\widetilde{X}$ is
the un-discounted wealth process, then the problem in
\eqref{eq:primal} is
$\mathbb{E}\left[
  \int_{0}^{\infty}U\left(\widetilde{X}_{t}/A_{t}\right)\ud\kappa_{t}\right]
\to \max!$ We can define another utility function
$\widetilde{U}:\mathbb{R}^{2}_{+}\to\mathbb{R}$ such that
$\widetilde{U}(A_{t},\widetilde{X}_{t})=U(\widetilde{X}_{t}/A_{t}),\,t\geq
0$, and the problem in \eqref{eq:primal} can then be transported to
one in terms of the raw (un-discounted) wealth process. For example,
if $U(\cdot)=\log(\cdot)$ is logarithmic utility, we choose
$\widetilde{U}(A,\widetilde{X})=\log(\widetilde{X})-\log(A)$. If
$U(x)=x^{p}/p,\,p<1,p\neq 0$ is power utility, then we choose
$\widetilde{U}(A,\widetilde{X})=A^{-p}\widetilde{X}^{p}/p$.

\end{remark}

\begin{remark}[Stochastic utility]
\label{rem:su}

In the problem \eqref{eq:primal} we can allow $U(\cdot)$ to be
stochastic, so to also depend on $\omega\in\Omega$ in an optional
way. The analysis is unaffected, as the reader can easily verify, so
one can read the proofs with a stochastic utility in mind and with
dependence on $\omega\in\Omega$ suppressed throughout.
  
\end{remark}

\subsubsection{Some examples}
\label{subsubsec:sexamples}

We list here some examples of inter-temporal wealth utility
maximisation problems, illustrating how the measure $\kappa$ manifests
itself in various cases. Further examples can be found in Bouchard and
Pham \cite[Section 2]{bp04}.

\begin{example}[Perpetual wealth-path-dependent utility maximisation]
\label{examp:pwpdum}

Take $\ud\kappa_{t}=\e^{-\alpha t}\ud t$ for some positive discount rate
$\alpha>0$, and an infinite horizon, so the objective is
\begin{equation}
\mathbb{E}\left[\int_{0}^{\infty}\exp\left(-\alpha t\right)U(X_{t})\ud
  t\right] \to \max !
\label{eq:centralproblem}
\end{equation}
This is the quintessential example we have in mind as our central
problem, and can be thought of as an objective of a long-lived
investment fund building up wealth. We shall treat this example under
power and logarithmic utility in Section \ref{sec:examples}, to
illustrate the application of the duality theorem of the paper, with
different market environments: an incomplete market with a stock with
whose market price of risk is a three-dimensional Bessel process (so
will satisfy NUPBR but not NFLVR) and which has a stochastic
volatility, and a Black-Scholes (thus, complete) market.

There are no esoteric ingredients in \eqref{eq:centralproblem} such as
a random termination time which generates the wealth-path-dependent
objective, but such modifications can be added. Indeed, suppose we
have a random horizon given by $T\sim\Exp(\lambda)$, an exponentially
distributed time with parameter $\lambda>0$, independent of the stock
price filtration. The objective can be re-cast with an additional
integral over the probability density function of $T$, so we have
\begin{equation*}
\mathbb{E}\left[\int_{0}^{T}\exp\left(-\alpha t\right)U(X_{t})\ud
  t\right] = \mathbb{E}\left[\int_{0}^{\infty}\lambda\exp(-\lambda
  s)\int_{0}^{s}\exp(-\alpha t)U(X_{t})\ud t\ud s\right].
\end{equation*}
Integration by parts allows the objective to be re-written as
\begin{equation*}
\mathbb{E}\left[\int_{0}^{\infty}\exp\left(-(\alpha + \lambda)
    t\right)U(X_{t})\ud t\right] \to \max !  
\end{equation*}
so we recover a problem of the same type as in
\eqref{eq:centralproblem} with a modified discount factor.

\end{example}

\begin{example}[Utility of terminal wealth at a random horizon]
\label{examp:utwrxsh}

The other classical example which yields an inter-temporal wealth
objective is where we maximise expected utility of terminal wealth
$\mathbb{E}[U(X_{T})]$ at some random horizon $T$, an almost surely
finite $\mathcal{F}$-measurable non-negative random variable.

For instance, let $T\sim\Exp(\lambda)$ be an exponentially distributed
random time with parameter $\lambda>0$, independent of the stock price
filtration. As in Example \ref{examp:pwpdum}, we re-write the
objective with an integral over the probability density function of
$T$, so we have
\begin{equation*}
\mathbb{E}\left[U(X_{T})\right] =
\mathbb{E}\left[\int_{0}^{\infty}\lambda\exp(-\lambda t)U(X_{t})\ud
t\right],   
\end{equation*}
which again yields a problem of the type in Example
\ref{examp:pwpdum}. We observe that $\kappa$ is given by
\begin{equation*}
\kappa_{t} = \int_{0}^{t}\lambda\exp(-\lambda s)\ud s =
1 - \exp(-\lambda t) = \mathbb{P}[T\leq t], \quad t\geq 0.  
\end{equation*}
The obvious generalisation is to a general random time $T$ which
is independent of the asset price filtration. In this case one has
$\kappa_{t}=\mathbb{P}[T\leq t],\,t\geq 0$ in the inter-temporal
wealth problem \eqref{eq:primal}.

In the case where $T$ is a stopping time we have
$\kappa_{t}=\mathbbm{1}_{\{t\geq T\}},\,t\geq 0$, and this includes
the case where $T$ is deterministic, so there is no time horizon
uncertainty, and we revert to the classical terminal wealth problem.

\end{example}

Further similar examples are given in Bouchard and Pham \cite[Examples
1--3]{bp04}, adapted to the case of a finite horizon for the overall
problem in \eqref{eq:primal}.

\subsection{The budget constraint}
\label{subsec:bc}

Our approach to establishing the form of the dual to the primal
utility maximisation problem \eqref{eq:primal} is to determine an
appropriate supermartingale constraint satisfied by the pair
$(X_{t},(X_{s})_{0\leq s\leq t})_{t\geq 0}$, that is, the value of an
admissible wealth process at any time as well as the wealth path up to
that point. This gives an infinite horizon budget constraint on the
wealth path. Using a supermartingale constraint in this way is
analogous to the procedure followed in consumption problems, where one
considers the wealth process at any time as well the consumption plan
up to that time.

Let $\mathcal{B}$ denote the set of all non-negative c\`adl\`ag
adapted processes $\beta$ satisfying
$\int_{0}^{t}\beta_{s}\ud\kappa_{s}<\infty$ almost surely for all $t\geq
0$:
\begin{equation}
\mathcal{B} := \left\{\beta\geq 0: \mbox{c\`adl\`ag,
adapted, such that $\int_{0}^{\cdot}\beta_{s}\ud\kappa_{s}<\infty$
almost surely}\right\}.   
\label{eq:mcB}
\end{equation}
The processes in $\mathcal{B}$ will act as an additional dual control,
above and beyond that implied in the classical supermartingale
deflators, as we shall see in due course.

For any $\beta\in\mathcal{B}$ and for any supermartingale deflator
$S\in\mathcal{S}(y)$, define a process $R$ by
\begin{equation}
R_{t} := \exp\left(-\int_{0}^{t}\beta_{s}\ud\kappa_{s}\right)S_{t}, \quad
t\geq 0, \quad \beta\in\mathcal{B},\,S\in\mathcal{S}(y), \quad y>0.  
\label{eq:R}
\end{equation}
Denote the set of such processes with initial value $y>0$ by
$\mathcal{R}(y)$:
\begin{equation}
\mathcal{R}(y) :=\left\{R: \mbox{$R$ is defined by
\eqref{eq:R}}\right\}, \quad y>0.  
\label{eq:mcR}
\end{equation}
We write $\mathcal{R}\equiv\mathcal{R}(1)$ and we have
$\mathcal{R}(y)=y\mathcal{R}$ for $y>0$.  We shall prove in Section
\ref{subsec:cdd} that the set $\mathcal{R}$ is convex (see Lemma
\ref{lem:cry}), which will lead to the corresponding property for the
dual domain to the primal problem \eqref{eq:primal}, to be defined
shortly.

Since $\beta\in\mathcal{B} $ is almost surely non-negative, the
supermartingale property of the deflated wealth $SX$ in \eqref{eq:smd}
also holds for $RX$, for any $R\in\mathcal{R}(y)$, so we have the
inclusion $\mathcal{S}(y) \supseteq \mathcal{R}(y),\,y>0$, and each
$R\in\mathcal{R}(y)$ is also a supermartingale.

For each $R\in\mathcal{R}(y)$ and for the same $\beta\in\mathcal{B}$
appearing in the definition \eqref{eq:R}, define a process $Y$ by
\begin{equation}
Y_{t} := \beta_{t}R_{t} =
\beta_{t}\exp\left(-\int_{0}^{t}\beta_{s}\ud\kappa_{s}\right)S_{t},
\quad t\geq 0, \quad \beta\in\mathcal{B},\,S\in\mathcal{S}(y), \quad
y>0.
\label{eq:Y}
\end{equation}
Denote the set of such processes by $\mathcal{Y}(y)$:
\begin{equation}
\mathcal{Y}(y) :=\left\{Y: \mbox{$Y$ is defined by
\eqref{eq:Y}}\right\}, \quad y>0.    
\label{eq:mcY}
\end{equation}
The set $\mathcal{Y}(y)$ will form the domain of the dual problem to
the inter-temporal wealth problem \eqref{eq:primal}, as we shall see
shortly. We shall refer to processes $Y\in\mathcal{Y}(y)$ as
\textit{wealth-path deflators} or inter-temporal wealth deflators or,
simply, as deflators, when no confusion can arise. We write
$\mathcal{Y}\equiv\mathcal{Y}(1)$, with $\mathcal{Y}(y)=y\mathcal{Y}$
for $y>0$. The set $\mathcal{Y}$ turns out to be convex, as we shall
show in Section \ref{subsec:cdd}. This is an important ingredient in
our approach to establishing certain bipolarity relations between the
primal and dual domains, which underlie the duality results of the
paper. As we shall see, the convexity of $\mathcal{Y}$ will stem from
the particular structure of the dual variables as given in
\eqref{eq:Y}. This structure seems to have eluded some previous
studies of inter-temporal wealth utility maximisation problems. We
shall say more on this structure and make comparisons with the
approach of Bouchard and Pham \cite{bp04} in Section
\ref{subsec:oaeb}, after we prove the bipolarity relations.

The following lemma gives the supermartingale constraint and the
resultant infinite horizon budget constraint on admissible wealth
processes, which will lead to the form of the dual problem.

\begin{lemma}[Supermartingale and budget constraints]
\label{lem:sbc}

Let $\beta\in\mathcal{B}$ be any non-negative c\`adl\`ag adapted
process satisfying $\int_{0}^{t}\beta_{s}\ud\kappa_{s}<\infty$ almost
surely for all $t\geq 0$. Define the processes $R\in\mathcal{R}(y)$
and the wealth-path deflators $Y\in\mathcal{Y}(y)$ by \eqref{eq:R} and
\eqref{eq:Y}, respectively. We then have that
\begin{equation}
M := RX + \int_{0}^{\cdot}X_{s}Y_{s}\ud\kappa_{s} \quad \mbox{is a 
supermartingale}.
\label{eq:fsc}
\end{equation}
As a consequence, we have the \emph{infinite horizon budget
  constraint}
\begin{equation}
\mathbb{E}\left[\int_{0}^{\infty}X_{t}Y_{t}\ud\kappa_{t}\right] \leq
xy, \quad x,y>0, \quad \forall\,X\in\mathcal{X}(x),\,Y\in\mathcal{Y}(y).  
\label{eq:ihbc}
\end{equation}

\end{lemma}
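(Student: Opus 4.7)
The plan is to write $M$ as the sum of a constant, a local supermartingale and a non-increasing process, and then invoke the standard upgrade of a non-negative local supermartingale to a genuine supermartingale.

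First, set $N := SX$, a non-negative supermartingale by definition of $S \in \mathcal{S}(y)$, and let $A_{t} := \int_{0}^{t}\beta_{s}\ud\kappa_{s}$, $V_{t} := \e^{-A_{t}}$, so that $R = VS$, $Y = \beta VS$, and the integrand in \eqref{eq:fsc} satisfies $X_{s}Y_{s}\ud\kappa_{s} = V_{s}N_{s}\ud A_{s}$. The process $V$ is c\`adl\`ag, of finite variation, with values in $(0,1]$, while $V_{-}$ is non-negative, bounded and predictable.

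Second, I apply semimartingale integration by parts to $VN$:
\[
V_{t}N_{t} = V_{0}N_{0} + \int_{0}^{t} V_{s-}\ud N_{s} + \int_{0}^{t} N_{s-}\ud V_{s} + [V,N]_{t}.
\]
Since $V$ has finite variation, $[V,N]_{t} = \sum_{s\leq t}\Delta V_{s}\,\Delta N_{s}$, and $\ud V_{s}$ splits into a continuous part $-V_{s-}\ud A_{s}^{c}$ and jumps $\Delta V_{s} = V_{s-}(\e^{-\Delta A_{s}}-1)$. Writing $\int_{0}^{t} X_{s}Y_{s}\ud\kappa_{s} = \int_{0}^{t} V_{s}N_{s}\ud A_{s}$ as the sum of its continuous and atomic pieces, the continuous contributions cancel against those of $\int N_{s-}\ud V_{s}$, and all atomic terms collapse into
\[
J_{t} := \sum_{s\leq t} V_{s-} N_{s} \bigl[\e^{-\Delta A_{s}}(1+\Delta A_{s})-1\bigr].
\]
The elementary inequality $\e^{-x}(1+x)\leq 1$ for $x\geq 0$ (whose left-hand side has derivative $-x\e^{-x}\leq 0$ and equals $1$ at $x=0$) makes every summand non-positive, so $J$ is non-positive and non-increasing.

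Third, this produces the decomposition $M_{t} = xy + \int_{0}^{t} V_{s-}\ud N_{s} + J_{t}$. Because $V_{-}$ is non-negative, bounded and predictable and $N$ is a supermartingale, the Mertens/Doob--Meyer decomposition of $N$ shows that $\int V_{-}\ud N$ is a local supermartingale; adding the non-increasing $J$ preserves this property. Hence $M$ is a non-negative local supermartingale with $M_{0}=xy$, and the classical Fatou-lemma argument upgrades it to a true supermartingale, establishing \eqref{eq:fsc}. The budget constraint \eqref{eq:ihbc} is then immediate: from $\mathbb{E}[M_{t}]\leq xy$, discarding the non-negative term $\mathbb{E}[R_{t}X_{t}]$ yields $\mathbb{E}\bigl[\int_{0}^{t} X_{s}Y_{s}\ud\kappa_{s}\bigr]\leq xy$, and monotone convergence as $t\to\infty$ delivers \eqref{eq:ihbc}. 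The main obstacle is the handling of the jumps of $\kappa$: they produce a non-vanishing bracket $[V,N]$ together with a mismatch between $\Delta V_{s}$ and $-V_{s-}\Delta A_{s}$, and it is precisely the convexity inequality $\e^{-x}(1+x)\leq 1$ that reconciles these two contributions at jump times.
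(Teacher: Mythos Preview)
Your proof is correct and follows essentially the same strategy as the paper: express $M$ via integration by parts in terms of an integral against $d(XS)$, then use the Doob--Meyer decomposition of the supermartingale $XS$ together with the non-negativity of $M$ to upgrade from a local supermartingale to a true one; the budget constraint then follows exactly as you (and the paper) derive it.

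The one noteworthy difference is in the treatment of jumps of $\kappa$. The paper writes the clean identity
\[
M_{t} = xy + \int_{0}^{t}\exp\Bigl(-\int_{0}^{s}\beta_{u}\ud\kappa_{u}\Bigr)\ud(X_{s}S_{s}),
\]
which is exact when $\kappa$ is continuous but glosses over the jump contributions when it is not: the It\^o product rule for $(XS)V$ with $V=\exp(-A)$ should involve $V_{-}$, the finite-variation bracket $[XS,V]$, and the mismatch between $\ud V$ and $-V_{-}\ud A$. You handle this rigorously by isolating the residual jump process
\[
J_{t}=\sum_{s\leq t}V_{s-}N_{s}\bigl[\e^{-\Delta A_{s}}(1+\Delta A_{s})-1\bigr],
\]
and observing via $\e^{-x}(1+x)\leq 1$ that it is non-increasing, so the supermartingale conclusion survives. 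This is a genuine refinement: your argument is valid for general c\`adl\`ag $\kappa$, whereas the paper's displayed identity is literally correct only in the continuous case (though the conclusion still holds). The cost is a slightly longer computation; the benefit is full generality and a transparent mechanism showing exactly how jump times are absorbed.
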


\begin{proof}

For $x,y>0$ let $X\in\mathcal{X}(x)$ be an admissible wealth
process and let $S\in\mathcal{S}(y)$ be any supermartingale
deflator. The It\^o product rule applied to
$XR=XS\exp\left(-\int_{0}^{\cdot}\beta_{s}\ud\kappa_{s}\right)$ gives
\begin{equation}
M_{t}:= X_{t}R_{t} +
\int_{0}^{t}X_{s}Y_{s}\ud\kappa_{s} = xy +
\int_{0}^{t}\exp\left(-\int_{0}^{s}\beta_{u}\ud\kappa_{u}\right)
\ud(X_{s}S_{s}), \quad t\geq 0,
\label{eq:xr}
\end{equation}
where we have used the definition \eqref{eq:Y} of the inter-temporal
wealth deflators. Since $XS$ is a supermartingale, it has a Doob-Meyer
decomposition $XS=xy+L-A$ for some local martingale $L$ and a
non-decreasing process $A$, with $L_{0}=A_{0}=0$. Using this
Doob-Meyer decomposition, the integral on the right-hand-side of
\eqref{eq:xr} is also seen to be a supermartingale, so we obtain the
supermartingale property of
$M:=XR + \int_{0}^{\cdot}X_{s}Y_{s}\ud\kappa_{s}$ as stated in the
lemma.

The supermartingale property gives 
\begin{equation*}
\mathbb{E}\left[X_{t}R_{t}
+ \int_{0}^{t}X_{s}Y_{s}\ud\kappa_{s}\right]\leq xy, \quad t\geq 0.
\end{equation*}
Since $XR$ is non-negative, we thus also have
\begin{equation*}
\mathbb{E}\left[\int_{0}^{t}X_{s}Y_{s}\ud\kappa_{s}\right]\leq xy,
\quad t\geq 0. 
\end{equation*}
Letting $t\uparrow\infty$ and using monotone convergence we obtain the
infinite horizon budget constraint \eqref{eq:ihbc}.

\end{proof}

Note that for $\beta\equiv 0$ the supermartingale property in
\eqref{eq:fsc} is simply the statement that $XS$ is a supermartingale
for all $X\in\mathcal{X}$ and $S\in\mathcal{S}$. This is the basic
sense in which we are extending the starting point of the methodology
of Kramkov and Schachermayer \cite{ks99,ks03} towards duality: begin
with a supermartingale constraint to build a budget constraint. The
presence of the supermartingales $S\in\mathcal{S},R\in\mathcal{R}$ in
these arguments will ultimately be exploited to invoke supermartingale
convergence results involving Fatou convergence of processes, in
proving that an abstract dual domain $\mathcal{D}$ (an enlargement of
the domain $\mathcal{Y}$ to encompass processes dominated by some
$Y\in\mathcal{Y}$) is closed (see Lemmata \ref{lem:Dclosed} and
\ref{lem:fatouconvex}).

\subsection{The dual problem}
\label{subsec:dual}

Let $V:\mathbb{R}_{+}\to\mathbb{R}$ denote the convex conjugate of the
utility function, defined by
\begin{equation*}
V(y) := \sup_{x>0}[U(x)-xy], \quad y>0.  
\end{equation*}
The map $y\mapsto V(y),\,y>0$, is strictly convex, strictly decreasing,
continuously differentiable on $\mathbb{R}_{+}$, $-V(\cdot)$ satisfies
the Inada conditions, and we have the bi-dual relation
\begin{equation*}
U(x) := \inf_{y>0}[V(y)+xy], \quad x>0,
\end{equation*}
as well as
$V^{\prime}(\cdot)=-I(\cdot)=-(U^{\prime})^{-1}(\cdot)$,
where $I(\cdot)$ denotes the inverse of marginal utility. In
particular, we have the inequality
\begin{equation}
V(y) \geq U(x)-xy, \quad \forall\, x,y>0, \quad \mbox{with equality
iff $U^{\prime}(x)=y$}.  
\label{eq:VUbound}
\end{equation}

From the budget constraint \eqref{eq:ihbc} we can motivate the form of
the dual problem to \eqref{eq:primal} by bounding the achievable
utility in the familiar way.  For any $X\in\mathcal{X}(x)$ and
$Y\in\mathcal{Y}(y)$ we have
\begin{eqnarray}
\mathbb{E}\left[\int_{0}^{\infty}U(X_{t})\ud\kappa_{t}\right] & \leq & 
\mathbb{E}\left[\int_{0}^{\infty}U(X_{t})\ud\kappa_{t}\right] +
xy - \mathbb{E}\left[\int_{0}^{\infty}X_{t}Y_{t}\ud\kappa_{t}\right]
\nonumber \\
& = & \mathbb{E}\left[\int_{0}^{\infty}\left(U(X_{t}) -
X_{t}Y_{t}\right)\ud\kappa_{t}\right] + xy \nonumber \\ 
& \leq & \mathbb{E}\left[\int_{0}^{\infty}
V\left(Y_{t}\right)\ud\kappa_{t}\right] + xy, \quad x,y>0,  
\label{eq:vubound1}
\end{eqnarray}
the last inequality a consequence of \eqref{eq:VUbound}. This
motivates the definition of the dual problem associated with the
primal problem \eqref{eq:primal}, with dual value function
$v:\mathbb{R}_{+}\to\mathbb{R}$ defined by
\begin{equation}
v(y) := \inf_{Y\in\mathcal{Y}(y)}\mathbb{E}\left[
\int_{0}^{\infty}V(Y_{t})\ud\kappa_{t}\right], \quad y>0.
\label{eq:dual}
\end{equation}
We shall assume that the dual problem is finitely valued:
\begin{equation}
v(y) < \infty, \quad \mbox{for all $y>0$}.  
\label{eq:vfinite}
\end{equation}

\begin{remark}[Reasonable asymptotic elasticity]
\label{rem:aeu}
  
As is known from Kramkov and Schachermayer \cite{ks03},
\eqref{eq:vfinite} is a mild condition that will guarantee a
well-posed primal problem. It is also well known that one can
alternatively impose the reasonable asymptotic elasticity condition of
Kramkov and Schachermayer \cite{ks99} on the utility function:
\begin{equation}
\mathrm{AE}(U) :=
\limsup_{x\to\infty}\frac{xU^{\prime}(x)}{U(x)} < 1,
\label{eq:AEU}
\end{equation}
along with the assumption that $u(x)<\infty$ for some $x>0$. Then, as
in Kramkov and Schachermayer \cite[Note 2]{ks03} or Bouchard and Pham
\cite[Remark 5.1]{bp04}, these conditions can be shown to yield
\eqref{eq:vfinite}.

\end{remark}

\section{The main duality}
\label{sec:tmd}

Here is the central duality statement of the paper.

\begin{theorem}[Perpetual inter-temporal wealth duality under NUPBR]
\label{thm:itwd}

Define the primal inter-temporal wealth utility maximisation problem
by \eqref{eq:primal} and the corresponding dual problem by
\eqref{eq:dual}. Assume \eqref{eq:noarb}, \eqref{eq:inada} and that
\begin{equation*}
u(x)>-\infty,\,\forall \,x>0, \quad v(y)<\infty,\,\forall \,y>0.   
\end{equation*}
Then:

\begin{itemize}

\item[(i)] $u(\cdot)$ and $v(\cdot)$ are conjugate:
\begin{equation*}
v(y) = \sup_{x>0}[u(x)-xy], \quad u(x) =
\inf_{y>0}[v(y)+xy], \quad x,y>0.
\end{equation*}

\item[(ii)] The primal and dual optimisers
$\widehat{X}(x)\in\mathcal{X}(x)$ and
$\widehat{Y}(y)\in\mathcal{Y}(y)$ exist and are unique, so that
\begin{equation*}
u(x) =
\mathbb{E}\left[\int_{0}^{\infty}U(\widehat{X}_{t}(x))\ud\kappa_{t}\right],
\quad v(y) = \mathbb{E}\left[\int_{0}^{\infty}
V(\widehat{Y}_{t}(y))\ud\kappa_{t}\right], \quad x,y>0,
\end{equation*}
with
$\widehat{Y}(y)=\widehat{\beta}\widehat{R}(y)=\widehat{\beta}
\exp\left(-\int_{0}^{\cdot}\widehat{\beta}_{s}\ud\kappa_{s}\right)
\widehat{S}(y)$, for an optimal $\widehat{\beta}\in\mathcal{B}$ and
optimal supermartingales $\widehat{R}(y)\in\mathcal{R}(y)$ and
$\widehat{S}(y)\in\mathcal{S}(y)$.

\item[(iii)] With $y=u^{\prime}(x)$ (equivalently,
$x=-v^{\prime}(y)$), the primal and dual optimisers are related
by
\begin{equation*}
U^{\prime}(\widehat{X}_{t}(x)) = \widehat{Y}_{t}(y), \quad
\mbox{equivalently}, \quad \widehat{X}_{t}(x) =
-V^{\prime}(\widehat{Y}_{t}(y)), \quad t\geq 0,
\end{equation*}
and satisfy
\begin{equation}
\mathbb{E}\left[\int_{0}^{\infty}
\widehat{X}_{t}(x)\widehat{Y}_{t}(y)\ud\kappa_{t}\right] = xy.  
\label{eq:oihbc}
\end{equation}
Moreover, the associated optimal wealth process $\widehat{X}(x)$
satisfies
\begin{equation}
\widehat{X}_{t}(x)\widehat{R}_{t}(y)  = \mathbb{E}\left[\left.
\int_{t}^{\infty}\widehat{X}_{s}(x)\widehat{Y}_{s}(y)\ud\kappa_{s}\right
\vert\mathcal{F}_{t}\right], \quad t\geq 0, 
\label{eq:owp}
\end{equation}
and the process
$\widehat{M}:=\widehat{X}(x)\widehat{R}(y) +
\int_{0}^{\cdot}\widehat{X}_{s}(x)\widehat{Y}_{s}(y)\ud\kappa_{s}$ is
a uniformly integrable martingale.

\item[(iv)] The functions $u(\cdot)$ and $-v(\cdot)$ are strictly
increasing, strictly concave, satisfy the Inada conditions, and for
all $x,y>0$ their derivatives satisfy
\begin{equation*}
xu^{\prime}(x) = \mathbb{E}\left[\int_{0}^{\infty}
U^{\prime}(\widehat{X}_{t}(x))\widehat{X}_{t}(x)\ud\kappa_{t}\right],
\quad yv^{\prime}(y) = \mathbb{E}\left[\int_{0}^{\infty}
V^{\prime}(\widehat{Y}_{t}(y))\widehat{Y}_{t}(y)\ud\kappa_{t}\right].
\end{equation*}

\end{itemize}

\end{theorem}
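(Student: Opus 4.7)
The plan is to recast both problems on the finite product measure space $(\Omega\times[0,\infty),\mathcal{O},\mu)$ with $\mu:=\kappa\times\mathbb{P}$, and then apply an adaptation of the Kramkov--Schachermayer \cite{ks99,ks03} abstract convex duality machinery. First I would introduce the solid, enlarged primal and dual domains $\mathcal{C}(x)$ and $\mathcal{D}(y)$, consisting of non-negative optional processes $g,h$ with $g\leq X$, respectively $h\leq Y$, $\mu$-a.e., for some $X\in\mathcal{X}(x)$ and $Y\in\mathcal{Y}(y)$. The primal and dual optimisation problems can be equivalently posed over $\mathcal{C}(x)$ and $\mathcal{D}(y)$ with the same value functions, because $U$ is non-decreasing and $V$ non-increasing. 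The key structural fact I would then establish is the pair of bipolar relations
\begin{equation*}
g\in\mathcal{C}(1) \iff \int gh\,\mathrm{d}\mu\leq 1\ \forall\,h\in\mathcal{D}(1),\qquad
h\in\mathcal{D}(1) \iff \int gh\,\mathrm{d}\mu\leq 1\ \forall\,g\in\mathcal{C}(1).
\end{equation*}

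The necessity direction in each relation is the budget constraint \eqref{eq:ihbc} from Lemma \ref{lem:sbc}. The sufficiency direction for the primal is the main obstacle and, I expect, the heart of the paper: given $g\in L^{0}_{+}(\mu)$ satisfying $\int gY\,\mathrm{d}\mu\leq 1$ for every $Y\in\mathcal{Y}(1)$, I must produce $X\in\mathcal{X}(1)$ with $g\leq X$ $\mu$-a.e. I would exploit the freedom to vary $\beta\in\mathcal{B}$ in \eqref{eq:Y}: testing against $Y=\beta R$ for all $\beta$ translates the integral constraint into a supermartingale-type constraint for $SX^{g}$, where $X^{g}$ is a candidate defined via suitable conditional expectations. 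The Stricker--Yan \cite{sy98} version of the Optional Decomposition Theorem, which uses only supermartingale deflators and not ELMMs, then upgrades this constraint into the existence of the dominating admissible wealth process. For the dual sufficiency direction and the closedness of $\mathcal{D}(1)$ in the topology of $\mu$-convergence, I would combine convexity of $\mathcal{Y}$ (which the paper identifies as a consequence of the exponential--integral structure in \eqref{eq:Y}) with Fatou convergence of sequences of supermartingales $R^{(n)}$ along a countable dense set of times, thereby extracting a limiting triple $(S,\beta,R)$ and a $Y\in\mathcal{Y}(1)$ dominating the $\mu$-limit of the sequence.

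Once the bipolarity is in place, parts (i), (ii), and (iv) are obtained by running the standard abstract duality argument: convexity, solidity, bipolarity, the Inada conditions on $U$, and the finiteness assumption $v(y)<\infty$ yield mutual conjugacy of $u$ and $v$, existence and uniqueness of $(\widehat{g},\widehat{h})\in\mathcal{C}(x)\times\mathcal{D}(y)$ when $y=u'(x)$, and the derivative formulae. A separate argument, again via Stricker--Yan, upgrades the abstract optimisers to concrete ones: the primal optimiser is actually attained by an $\widehat{X}(x)\in\mathcal{X}(x)$, and the dual optimiser by a $\widehat{Y}(y)\in\mathcal{Y}(y)$ with the structure $\widehat{Y}=\widehat{\beta}\widehat{R}$. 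For (iii), the Fenchel--Young inequality \eqref{eq:VUbound} combined with the conjugacy $u(x)+v(y)=xy$ forces equality $\mu$-almost everywhere in the chain \eqref{eq:vubound1}, which gives both $U'(\widehat{X}_{t})=\widehat{Y}_{t}$ and the equality \eqref{eq:oihbc} in the budget constraint.

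Finally, to obtain the novel representation \eqref{eq:owp} of the optimal wealth process, I would return to the supermartingale $\widehat{M}=\widehat{X}\widehat{R}+\int_{0}^{\cdot}\widehat{X}_{s}\widehat{Y}_{s}\,\mathrm{d}\kappa_{s}$ from Lemma \ref{lem:sbc}. Its initial value is $xy$, and by monotone convergence the equality \eqref{eq:oihbc} gives $\mathbb{E}[\widehat{M}_{\infty}]=\mathbb{E}[\int_{0}^{\infty}\widehat{X}_{s}\widehat{Y}_{s}\,\mathrm{d}\kappa_{s}]=xy=\mathbb{E}[\widehat{M}_{0}]$. A non-negative supermartingale with constant expectation on $[0,\infty]$ is a uniformly integrable martingale, so $\widehat{M}$ is u.i.\ martingale with $\widehat{M}_{\infty}=\int_{0}^{\infty}\widehat{X}_{s}\widehat{Y}_{s}\,\mathrm{d}\kappa_{s}$, and taking $\mathcal{F}_{t}$-conditional expectations yields \eqref{eq:owp}. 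The potential property $\lim_{t\to\infty}\mathbb{E}[\widehat{X}_{t}\widehat{R}_{t}]=0$ and the a.s.\ limit $\widehat{X}_{\infty}\widehat{R}_{\infty}=0$ then follow directly from \eqref{eq:owp} by dominated convergence on the right-hand side.
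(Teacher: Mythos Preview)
Your proposal is correct and follows essentially the same route as the paper: the abstract reformulation on $(\mathbf{\Omega},\mathcal{G},\mu)$, bipolarity via the Stricker--Yan optional decomposition on the primal side and convexity of $\mathcal{Y}$ plus Fatou convergence of supermartingales on the dual side, then the Kramkov--Schachermayer abstract duality machinery, and finally the martingale argument for \eqref{eq:owp} (the paper's Proposition~\ref{prop:owp}). One small point in your final paragraph: the identification $\widehat{M}_{\infty}=\int_{0}^{\infty}\widehat{X}_{s}\widehat{Y}_{s}\,\mathrm{d}\kappa_{s}$ presupposes $\widehat{X}_{\infty}\widehat{R}_{\infty}=0$, which you then derive from \eqref{eq:owp}, so as written the logic is circular; the fix (which is exactly what the paper does) is the sandwich $xy=\mathbb{E}\bigl[\int_{0}^{\infty}\widehat{X}_{s}\widehat{Y}_{s}\,\mathrm{d}\kappa_{s}\bigr]\leq\mathbb{E}[\widehat{M}_{\infty}]\leq xy$, the last inequality by Fatou for the non-negative supermartingale $\widehat{M}$, which then forces $\mathbb{E}[\widehat{X}_{\infty}\widehat{R}_{\infty}]=0$ and hence $\widehat{X}_{\infty}\widehat{R}_{\infty}=0$ a.s.\ before you invoke the conditional expectation.
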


\begin{remark}[The finite horizon case]
\label{rem:fh}

As the analysis in the sequel will show, it is easy to verify that all
our methodology works without alteration for the finite horizon
version of \eqref{eq:primal}, with some terminal time $T<\infty$. The
budget constraint is altered to have an upper limit of $T$ as are all
the results of Theorem \ref{thm:itwd}. We thus extend the problem
studied in Bouchard and Pham \cite{bp04} to the NUPBR scenario, in
addition to the strengthening of the basic duality statement as
described below, where we do not have to assume \textit{a priori} that
the dual domain is the polar of the primal domain.

\end{remark}

The proof of Theorem \ref{thm:itwd} will be given in Section
\ref{sec:pdt}, and will rely on bipolarity results and an abstract
version of the duality theorem in Section \ref{sec:abpd}, with the
bipolarity results proven in Section \ref{sec:bpr}. Duality results
akin to items (i)--(iii) of the theorem (but not the additional novel
characterisation \eqref{eq:owp} of the optimal wealth process) were
obtained by Bouchard and Pham \cite{bp04} over a finite horizon and
under NFLVR. Compared to \cite{bp04}, Theorem \ref{thm:itwd} makes a
stronger statement in other ways. We describe this strengthening
briefly here, and will give further details in Section
\ref{subsec:oaeb} after we prove bipolarity relations between the
primal and dual domains, as some of the features are directly
concerned with such polarity results.

First, we strengthen the duality for inter-temporal wealth utility
maximisation to the weaker no-arbitrage assumption of NUPBR, compared
to the NFLVR assumption in \cite{bp04}. Second, we avoid having to
define the dual domain as the polar of the primal domain. As indicated
in the Introduction, the dual domain in \cite{bp04} was directly
defined as the set of deflators for which a finite horizon version of
the budget constraint holds. In the language of the polar of a set
(defined in Section \ref{sec:abpd}, see Definition \ref{def:sscs}) the
dual space is set equal to the polar of the primal space, by
definition. This automatically renders the dual domain convex and
closed, but the statement of the duality result is then somewhat
weaker, because one half of the perfect bipolarity between the primal
and dual domains (as given in Proposition \ref{prop:abp}) has been
achieved by definition.

In our approach, the dual space arises from the budget constraint
\eqref{eq:ihbc}, itself derived from the supermartingale property
\eqref{eq:fsc} of the process $M$. This renders the budget constraint
a necessary condition for admissibility. On enlarging the primal
domain to include processes dominated by some admissible wealth, we
show in Lemma \ref{lem:suffX} that the budget constraint is also a
sufficient condition for admissibility. This uses the Stricker and Yan
\cite{sy98} version of the optional decomposition theorem, avoiding
martingale measures in favour of local martingale deflators. This
equivalence between primal admissibility and the budget constraint
establishes that the enlarged primal set $\mathcal{C}$ is the polar of
the dual space $\mathcal{Y}$.

We then show that our dual space is convex, relying on the particular
structure of the wealth path deflators in \eqref{eq:Y}. An enlargement
of the dual domain (in a similar vein to the primal enlargement),
combined with supermartingale convergence results which exploit Fatou
convergence of processes, culminates in Lemma \ref{lem:Dclosed}, which
shows that the enlarged dual domain $\mathcal{D}$ is closed (in an
appropriate topology). This, along with convexity and solidity, yields
that the enlarged dual domain $\mathcal{D}$ is the bipolar of the
original domain $\mathcal{Y}$. Thus gives us the perfect bipolarity we
need between $\mathcal{C}$ and $\mathcal{D}$.

The above procedure is in essence the Kramkov and Schachermayer
\cite{ks99,ks03} program for bipolarity and duality, adapted to an
inter-temporal wealth framework. We shall describe these features of
the bipolarity derivations in more detail in Section
\ref{subsec:oaeb}, and compare the program to that of Bouchard and
Pham \cite{bp04}, after we have proven the bipolarity relations.

\begin{remark}[The case where $\kappa\ll\Leb$]
\label{rem:sc}

Theorem \ref{thm:itwd} holds true regardless of whether 
the measure $\kappa$ admits a density with respect to Lebesgue
measure. However, when $\kappa\ll\Leb$ there is a natural change of
variable which one would use in computations, as we shall see in the
course of some examples in Section \ref{sec:examples}, so we highlight
here how the Theorem \ref{thm:itwd} is slightly re-cast in that
case. The scenario to keep in mind is the case where
$\ud\kappa_{t}=\e^{-\alpha t}\ud t$ for a positive impatience rate
$\alpha>0$.

In the definition \eqref{eq:mcB} of the set $\mathcal{B}$, one
replaces $\kappa$ by Lebesgue measure. With an abuse of notation, to
use the same symbol for this set of auxiliary dual controls,
$\mathcal{B}$ now denotes the set of non-negative c\`adl\`ag processes
$\beta$ such that $\int_{0}^{\cdot}\beta_{s}\ud s<\infty$ almost
surely. With similar abuse of notation, the set $\mathcal{R}(y)$ is
composed of processes
$R:=\exp\left(-\int_{0}^{\cdot}\beta_{s}\ud s\right)S$, for
supermartingale deflators $S\in\mathcal{S}(y)$. The wealth-path
deflators are then given by $Y:=\beta R$, and once again we denote the
set of such processes by $\mathcal{Y}(y)$. The supermartingale
property \eqref{eq:fsc} converts to the statement that the process
$M:=XR+\int_{0}^{\cdot}X_{s}Y_{s}\ud s$ is a supermartingale. The
budget constraint \eqref{eq:ihbc} becomes
$\mathbb{E}\left[\int_{0}^{\infty}X_{t}Y_{t}\ud t\right]\leq xy$.

With this notation, define the positive process
$\gamma=(\gamma_{t})_{t\geq 0}$ as the reciprocal of
$(\ud\kappa_{t}/\ud t)_{t\geq 0}$:
\begin{equation*}
\gamma_{t} := \left(\frac{\ud\kappa_{t}}{\ud t}\right)^{-1}, \quad
t\geq 0.
\end{equation*}
The dual problem then takes the form
\begin{equation}
v(y) := \inf_{Y\in\mathcal{Y}(y)}\mathbb{E}\left[
\int_{0}^{\infty}V(\gamma_{t}Y_{t})\ud\kappa_{t}\right],  \quad y>0,
\label{eq:dualprime}
\end{equation}
as can be confirmed by repeating the computation that led to
\eqref{eq:vubound1} in this altered set-up.

With these changes, items (ii)--(iv) of Theorem \ref{thm:itwd} are
altered to:

\begin{itemize}

\item[(ii)$^{\prime}$] The primal and dual optimisers
$\widehat{X}(x)\in\mathcal{X}(x)$ and
$\widehat{Y}(y)\in\mathcal{Y}(y)$ exist and are unique, so that
\begin{equation*}
u(x) =
\mathbb{E}\left[\int_{0}^{\infty}U(\widehat{X}_{t}(x))\ud\kappa_{t}\right],
\quad v(y) = \mathbb{E}\left[\int_{0}^{\infty}
V(\gamma_{t}\widehat{Y}_{t}(y))\ud\kappa_{t}\right], \quad x,y>0,
\end{equation*}
with
$\widehat{Y}(y)=\widehat{\beta}\widehat{R}(y)=\widehat{\beta}
\exp\left(-\int_{0}^{\cdot}\widehat{\beta}_{s}\ud s\right)
\widehat{S}(y)$, for an optimal $\widehat{\beta}\in\mathcal{B}$ and
optimal supermartingales $\widehat{R}(y)\in\mathcal{R}(y)$ and
$\widehat{S}(y)\in\mathcal{S}(y)$.

\item[(iii)$^{\prime}$] With $y=u^{\prime}(x)$ (equivalently,
$x=-v^{\prime}(y)$), the primal and dual optimisers are related
by
\begin{equation}
U^{\prime}(\widehat{X}_{t}(x)) = \gamma_{t}\widehat{Y}_{t}(y), \quad
\mbox{equivalently}, \quad \widehat{X}_{t}(x) =
-V^{\prime}(\gamma_{t}\widehat{Y}_{t}(y)), \quad t\geq 0,
\label{eq:pditwprime}
\end{equation}
and satisfy
\begin{equation}
\mathbb{E}\left[\int_{0}^{\infty}
\widehat{X}_{t}(x)\widehat{Y}_{t}(y)\ud t\right] = xy.  
\label{eq:oihbcprime}
\end{equation}
Moreover, the associated optimal wealth process $\widehat{X}(x)$
satisfies
\begin{equation}
\widehat{X}_{t}(x)\widehat{R}_{t}(y)  = \mathbb{E}\left[\left.
\int_{t}^{\infty}\widehat{X}_{s}(x)\widehat{Y}_{s}(y)\ud s\right
\vert\mathcal{F}_{t}\right], \quad t\geq 0, 
\label{eq:owpprime}
\end{equation}
and the process
$\widehat{M}:=\widehat{X}(x)\widehat{R}(y) +
\int_{0}^{\cdot}\widehat{X}_{s}(x)\widehat{Y}_{s}(y)\ud s$ is a
uniformly integrable martingale.

\item[(iv)$^{\prime}$] The functions $u(\cdot)$ and
  $-v(\cdot)$ are strictly increasing, strictly concave, satisfy the
  Inada conditions, and for all $x,y>0$ their derivatives satisfy
\begin{equation*}
xu^{\prime}(x) = \mathbb{E}\left[\int_{0}^{\infty}
U^{\prime}(\widehat{X}_{t}(x))\widehat{X}_{t}(x)\ud\kappa_{t}\right],
\quad yv^{\prime}(y) = \mathbb{E}\left[\int_{0}^{\infty}
V^{\prime}(\gamma_{t}\widehat{Y}_{t}(y))\widehat{Y}_{t}(y)\ud t\right].
\end{equation*}

\end{itemize}

\end{remark}

\section{Abstract bipolarity and duality}
\label{sec:abpd}

In this section we specify a finite measure space which allows us to
write the primal and dual problems in abstract notation, over suitably
enlarged primal and dual domains. We then state the bipolarity
relations between the abstract primal and dual domains in Proposition
\ref{prop:abp}, which forms the basis for the subsequent abstract
duality of Theorem \ref{thm:adt}.

Set
\begin{equation*}
\mathbf{\Omega}:=[0,\infty)\times\Omega.  
\end{equation*}
Let $\mathcal{G}$ denote the optional $\sigma$-algebra on
$\mathbf{\Omega}$, that is, the sub-$\sigma$-algebra of
$\mathcal{B}([0,\infty))\otimes\mathcal{F}$ generated by evanescent
sets and stochastic intervals of the form
$\llbracket T,\infty\llbracket$ for arbitrary stopping times $T$.
Define the measure
\begin{equation}
\mu:=\kappa\times\mathbb{P}  
\label{eq:mu}
\end{equation}
on $(\mathbf{\Omega},\mathcal{G})$. On the resulting finite measure
space $(\mathbf{\Omega},\mathcal{G},\mu)$, denote by $L^{0}_{+}(\mu)$
the space of non-negative $\mu$-measurable functions, corresponding to
non-negative infinite horizon processes.

The primal and dual domains for our optimisation problems
\eqref{eq:primal} and \eqref{eq:dual} are now considered as subsets of
$L^{0}_{+}(\mu)$. The abstract primal and dual domains will be
enlargements of $\mathcal{X}(x)$ and $\mathcal{Y}(y)$ to accommodate
processes dominated by some element of the original domain in
question.

The abstract primal domain is $\mathcal{C}(x)$, defined by
\begin{equation}
\mathcal{C}(x) := \{g\in L^{0}_{+}(\mu): \mbox{$g\leq X,\,\mu$-a.e., for
some $X\in\mathcal{X}(x)$}\}, \quad x>0.  
\label{eq:Cx}
\end{equation}
We write $\mathcal{C}\equiv\mathcal{C}(1)$, with
$\mathcal{C}(x)=x\mathcal{C}$ for $x>0$, and the set $\mathcal{C}$ is
convex. Since $U(\cdot)$ is increasing, the primal value function of
\eqref{eq:primal} is now written in the abstract notation as an
optimisation over $g\in\mathcal{C}(x)$:
\begin{equation}
u(x) := \sup_{g\in\mathcal{C}(x)}\int_{\mathbf{\Omega}}U(g)\ud\mu,
\quad x>0.
\label{eq:vfabs}
\end{equation}

The abstract dual domain is obtained by a similar enlargement of the
original dual domain. Define the set $\mathcal{D}(y)$ by
\begin{equation}
\mathcal{D}(y) := \{h\in L^{0}_{+}(\mu): \mbox{$h\leq Y,\,\mu$-a.e.,
for some $Y\in\mathcal{Y}(y)$}\}, \quad y>0. 
\label{eq:Dy}
\end{equation}
We write $\mathcal{D}\equiv\mathcal{D}(1)$, we have
$\mathcal{D}(y)=y\mathcal{D}$ for $y>0$, and the set $\mathcal{D}$ is
convex, inheriting this property from $\mathcal{Y}$. This is a crucial
feature, and relies on our demonstration of the convexity of
$\mathcal{Y}$ in Section \ref{subsec:cdd} (see Lemma \ref{lem:cry}),
which in turn relies on the inter-temporal wealth deflators having the
particular structure in \eqref{eq:Y}.

With this notation, and since $V(\cdot)$ is
decreasing, the dual problem \eqref{eq:dual} takes the form
\begin{equation}
v(y) := \inf_{h\in\mathcal{D}(y)}\int_{\mathbf{\Omega}}V(h)\ud\mu,
\quad y>0. 
\label{eq:dvfabs}
\end{equation}

\subsection{Abstract bipolarity}
\label{subsec:abp}

The abstract duality theorem relies on the abstract bipolarity result
in Proposition \ref{prop:abp} below which connects the sets
$\mathcal{C}$ and $\mathcal{D}$. The result is of course in the spirit
of Kramkov and Schachermayer \cite[Proposition 3.1]{ks99}.

We shall sometimes employ the notation
\begin{equation}
\langle g,h\rangle := \int_{\mathbf{\Omega}}gh\ud\mu, \quad
g,h\in L^{0}_{+} (\mu).
\label{eq:notation}
\end{equation}

Let us recall the concepts of set solidity and the polar of a set.

\begin{definition}[Solid set, closed set]
\label{def:sscs}
  
A subset $A\subseteq L^{0}_{+}(\mu)$ is called \emph{solid} if $f\in A$
and $0\leq g\leq f,\,\mu$-a.e. implies that $g\in A$. 

A set is \emph{closed in $\mu$-measure}, or simply {\em closed}, if it
is closed with respect to the topology of convergence in measure
$\mu$.

\end{definition}

\begin{definition}[Polar of a set]
  
The {\em polar}, $A^{\circ}$, of a set $A\subseteq L^{0}_{+}(\mu)$,
is defined by
\begin{equation*}
A^{\circ} := \left\{h\in L^{0}_{+}(\mu): \langle g,h\rangle\leq
1,\,\mbox{for each $g\in A$}\right\}. 
\end{equation*}

\end{definition}

For clarity and for later use, we state here the bipolar theorem of
Brannath and Schachermayer \cite[Theorem 1.3]{bs99}, originally proven
in a probability space, and adapted here to the measure space
$(\mathbf{\Omega},\mathcal{G},\mu)$.

\begin{theorem}[Bipolar theorem, Brannath and Schachermayer
\cite{bs99}, Theorem 1.3]
\label{thm:bp}

On the finite measure space $(\mathbf{\Omega},\mathcal{G},\mu)$:

\begin{itemize}

\item[(i)] For a set $A\subseteq L^{0}_{+}(\mu)$, its polar
  $A^{\circ}$ is a closed, convex, solid subset of $L^{0}_{+}(\mu)$.

\item[(ii)] The bipolar $A^{\circ\circ}$, defined by
\begin{equation*}
A^{\circ\circ} := \left\{g\in L^{0}_{+}(\mu): \langle g,h\rangle\leq
1,\,\mbox{for each $h\in A^{\circ}$}\right\},
\end{equation*}
is the smallest closed, convex, solid set in $L^{0}_{+}(\mu)$
containing $A$.
  
\end{itemize}

\end{theorem}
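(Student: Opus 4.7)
The plan is to handle part (i) directly from the definition, and to reduce part (ii) to a separation argument that ultimately invokes Hahn-Banach in $L^{1}(\mu)$ through a truncation argument, since $L^{0}(\mu)$ itself is not locally convex.

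For part (i), I would check the three defining properties of $A^{\circ}$ in turn. Convexity and solidity of $A^{\circ}$ are immediate from linearity of the pairing $\langle g,h\rangle$ and the monotonicity $h'\leq h\Rightarrow \langle g,h'\rangle\leq\langle g,h\rangle$. For closedness in $\mu$-measure, take a sequence $(h_{n})\subseteq A^{\circ}$ converging in measure to some $h\in L^{0}_{+}(\mu)$; passing to a subsequence, we may assume pointwise $\mu$-a.e.\ convergence, and Fatou's lemma gives $\langle g,h\rangle\leq\liminf_{n\to\infty}\langle g,h_{n}\rangle\leq 1$ for each $g\in A$, so $h\in A^{\circ}$.

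For part (ii), the easy half is that $A^{\circ\circ}$ is itself closed, convex and solid (by (i) applied to the set $A^{\circ}$) and evidently contains $A$. The content is therefore minimality: any closed, convex, solid superset $B\supseteq A$ should satisfy $A^{\circ\circ}\subseteq B$. Since $A\subseteq B$ implies $B^{\circ}\subseteq A^{\circ}$ and then $A^{\circ\circ}\subseteq B^{\circ\circ}$, it suffices to show that any closed, convex, solid set $B$ satisfies $B=B^{\circ\circ}$. The inclusion $B\subseteq B^{\circ\circ}$ is immediate, so the whole proof collapses to the converse: given $g_{0}\in L^{0}_{+}(\mu)\setminus B$, I would construct $h\in B^{\circ}$ with $\langle g_{0},h\rangle>1$, contradicting $g_{0}\in B^{\circ\circ}$. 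The construction is by truncation. Using $\mu(\mathbf{\Omega})<\infty$, the monotone approximation $g_{0}\wedge n\uparrow g_{0}$, and closedness of $B$ in measure, there exists $N$ with $g_{0}\wedge N\notin B$. The set $B^{N}:=\{g\in B: g\leq N\}$ is convex, solid, and closed in $L^{1}(\mu)$ (since on the bounded slab $\{g\leq N\}$ convergence in measure and in $L^{1}$ coincide by dominated convergence), and Hahn-Banach in $L^{1}(\mu)$ produces $h\in L^{\infty}(\mu)$ separating $g_{0}\wedge N$ from $B^{N}$; solidity of $B^{N}$ lets one replace $h$ by $h^{+}\geq 0$, and rescaling yields $h\in L^{\infty}_{+}$ with $\langle g_{0}\wedge N,h\rangle>1\geq \sup_{g\in B^{N}}\langle g,h\rangle$.

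The main obstacle is propagating this bound from $B^{N}$ to all of $B$: one needs $\langle g,h\rangle\leq 1$ for every $g\in B$, not just for level-$N$ truncations, so that $h$ lies in the polar of the \emph{full} set. This requires a diagonal/exhaustion argument over increasing truncation levels $N$, combined with a compactness extraction (of Koml\'os type in $L^{1}$) to produce a limiting $h$ that is simultaneously in $B^{\circ}$ and satisfies $\langle g_{0},h\rangle>1$. This is the technical heart of Brannath and Schachermayer \cite{bs99}. Since our $(\mathbf{\Omega},\mathcal{G},\mu)$ is finite by construction, their hypotheses apply directly, and I would cite their proof verbatim to close the argument rather than rebuild it here.
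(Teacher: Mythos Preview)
The paper does not supply its own proof of this theorem: it is stated purely as a quotation of Brannath and Schachermayer \cite[Theorem~1.3]{bs99}, adapted from a probability space to the finite measure space $(\mathbf{\Omega},\mathcal{G},\mu)$, and is then used as a black box in the subsequent bipolarity arguments. So there is nothing to compare against beyond the citation itself.

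Your proposal goes further than the paper by sketching how the Brannath--Schachermayer argument actually runs, and the outline is sound: part~(i) is indeed routine from Fatou and monotonicity, and for part~(ii) you correctly reduce to showing $B=B^{\circ\circ}$ for any closed, convex, solid $B$, then separate a point $g_{0}\notin B$ via truncation and Hahn--Banach in $L^{1}$. You also correctly flag the genuine difficulty --- passing from the truncated polar $B^{N}$ to the full polar $B^{\circ}$ --- and resolve it by deferring to \cite{bs99}. Since that is precisely what the paper does (without the preceding sketch), your approach and the paper's coincide at the only step that matters: both cite the original source rather than reproduce the exhaustion argument.
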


\begin{proposition}[Abstract bipolarity]
\label{prop:abp}

Under the condition \eqref{eq:noarb}, the abstract primal and dual
sets $\mathcal{C}$ and $\mathcal{D}$ satisfy the following properties:

\begin{itemize}

\item[(i)] $\mathcal{C}$ and $\mathcal{D}$ are both closed with
respect to convergence in measure $\mu$, convex and solid;

\item[(ii)] $\mathcal{C}$ and $\mathcal{D}$ satisfy the bipolarity
relations
\begin{eqnarray}
g\in\mathcal{C} & \iff & \langle g,h\rangle \leq 1, \quad \forall\,
h\in\mathcal{D}, \quad \mbox{that is,
$\mathcal{C}=\mathcal{D}^{\circ}$}, \label{eq:bp1}\\
h\in\mathcal{D} & \iff & \langle g,h\rangle \leq 1, \quad \forall\,
g\in\mathcal{C}, \quad \mbox{that is,
$\mathcal{D}=\mathcal{C}^{\circ}$};  \label{eq:bp2}
\end{eqnarray}

\item[(iii)] $\mathcal{C}$ and $\mathcal{D}$ are bounded in
$L^{0}(\mu)$, and $\mathcal{D}$ is also bounded in $L^{1}(\mu)$.

\end{itemize}

\end{proposition}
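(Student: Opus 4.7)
The plan is to dispatch the easy structural properties and the boundedness in (i) and (iii) directly from the definitions and the budget constraint of Lemma \ref{lem:sbc}, establish the easy half of the bipolarity in (ii) from the same budget constraint, and then use closedness of $\mathcal{C}$ and $\mathcal{D}$ (the hard part, deferred to Lemmata \ref{lem:Dclosed} and \ref{lem:fatouconvex}) together with the bipolar theorem (Theorem \ref{thm:bp}) to promote that half to the full bipolarity.

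For (i) and (iii) the bulk of the work is routine. Solidity of $\mathcal{C}$ and $\mathcal{D}$ is immediate from the defining domination inequalities in \eqref{eq:Cx}, \eqref{eq:Dy}. Convexity of $\mathcal{C}$ inherits from that of $\mathcal{X}$, while convexity of $\mathcal{D}$ inherits from that of $\mathcal{Y}$, the latter supplied by the forthcoming Lemma \ref{lem:cry} and resting crucially on the multiplicative structure $Y=\beta\exp(-\int_{0}^{\cdot}\beta_{s}\ud\kappa_{s})S$ in \eqref{eq:Y}. For the boundedness in (iii), applying Lemma \ref{lem:sbc} with $X\equiv 1\in\mathcal{X}$ yields $\mathbb{E}\left[\int_{0}^{\infty}Y_{t}\ud\kappa_{t}\right]\leq y$ for every $Y\in\mathcal{Y}(y)$, giving the $L^{1}(\mu)$ bound on $\mathcal{Y}$ and hence on $\mathcal{D}$ by domination; $L^{0}(\mu)$-boundedness of $\mathcal{D}$ is then automatic. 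For $\mathcal{C}$, fix any $S\in\mathcal{S}$; the supermartingale property of $XS$ together with $\kappa_{\infty}\leq K$ delivers $\int_{\mathbf{\Omega}}XS\ud\mu\leq K$ uniformly in $X\in\mathcal{X}$, from which $L^{0}(\mu)$-boundedness of $\mathcal{C}$ follows by a standard splitting argument exploiting $S>0$ $\mu$-a.e.

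The easy half of (ii) is obtained by extending the budget constraint \eqref{eq:ihbc} via domination: $\langle g,h\rangle\leq 1$ for all $g\in\mathcal{C}$, $h\in\mathcal{D}$, so $\mathcal{C}\subseteq\mathcal{D}^{\circ}$ and $\mathcal{D}\subseteq\mathcal{C}^{\circ}$. The reverse inclusion $\mathcal{D}^{\circ}\subseteq\mathcal{C}$ is precisely the sufficiency direction of the forthcoming Lemma \ref{lem:suffX}: any $g\in L^{0}_{+}(\mu)$ with $\langle g,Y\rangle\leq 1$ for every $Y\in\mathcal{Y}$ is dominated $\mu$-a.e.\ by some $X\in\mathcal{X}$, as produced by the Stricker and Yan \cite{sy98} optional decomposition theorem in its deflator formulation (thus side-stepping ELMMs). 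Having established $\mathcal{C}=\mathcal{D}^{\circ}=\mathcal{Y}^{\circ}$ (the second equality because $\mathcal{D}$ is the solid hull of $\mathcal{Y}$), one more polar gives $\mathcal{C}^{\circ}=\mathcal{Y}^{\circ\circ}$; once closedness from (i) is in hand, $\mathcal{D}$ is closed, convex, solid and contains $\mathcal{Y}$, so the bipolar theorem identifies it with $\mathcal{Y}^{\circ\circ}$, completing $\mathcal{D}=\mathcal{C}^{\circ}$.

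The main obstacle will be the closedness of $\mathcal{C}$ and $\mathcal{D}$ in the topology of convergence in $\mu$-measure, and the sufficiency statement of Lemma \ref{lem:suffX}. Closedness of $\mathcal{C}$ follows the customary route: given $g^{n}\to g$ in $\mu$-measure with $g^{n}\leq X^{n}\in\mathcal{X}$, fix an SMD $S\in\mathcal{S}$, pass via Koml\'os-type convex combinations of $X^{n}S$ to an a.s.\ pointwise limit using Fatou closedness of non-negative supermartingales, and recover a limit $X\in\mathcal{X}$ dominating $g$. The argument for $\mathcal{D}$ is considerably more delicate because each element decomposes as $Y^{n}=\beta^{n}R^{n}$, forcing one to handle convex combinations jointly in the pair $(\beta^{n},R^{n})$, extract Fatou limits producing $R\in\mathcal{R}$ and $\beta\in\mathcal{B}$, and verify that the limiting product $Y:=\beta R\in\mathcal{Y}$ dominates $h$. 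This is exactly where the multiplicative shape \eqref{eq:Y} is indispensable, and where the approach diverges most sharply from the polar-by-definition dual domain adopted by Bouchard and Pham \cite{bp04}.
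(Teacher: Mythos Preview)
Your proposal is correct and follows essentially the same route as the paper: Lemma \ref{lem:suffX} (Stricker--Yan ODT) for $\mathcal{C}=\mathcal{Y}^{\circ}$, Lemma \ref{lem:cry} for convexity of $\mathcal{Y}$ (hence $\mathcal{D}$), Lemmata \ref{lem:Dclosed} and \ref{lem:fatouconvex} for closedness of $\mathcal{D}$, and the bipolar theorem to conclude $\mathcal{D}=\mathcal{Y}^{\circ\circ}=\mathcal{C}^{\circ}$.

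Two minor points where the paper is slightly more economical. First, for closedness of $\mathcal{C}$ you sketch a direct Koml\'os/Fatou supermartingale argument on the dominating wealths $X^{n}$; the paper instead observes that once $\mathcal{C}=\mathcal{Y}^{\circ}$ is proven, closedness (and convexity and solidity) of $\mathcal{C}$ comes for free from part (i) of Theorem \ref{thm:bp}, so no separate compactness argument is needed on the primal side (your approach is noted as an alternative in Remark \ref{rem:Cprop}). Second, for the $L^{0}(\mu)$-boundedness of $\mathcal{C}$, rather than pairing with a generic $S\in\mathcal{S}$, the paper exhibits the explicit strictly positive element $\overline{h}_{t}=\alpha\exp(-\alpha\kappa_{t})\in\mathcal{D}$ (from $\beta\equiv\alpha$, $S\equiv 1$) and uses the already-established polar relation $\langle g,\overline{h}\rangle\leq 1$ to get $L^{1}(\overline{h}\,\ud\mu)$-boundedness directly; this avoids any Fubini-type justification of your bound $\int_{\mathbf{\Omega}}XS\,\ud\mu\leq K$ when $\kappa$ is random.
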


The proof of Proposition \ref{prop:abp} will be given in Section
\ref{sec:bpr}, where we shall establish that the infinite horizon
budget constraint is also a sufficient condition for admissibility,
once the primal domain is enlarged to accommodate processes dominated
by admissible wealths. This culminates in the full bipolarity
relations once we enlarge dual domain in a similar manner. The
derivations in Section \ref{sec:bpr} are quite distinct from previous
approaches, and are the bedrock of the mathematical results. As
indicated earlier, we shall establish the bipolarity results without
any recourse whatsoever to constructions involving ELMMs, by
exploiting ramifications of the Stricker and Yan \cite{sy98} version
of the optional decomposition theorem.

\subsection{The abstract duality}
\label{subsec:ad}

Armed with the abstract bipolarity in Proposition \ref{prop:abp}, we
have the following abstract version of the convex duality relations
between the primal problem \eqref{eq:vfabs} and its dual
\eqref{eq:dvfabs}. The theorem shows that all the natural tenets of
utility maximisation theory, as established by Kramkov and
Schachermayer \cite{ks99,ks03} in the terminal wealth problem under
NFLVR, extend to the infinite horizon inter-temporal wealth problem
under NUPBR, with weak underlying assumptions on the primal and dual
domains.

\begin{theorem}[Abstract duality theorem]
\label{thm:adt}

Define the primal value function $u(\cdot)$ by \eqref{eq:vfabs} and the
dual value function by  \eqref{eq:dvfabs}. Assume that the utility function
satisfies the Inada conditions \eqref{eq:inada} and that
\begin{equation}
u(x) > -\infty,\,\forall \, x>0, \quad v(y)< \infty,\,\forall \,y>0.   
\label{eq:minimal}
\end{equation}

Then, with Proposition \ref{prop:abp} in place, we have:

\begin{itemize}

\item[(i)] $u(\cdot)$ and $v(\cdot)$ are conjugate:
\begin{equation}
v(y) = \sup_{x>0}[u(x)-xy], \quad u(x) = \inf_{y>0}[v(y)+xy], \quad
x,y>0.
\label{eq:conjugacy}
\end{equation}

\item[(ii)] The primal and dual optimisers
$\widehat{g}(x)\in\mathcal{C}(x)$ and
$\widehat{h}(y)\in\mathcal{D}(y)$ exist and are unique, so that
\begin{equation*}
u(x) = \int_{\mathbf{\Omega}}U(\widehat{g}(x))\ud\mu, \quad v(y) =
\int_{\mathbf{\Omega}}V(\widehat{h}(y))\ud\mu, \quad x,y>0.  
\end{equation*}

\item[(iii)] With $y=u^{\prime}(x)$ (equivalently,
$x=-v^{\prime}(y)$), the primal and dual optimisers are related by
\begin{equation*}
U^{\prime}(\widehat{g}(x)) = \widehat{h}(y), \quad
\mbox{equivalently}, \quad \widehat{g}(x) =
-V^{\prime}(\widehat{h}(y)), 
\end{equation*}
and satisfy
\begin{equation*}
\langle \widehat{g}(x),\widehat{h}(y)\rangle = xy.
\end{equation*}

\item[(iv)] $u(\cdot)$ and $-v(\cdot)$ are strictly increasing,
strictly concave, satisfy the Inada conditions, and their
derivatives satisfy
\begin{equation*}
xu^{\prime}(x) = \int_{\mathbf{\Omega}}
U^{\prime}(\widehat{g}(x))\widehat{g}(x)\ud\mu, \quad
yv^{\prime}(y) = \int_{\mathbf{\Omega}}
V^{\prime}(\widehat{h}(y))\widehat{h}(y)\ud\mu, \quad
x,y>0. 
\end{equation*}

\end{itemize}

\end{theorem}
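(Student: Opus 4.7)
The plan is to follow the classical convex duality program of Kramkov and Schachermayer \cite{ks99,ks03}, exploiting Proposition \ref{prop:abp} which endows $\mathcal{C}$ and $\mathcal{D}$ with all the structure (convexity, solidity, $\mu$-closedness, $L^{1}(\mu)$-boundedness of $\mathcal{D}$, and perfect bipolarity) that the Kramkov--Schachermayer machinery requires. Since the original terminal-wealth proofs in \cite{ks99,ks03} are stated on an abstract probability space but only use convex-solid-bipolar properties of the primal/dual domains, they transfer verbatim to the finite measure space $(\mathbf{\Omega},\mathcal{G},\mu)$ once Proposition \ref{prop:abp} is in hand; our task is to identify the essential steps and confirm each ingredient is available here.

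First, I would record the elementary properties of $u$ and $v$ and establish weak duality. Concavity of $u$ and convexity of $v$ follow from convexity of $U$, $V$ and of the sets $\mathcal{C}(x)$, $\mathcal{D}(y)$; monotonicity follows from solidity plus monotonicity of $U$, $V$. Combining the Fenchel--Young inequality \eqref{eq:VUbound} pointwise with the bipolarity $\langle g,h\rangle\le xy$ for $g\in\mathcal{C}(x)$, $h\in\mathcal{D}(y)$, and integrating against $\mu$, yields $u(x)\le v(y)+xy$, one half of the conjugacy \eqref{eq:conjugacy}. This also shows $v(y)\ge\sup_{x>0}[u(x)-xy]$.

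Second, I would establish existence and uniqueness of the dual minimiser $\widehat h(y)\in\mathcal{D}(y)$. This is the easier existence result because $\mathcal{D}$ is $L^{1}(\mu)$-bounded by Proposition \ref{prop:abp}(iii). For a minimising sequence $(h_{n})\subset\mathcal{D}(y)$, Koml\'os's theorem (or the Delbaen--Schachermayer variant) provides convex combinations $\widetilde h_{n}\in\conv(h_{n},h_{n+1},\ldots)$ converging $\mu$-a.e.\ to some $\widehat h(y)$; convexity together with $\mu$-closedness of $\mathcal{D}(y)$ keeps the limit inside the domain, while Fatou's lemma (applied after bounding $V$ below via $V(y)\ge U(1)-y$) and convexity of $V$ give optimality. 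Strict convexity of $V$ yields uniqueness. The analogous construction for the primal maximiser $\widehat g(x)\in\mathcal{C}(x)$ is more delicate because $\mathcal{C}$ is only $L^{0}$-bounded; here the finiteness of $v$ is converted, via the de la Vall\'ee-Poussin criterion applied to the convex conjugate, into the uniform integrability needed to pass to the limit in $\int U(g_{n})\ud\mu$ after a Koml\'os-type selection, with solidity and $\mu$-closedness of $\mathcal{C}(x)$ again placing the limit in the domain.

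Finally, I would close the conjugacy \eqref{eq:conjugacy} by a min-max argument: using strict convexity of $V$ and uniqueness of the dual minimiser to show $v$ is strictly convex and differentiable, then deducing from the bipolarity that the Legendre transform of $v$ equals $u$. Item (iii) then follows because, at the optimum, the weak-duality chain is tight, forcing pointwise equality in Fenchel--Young, i.e.\ $U'(\widehat g(x))=\widehat h(y)$ $\mu$-a.e.\ and $\langle\widehat g(x),\widehat h(y)\rangle=xy$. The derivative formulas in (iv) follow by differentiating the conjugacy relations at the optima and invoking the envelope-type identities standard in this setting. I expect the main technical obstacle to be the identification $u(x)=\inf_{y>0}[v(y)+xy]$ together with attainment: ruling out escape of mass requires carefully combining $L^{0}$-boundedness of $\mathcal{C}$ with the finiteness of $v$, exactly as in \cite[Theorem 3.1]{ks99} and \cite[Theorem 1]{ks03}; everything else is a routine transcription of that argument into the present measure-theoretic framework.
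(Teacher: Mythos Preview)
Your plan correctly identifies the Kramkov--Schachermayer program and matches the paper's overall strategy, but the uniform-integrability ingredients are swapped between the primal and dual sides, and your dual-existence step contains a genuine gap.

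For dual existence, the bound $V(y)\ge U(1)-y$ gives only $L^{1}(\mu)$-boundedness of $(V^{-}(h))_{h\in\mathcal{D}(y)}$, not uniform integrability; Fatou on $V(h_n)+h_n-U(1)\ge 0$ does not separate cleanly because you have no control on $\int h_n\,\mathrm{d}\mu$ beyond the uniform bound $y$, so lower semicontinuity of $h\mapsto\int V(h)\,\mathrm{d}\mu$ does not follow. The paper (Lemma \ref{lem:Vminush}) instead applies de la Vall\'ee--Poussin on the \emph{dual} side: with $\varphi:=(-V)^{-1}$ one has $\int\varphi(V^{-}(h))\,\mathrm{d}\mu\le\varphi(0)+\int h\,\mathrm{d}\mu\le\varphi(0)+y$ and $\varphi(x)/x\to\infty$ by Inada, yielding UI of the whole family $(V^{-}(h))_{h\in\mathcal{D}(y)}$. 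This same UI result is reused in the conjugacy proof.

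Conversely, for primal existence the paper does \emph{not} invoke de la Vall\'ee--Poussin. Uniform integrability of $(U^{+}(g_n))_{n}$ comes from the disjoint-sets contradiction of \cite[Lemma 1]{ks03} (Lemma \ref{lem:Uplusg} here): failure of UI produces $f^{n}\in\mathcal{C}(x_{0}+nx)$ with $\int U(f^{n})\,\mathrm{d}\mu\ge\alpha n$, contradicting $\limsup_{x\to\infty}u(x)/x\le 0$, the latter a consequence of weak duality and $v<\infty$.

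Finally, conjugacy is not closed via differentiability of $v$; the paper applies the minimax theorem directly to the $\sigma(L^{\infty},L^{1})$-compact balls $\mathcal{B}_{n}=\{0\le g\le n\}$ against $\mathcal{D}(y)\subset L^{1}(\mu)$, obtaining $\lim_{n}v_{n}(y)=\sup_{x>0}[u(x)-xy]$, and then uses the dual UI once more to show $\lim_{n}v_{n}(y)\ge v(y)$. Items (iii)--(iv) then follow essentially as you describe, from tightness of the Fenchel--Young chain at the optimisers.
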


The proof of Theorem \ref{thm:adt} will be given in Section
\ref{sec:pdt}, and uses as its starting point the bipolarity result in
Proposition \ref{prop:abp}. 

The duality proof itself follows some of the classical steps (with
adaptations) of Kramkov and Schachermayer \cite{ks99,ks03}. For
completeness and clarity we shall give a full, self-contained
treatment.

\section{Bipolarity relations}
\label{sec:bpr}

In this section we prove Proposition \ref{prop:abp}, which establishes
in particular the bipolarity relations \eqref{eq:bp1} and
\eqref{eq:bp2} between the enlarged primal and dual domains
$\mathcal{C}$ and $\mathcal{D}$ in \eqref{eq:Cx} and \eqref{eq:Dy}.

\subsection{Sufficiency of the budget constraint}
\label{subsec:sbc}

The budget constraint \eqref{eq:ihbc}, as derived in Lemma
\ref{lem:sbc}, constitutes a necessary condition for admissible
inter-temporal wealth processes. Setting $x=y=1$ in \eqref{eq:ihbc},
we thus have the implications
\begin{equation}
X\in\mathcal{X} \implies
\mathbb{E}\left[\int_{0}^{\infty}X_{t}Y_{t}\ud\kappa_{t}\right]\leq 1,
\quad \forall\,Y\in\mathcal{Y},
\label{eq:forwardX}
\end{equation}
and
\begin{equation}
Y\in\mathcal{Y} \implies
\mathbb{E}\left[\int_{0}^{\infty}X_{t}Y_{t}\ud\kappa_{t}\right]\leq 1,
\quad \forall\,X\in\mathcal{X}.
\label{eq:forwardY}
\end{equation}
We wish to establish the reverse implications in some form, if need be
by enlarging the primal and dual domains.

Recall the enlarged primal domain $\mathcal{C}\equiv\mathcal{C}(1)$ in
\eqref{eq:Cx} of processes dominated by admissible wealths with
initial capital $1$. The budget constraint \eqref{eq:ihbc} clearly holds with
$g\in\mathcal{C}$ in place of $X\in\mathcal{X}$, so the implication
\eqref{eq:forwardX} extends from $\mathcal{X}$ to $\mathcal{C}$:
\begin{equation}
g\in\mathcal{C} \implies \mathbb{E}\left[
\int_{0}^{\infty}g_{t}Y_{t}\ud\kappa_{t}\right]\leq 1, \quad
\forall\,Y\in\mathcal{Y}.
\label{eq:forwardC}
\end{equation}
We establish the reverse implication to \eqref{eq:forwardC} in Lemma
\ref{lem:suffX} below. This requires some version of the Optional
Decomposition Theorem (ODT), originally formulated by El Karoui and
Quenez \cite{ekq95} in a Brownian setting. This was generalised to
markets with locally bounded semimartingale stock prices by Kramkov
\cite{k96}, extended to the non-locally bounded case by F\"ollmer and
Kabanov \cite{fkab98}, and to models with constraints by F\"ollmer and
Kramkov \cite{fk97}. The relevant version of the ODT for us is the one
due to Stricker and Yan \cite{sy98}, which uses local martingale
deflators, rather then ELMMs.  We shall use a result from \cite{sy98}
which applies to the super-hedging of American claims, so is designed
to construct a process which can super-replicate a payoff at an
arbitrary time. The salient observation is that this result can also
be used to dominate a process over all times, and this is how we shall
employ it.

For clarity we state here the ODT results we need, and specify
afterwards precisely which results from \cite{sy98} we have taken.

For $t\geq 0$, let $\mathcal{T}(t)$ denote the set of
$\mathbb{F}$-stopping times with values in $[t,\infty)$. For $t=0$,
write $\mathcal{T}\equiv\mathcal{T}(0)$, and recall the set
$\mathcal{Z}$ of local martingale deflators in \eqref{eq:mcZ}.

\begin{theorem}[Stricker and Yan \cite{sy98} ODT]
\label{thm:syodt}

\begin{itemize}
  
\item[(i)] Let $W$ be an adapted non-negative process. The process
$ZW$ is a supermartingale for each $Z\in\mathcal{Z}$ if and only if
$W$ admits a decomposition of the form
\begin{equation*}
W = W_{0} + (\phi\cdot P) - A,
\end{equation*}
where $\phi$ is a predictable $P$-integrable process such that
$Z(\phi\cdot P)$ is a local martingale for each $Z\in\mathcal{Z}$, $A$
is an adapted increasing process with $A_{0}=0$, and for all
$Z\in\mathcal{Z}$ and $T\in\mathcal{T}$,
$\mathbb{E}[Z_{T}A_{T}]<\infty$. In this case, moreover, we have
$\sup_{Z\in\mathcal{Z},T\in\mathcal{T}}\mathbb{E}[Z_{T}A_{T}]\leq
W_{0}$.

\item[(ii)] Let $b=(b_{t})_{t\geq 0}$ be a non-negative c\`adl\`ag
process such that
$\sup_{Z\in\mathcal{Z},T\in\mathcal{T}}\mathbb{E}[Z_{T}b_{T}]<\infty$. Then
there exists an adapted c\`adl\`ag process $W$ that dominates $b$:
$W_{t}\geq b_{t}$ almost surely for all $t\geq 0$, $ZW$ is a
supermartingale for each $Z\in\mathcal{Z}$, and the smallest such
process $W$ is given by
\begin{equation}
W_{t} = \esssup_{Z\in\mathcal{Z},T\in\mathcal{T}(t)}\frac{1}{Z_{t}}
\mathbb{E}[Z_{T}b_{T}\vert\mathcal{F}_{t}], \quad t\geq 0. 
\label{eq:syam}
\end{equation}

\end{itemize}

\end{theorem}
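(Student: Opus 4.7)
The plan is to prove part (i) first, since part (ii) then follows by verifying that the essential supremum formula \eqref{eq:syam} constructs a process satisfying the hypotheses of (i).

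The sufficiency direction of (i) is the easier half. Given $W=W_0+\phi\cdot P-A$, note first that $Z$ itself is a local martingale (since the constant process $1\in\mathcal{X}$), so $Z(W_0+\phi\cdot P)=Z(W+A)$ is a non-negative local martingale and hence a supermartingale. A short integration-by-parts argument, exploiting $Z\geq 0$ and the monotonicity of $A$, shows that $ZA$ is a submartingale; subtracting then delivers the supermartingale property of $ZW$. The uniform bound $\sup_{Z\in\mathcal{Z},T\in\mathcal{T}}\mathbb{E}[Z_TA_T]\leq W_0$ is immediate from $\mathbb{E}[Z_T(W_T+A_T)]\leq W_0$ (optional sampling on the non-negative local martingale $Z(W+A)$) together with $W\geq 0$.

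The necessity direction is where the real work lies, and is the principal obstacle. I would follow the discretisation strategy behind Kramkov's \cite{k96} optional decomposition, adapted to the deflator framework of Stricker and Yan. Fix a nested sequence of dyadic partitions $(t^{(n)}_i)$ of $[0,\infty)$. On each grid interval $[t^{(n)}_i,t^{(n)}_{i+1}]$, the supermartingale inequality $\mathbb{E}[Z_{t^{(n)}_{i+1}}W_{t^{(n)}_{i+1}}\mid\mathcal{F}_{t^{(n)}_i}]\leq Z_{t^{(n)}_i}W_{t^{(n)}_i}$, valid for every $Z\in\mathcal{Z}$, is equivalent by a Hahn--Banach separation, applied between the convex cone of one-step stochastic-integral payoffs and the polar cone generated by the values of deflators at the two endpoints, to the existence of a predictable integrand $\phi^{(n)}_i$ and a non-negative $\mathcal{F}_{t^{(n)}_{i+1}}$-measurable residual $\Delta A^{(n)}_i$ producing
\[
W_{t^{(n)}_{i+1}}-W_{t^{(n)}_i}=\phi^{(n)}_i\cdot\bigl(P_{t^{(n)}_{i+1}}-P_{t^{(n)}_i}\bigr)-\Delta A^{(n)}_i.
\]
Concatenating yields piecewise-constant strategies $\phi^{(n)}$ and increasing processes $A^{(n)}$, and the uniform bound on $\mathbb{E}[Z_TA^{(n)}_T]$ supports a Koml\'os/Fatou-type extraction of limits in the space of non-negative supermartingales, in the vein of F\"ollmer--Kramkov \cite{fk97}. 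The delicate point is to ensure that the limit $\phi$ is $P$-integrable and that $Z(\phi\cdot P)$ remains a local martingale for \emph{every} deflator simultaneously; this is handled via the Fatou-closedness of the cone of wealth processes $\mathcal{X}$ under NUPBR, which is precisely the substance of the Stricker--Yan refinement.

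For part (ii), set $W_t$ equal to the essential supremum on the right of \eqref{eq:syam}. Choosing $T=t$ gives $W_t\geq b_t$ a.s. A standard lattice argument (pasting deflators across $\mathcal{F}_t$-measurable sets and concatenating stopping times) shows that the family $\{\mathbb{E}[Z_Tb_T\mid\mathcal{F}_t]\}_{Z\in\mathcal{Z},\,T\in\mathcal{T}(t)}$ is directed upward, so the essential supremum is an almost sure monotone limit of a countable sequence from inside. To verify that $ZW$ is a supermartingale for each $Z\in\mathcal{Z}$, take $s\leq t$; using $\mathcal{T}(t)\subseteq\mathcal{T}(s)$, upward directedness, and the identity $Z_tW_t=\esssup_{\widetilde Z,T\in\mathcal{T}(t)}\mathbb{E}[\widetilde Z_Tb_T\mid\mathcal{F}_t]$, I obtain
\[
\mathbb{E}[Z_tW_t\mid\mathcal{F}_s]=\esssup_{\widetilde Z,\,T\in\mathcal{T}(t)}\mathbb{E}[\widetilde Z_Tb_T\mid\mathcal{F}_s]\leq Z_sW_s.
\]
A c\`adl\`ag modification of $W$ then comes from the supermartingale regularisation theorem applied to $ZW$ under a fixed reference deflator. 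Minimality is immediate: any rival $W'\geq b$ with $ZW'$ a supermartingale satisfies, by optional sampling, $\mathbb{E}[Z_Tb_T\mid\mathcal{F}_t]\leq\mathbb{E}[Z_TW'_T\mid\mathcal{F}_t]\leq Z_tW'_t$, whence $W\leq W'$ after taking the essential supremum. Finally, part (i) supplies the decomposition of the constructed $W$.
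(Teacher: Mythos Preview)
The paper does not prove this theorem. Immediately after stating it, the paper writes: ``Part (i) of Theorem \ref{thm:syodt} is taken from \cite[Theorem 2.1]{sy98}. Part (ii) is a combination of \cite[Lemma 2.4 and Remark 2]{sy98}.'' That is, Theorem \ref{thm:syodt} is imported wholesale from Stricker and Yan \cite{sy98} as a black-box tool, and no proof is offered or required within the paper. You have therefore gone well beyond what the paper does by attempting a proof sketch.

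As to the content of your sketch: it is a reasonable outline of how the Stricker--Yan argument actually proceeds, and your part (ii) via the upward-directed lattice property and supermartingale regularisation is essentially the standard route. Two places would need tightening if this were to stand as a proof. First, in the sufficiency direction of (i), your claim that ``$ZA$ is a submartingale'' is not true as stated when $Z$ is only a local martingale: the computation $\mathbb{E}[Z_tA_t\mid\mathcal{F}_s]\geq A_s\mathbb{E}[Z_t\mid\mathcal{F}_s]$ only yields $A_sZ_s$ if $Z$ is a true martingale. The repair is to localise $Z$ and $Z(\phi\cdot P)$ simultaneously, obtain that $ZW$ is a non-negative \emph{local} supermartingale, and then invoke Fatou to get the genuine supermartingale property. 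Second, the necessity direction is where the real difficulty lives, and your discretisation-plus-separation outline, while pointing in the right direction (this is indeed the Kramkov/Stricker--Yan strategy), leaves the hardest steps---the one-step separation in the deflator setting without ELMMs, and the passage to the limit preserving $P$-integrability of $\phi$ and the local-martingale property of $Z(\phi\cdot P)$ for \emph{all} $Z$ simultaneously---as assertions rather than arguments.
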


Part (i) of Theorem \ref{thm:syodt} is taken from \cite[Theorem
2.1]{sy98}. Part (ii) is a combination of \cite[Lemma 2.4 and Remark
2]{sy98}. 

The following lemma establishes the reverse implication to
\eqref{eq:forwardC}.

\begin{lemma}
\label{lem:suffX}

Suppose $g$ is a non-negative c\`adl\`ag process satisfying
\begin{equation}
\mathbb{E}\left[\int_{0}^{\infty}g_{t}Y_{t}\ud\kappa_{t}\right]\leq 1,
\quad \forall\,Y\in\mathcal{Y}. 
\label{eq:ihbc1}
\end{equation}
Then, $g\in\mathcal{C}$.

\end{lemma}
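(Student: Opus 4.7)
The plan is to build an admissible wealth process $X \in \mathcal{X}$ dominating $g$ $\mu$-a.e.\ by invoking the Stricker--Yan optional decomposition theorem (Theorem \ref{thm:syodt}). The crucial intermediate step is to convert the integrated constraint \eqref{eq:ihbc1} into a pointwise estimate of the form $\mathbb{E}[Z_{T} g_{T}] \leq 1$ for all $Z \in \mathcal{Z}$ and all stopping times $T$ lying in the $\kappa$-active region of time.

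To obtain this, I would fix $Z \in \mathcal{Z} \subseteq \mathcal{S}$ and a stopping time $T$, and apply \eqref{eq:ihbc1} with the specific dual variable $Y = \beta R \in \mathcal{Y}$ built from $S = Z$ and $\beta_{s} := n\mathbbm{1}_{(T,\infty)}(s) \in \mathcal{B}$, so that $R_{t} = \exp(-n(\kappa_{t}-\kappa_{T})^{+})Z_{t}$. The constraint reads
\begin{equation*}
\mathbb{E}\left[n\int_{T}^{\infty} g_{t} \e^{-n(\kappa_{t}-\kappa_{T})} Z_{t} \ud\kappa_{t}\right] \leq 1, \quad n \in \mathbb{N}.
\end{equation*}
A Stieltjes change of variable $v = n(\kappa_{t}-\kappa_{T})$ on $(T,\infty)$, combined with right-continuity of $g$ and $Z$, shows that on $\{\kappa_{\infty}>\kappa_{T}\}$ the integrand converges pointwise in $v$ to $g_{T^{\sharp}}Z_{T^{\sharp}} \e^{-v}$ as $n \to \infty$, where $T^{\sharp} := \inf\{s \geq T : \kappa_{s} > \kappa_{T}\}$ coincides with $T$ whenever $\kappa$ is not flat at $T$. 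Two applications of Fatou's lemma (once to the inner integral in $v$, once to the expectation) then give
\begin{equation*}
\mathbb{E}\bigl[g_{T^{\sharp}} Z_{T^{\sharp}} \mathbbm{1}_{\{\kappa_{\infty}>\kappa_{T}\}}\bigr] \leq 1,
\end{equation*}
the required pointwise estimate in the $\kappa$-active region.

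Next, I set $\tau := \inf\{t \geq 0 : \kappa_{t} = \kappa_{\infty}\}$. The stochastic interval $\llbracket\tau,\infty\llbracket$ is $\mu$-null, so it suffices to produce $X \in \mathcal{X}$ with $X \geq g$ on $\llbracket 0,\tau\llbracket$. A standard argument (stop $g$ at $\tau$ and combine the pointwise estimate with optional sampling for the supermartingale $Z$) yields $\sup_{Z \in \mathcal{Z},\,T} \mathbb{E}[Z_{T} \bar{g}_{T}] \leq 1$, where $\bar{g} := g^{\tau}$ is càdlàg and agrees with $g$ $\mu$-a.e. Theorem \ref{thm:syodt}(ii) applied with $b=\bar{g}$ then yields a non-negative càdlàg process $W$ with $W \geq \bar{g}$, $ZW$ a supermartingale for each $Z \in \mathcal{Z}$, and $W_{0} \leq 1$. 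Theorem \ref{thm:syodt}(i) furnishes the decomposition $W = W_{0} + (\phi \cdot P) - A$ with predictable $P$-integrable $\phi$ and non-decreasing $A$ with $A_{0}=0$. Setting $X := 1 + (\phi \cdot P) = W + A + (1-W_{0})$ gives $X \geq 0$ (as $W, A \geq 0$ and $W_{0} \leq 1$), hence $X \in \mathcal{X}$, and $X \geq W \geq \bar{g} = g$ $\mu$-a.e., proving $g \in \mathcal{C}$. The main obstacle is the limit passage in the Stieltjes integral above, which demands careful treatment of jumps and flat stretches of $\kappa$; the saving observation is that all ambiguities introduced there live on $\mu$-null sets and so do not alter the final $\mathcal{C}$-membership conclusion.
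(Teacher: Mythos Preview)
Your approach is essentially the same as the paper's: choose a localising $\beta\in\mathcal{B}$ depending on a stopping time $T$ and a parameter $n$, plug the resulting $Y=\beta\exp(-\int_{0}^{\cdot}\beta_{s}\ud\kappa_{s})Z$ into \eqref{eq:ihbc1}, pass $n\to\infty$ via Fatou and right-continuity to extract a pointwise bound $\mathbb{E}[Z_{T}g_{T}]\leq 1$, and then apply Stricker--Yan (Theorem~\ref{thm:syodt}) to produce a dominating wealth process $X=1+(\phi\cdot P)\in\mathcal{X}$.

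The only real difference is in the localising sequence. The paper takes
\[
\beta_{t}=\frac{\mathbbm{1}_{\{T\leq t<T+1/n\}}}{\kappa_{T+1/n}-\kappa_{T}},
\]
an averaging kernel over a shrinking window, which (modulo the case $\kappa_{T+1/n}=\kappa_{T}$, which the paper does not address explicitly either) yields $\mathbb{E}[Z_{T}g_{T}]\leq 1$ directly at \emph{every} stopping time $T$. Your exponential choice $\beta=n\mathbbm{1}_{(T,\infty)}$ naturally concentrates mass at the first $\kappa$-increase time $T^{\sharp}$ rather than at $T$, which forces the extra bookkeeping with $\tau$, $\bar{g}=g^{\tau}$ and the optional-sampling step. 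That last step is not quite ``standard'': from $\mathbb{E}[g_{T^{\sharp}}Z_{T^{\sharp}}]\leq 1$ and the supermartingale property of $Z$ you cannot directly deduce $\mathbb{E}[Z_{T}\bar{g}_{T}]\leq 1$ for arbitrary $T$, because $\bar{g}_{T}$ need not equal $g_{T^{\sharp}}$ on flat stretches of $\kappa$. You would need to redefine $\bar{g}$ as $g_{T^{\sharp}}$-type (i.e.\ modify $g$ on $\mu$-null sets to be constant on $\kappa$-flat intervals), which is fine for the final $\mu$-a.e.\ conclusion but should be stated. Also note that $\mathbbm{1}_{(T,\infty)}$ is not right-continuous at $T$; use $\mathbbm{1}_{[T,\infty)}$ to stay in $\mathcal{B}$. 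The paper's averaging kernel sidesteps both issues and is slightly cleaner for that reason.
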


\begin{proof}

Since $g$ is assumed to satisfy \eqref{eq:ihbc1} for all
$Y\in\mathcal{Y}$ and because we have the inclusion
\eqref{eq:inclusion}, we see that \eqref{eq:ihbc1} is satisfied for
$Y=\beta\exp\left(-\int_{0}^{\cdot}\beta_{s}\ud\kappa_{s}\right)Z$,
for any non-negative c\`adl\`ag $\beta\in\mathcal{B}$ and for any
local martingale deflator $Z\in\mathcal{Z}$.
  
Fix a stopping time $T\in\mathcal{T}$, and for each $n\in\mathbb{N}$
choose $\beta$ according to
\begin{equation*}
\beta_{t} = \frac{\mathbbm{1}_{\{T\leq
t<T+1/n\}}}{\kappa_{T+1/n}-\kappa_{T}}, \quad t\geq 0, \quad
n\in\mathbb{N}.   
\end{equation*}
Define the process $\nu^{(n)}$ by
\begin{eqnarray*}
\nu^{(n)}_{t} & := & \beta_{t}\exp\left
(-C(\kappa_{T+1/n}-\kappa_{T})\int_{0}^{t}\beta_{s}\ud\kappa_{s}\right) \\
& = & \frac{\mathbbm{1}_{\{T\leq t<T+1/n\}}}{\kappa_{T+1/n}-\kappa_{T}}
\exp\left(-C\int_{0}^{t}\mathbbm{1}_{\{T\leq
s<T+1/n\}}\ud\kappa_{s}\right), \quad t\geq 0,
\end{eqnarray*}
for a constant $C>0$ large enough to ensure that
$C\int_{0}^{t}\mathbbm{1}_{\{T\leq
  s<T+1/n\}}\ud\kappa_{s}\geq\int_{0}^{t}\beta_{s}\ud\kappa_{s},\,t\geq
0$, so that
$\nu^{(n)}\leq\beta\exp\left(-\int_{0}^{\cdot}\beta_{s}\ud\kappa_{s}\right)$
almost surely. We then have, for each $n\in\mathbb N$ and
$Z\in\mathcal{Z}$,
\begin{eqnarray*}
1 & \geq &
\mathbb{E}\left[\int_{0}^{\infty}g_{t}Y_{t}\ud\kappa_{t}\right] \\
& \geq &
\mathbb{E}\left[\int_{0}^{\infty}g_{t}\nu^{(n)}_{t}Z_{t}\ud\kappa_{t}\right] \\
& = & \mathbb{E}\left[\frac{1}{\kappa_{T+1/n}-\kappa_{T}}
\int_{T}^{T+1/n}g_{t}Z_{t}\exp\left(
C(\kappa_{t}-\kappa_{T})\right)\ud\kappa_{t}\right]. 
\end{eqnarray*}
Letting $n\to\infty$, and using Fatou's lemma and the right-continuity
of $Zg$, we obtain
\begin{equation*}
\mathbb E\left[Z_{T}g_{T}\right] \leq 1, \quad
\forall\,Z\in\mathcal{Z},\,T\in\mathcal{T}.   
\end{equation*}
Since $Z\in\mathcal{Z}$ and $T\in\mathcal{T}$ were arbitrary, we have
\begin{equation*}
\sup_{Z\in\mathcal{Z},T\in\mathcal{T}}\mathbb{E}\left[Z_{T}g_{T}\right]
\leq 1 <\infty. 
\end{equation*}
Thus, from part (ii) of the Stricker-Yan version of optional
decomposition, Theorem \ref{thm:syodt}, there exists a c\`adl\`ag
process $W$ that dominates $g$, so
$W_{t}\geq g_{t},\,\mathrm{a.s.},\,\forall t\geq 0$, and $ZW$ is a
supermartingale for each $Z\in\mathcal{Z}$. From \eqref{eq:syam}, the
smallest such $W$ given by
\begin{equation*}
W_{t} = \esssup_{Z\in\mathcal{Z},T\in\mathcal{T}(t)}\frac{1}{Z_{t}}
\mathbb{E}[Z_{T}g_{T}\vert\mathcal{F}_{t}], \quad t\geq 0,
\end{equation*}
so that $W_{0}\leq 1$. Further, by part (i) of Theorem
\eqref{thm:syodt}, there exists a predictable $P$-integrable process
$H$ and an adapted increasing process $A$, with $A_{0}=0$, such that
$W$ has decomposition $W=W_{0}+(H\cdot P)-A$, with $Z(H\cdot P)$ a
local martingale for each $Z\in\mathcal{Z}$, and
$\mathbb{E}[Z_{T}A_{T}]<\infty$ for all $Z\in\mathcal{Z}$ and
$T\in\mathcal{T}$.

Since $W$ dominates $g$, we can define a process $X$ by
\begin{equation*}
X_{t} := 1 + (H\cdot P)_{t}, \quad t\geq 0,
\end{equation*}
which also dominates $g$, since its initial value is no smaller than
$W_{0}$ and we have dispensed with the increasing process $A$. We
observe that $X$ corresponds to the value of a self-financing wealth
process with initial capital $1$ which dominates $g$, so that
$g\in\mathcal{C}$.
  
\end{proof}

We can now assemble consequences of the budget constraint and of Lemma
\ref{lem:suffX} which, combined with the bipolar theorem, gives the
following polarity properties of the set $\mathcal{C}$.

\begin{lemma}[Polarity properties of $\mathcal{C}$]
\label{lem:Cprop}

The set $\mathcal{C}\equiv\mathcal{C}(1)$ of admissible wealth
processes with initial capital $x=1$ is a closed, convex and solid
subset of $L^{0}_{+}(\mu)$. It is equal to the polar of the set
$\mathcal{Y}\equiv\mathcal{Y}(1)$ of \eqref{eq:mcY} with respect to
measure $\mu$:
\begin{equation}
\mathcal{C} = \mathcal{Y}^{\circ},
\label{eq:CY0}
\end{equation}
so that
\begin{equation}
\mathcal{C}^{\circ} = \mathcal{Y}^{\circ\circ},
\label{eq:CY00}
\end{equation}
and $\mathcal{C}$ is equal to its bipolar:
\begin{equation}
\mathcal{C}^{\circ\circ} = \mathcal{C}.
\label{eq:Cbipolar}
\end{equation}

\end{lemma}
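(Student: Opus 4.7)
The plan is to derive the polarity statement $\mathcal{C}=\mathcal{Y}^{\circ}$ first, and then read off the remaining properties as formal consequences of the bipolar theorem (Theorem \ref{thm:bp}).

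First I would show $\mathcal{C}\subseteq\mathcal{Y}^{\circ}$. This is essentially a restatement of \eqref{eq:forwardC}: if $g\in\mathcal{C}$, then $g\leq X$ $\mu$-a.e.\ for some $X\in\mathcal{X}$, so the infinite horizon budget constraint \eqref{eq:ihbc} from Lemma \ref{lem:sbc} (applied at $x=y=1$) together with monotonicity gives $\langle g,Y\rangle\leq\langle X,Y\rangle\leq 1$ for every $Y\in\mathcal{Y}$, which is precisely the defining inequality for $\mathcal{Y}^{\circ}$. The reverse inclusion $\mathcal{Y}^{\circ}\subseteq\mathcal{C}$ is where the real work has already been done: given $g\in\mathcal{Y}^{\circ}$, the process $g$ (which we may assume has a c\`adl\`ag version obtained by passing to the optional projection) satisfies \eqref{eq:ihbc1}, and then Lemma \ref{lem:suffX} concludes that $g\in\mathcal{C}$. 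This part of the argument is the substantive one, since it rests on the Stricker--Yan optional decomposition; but since Lemma \ref{lem:suffX} is in place, the inclusion is immediate. Combining the two inclusions yields \eqref{eq:CY0}.

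Second, with the identity $\mathcal{C}=\mathcal{Y}^{\circ}$ established, the closedness, convexity and solidity of $\mathcal{C}$ follow from part (i) of Theorem \ref{thm:bp}, which asserts that the polar of \emph{any} subset of $L^{0}_{+}(\mu)$ is automatically closed in $\mu$-measure, convex and solid. No further work is required.

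Third, taking polars of both sides of \eqref{eq:CY0} immediately produces \eqref{eq:CY00}, i.e.\ $\mathcal{C}^{\circ}=\mathcal{Y}^{\circ\circ}$. Finally, to obtain \eqref{eq:Cbipolar}, I would invoke part (ii) of Theorem \ref{thm:bp}: $\mathcal{C}^{\circ\circ}$ is the smallest closed, convex, solid subset of $L^{0}_{+}(\mu)$ containing $\mathcal{C}$. Since $\mathcal{C}$ itself is already closed, convex and solid by the previous step, this smallest such set is $\mathcal{C}$ itself, giving $\mathcal{C}^{\circ\circ}=\mathcal{C}$.

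The only step that might require a small technical remark is the implicit passage from a general $g\in L^{0}_{+}(\mu)$ representing $\mathcal{Y}^{\circ}$ to a c\`adl\`ag representative to which Lemma \ref{lem:suffX} applies; beyond that, the proof is essentially a transcription of the polar/bipolar formalism, with all analytic content delegated to Lemma \ref{lem:sbc} (for $\subseteq$) and Lemma \ref{lem:suffX} (for $\supseteq$).
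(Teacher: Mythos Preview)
Your proposal is correct and follows essentially the same route as the paper: combine \eqref{eq:forwardC} with Lemma \ref{lem:suffX} to obtain $\mathcal{C}=\mathcal{Y}^{\circ}$, then read off the closed/convex/solid properties and \eqref{eq:CY00}, \eqref{eq:Cbipolar} from the bipolar theorem. Your remark about passing to a c\`adl\`ag representative before invoking Lemma \ref{lem:suffX} is a fair technical point that the paper leaves implicit.
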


\begin{proof}

Lemma \ref{lem:suffX}, combined with the implication in
\eqref{eq:forwardC}, gives the equivalence
\begin{equation*}
g\in\mathcal{C} \iff
\mathbb{E}\left[\int_{0}^{\infty}g_{t}Y_{t}\ud\kappa_{t}\right]\leq 1, \quad
\forall\,Y\in\mathcal{Y}.  
\end{equation*}
Equivalently, in terms of the measure $\mu$ of \eqref{eq:mu}, we have
\begin{equation}
g\in\mathcal{C} \iff 
\int_{\mathbf{\Omega}}gY\ud\mu \leq 1, \quad \forall\,
Y\in\mathcal{Y}.
\label{eq:Cchar0}
\end{equation}
The characterisation \eqref{eq:Cchar0} is the dual representation of
$\mathcal{C}$:
\begin{equation*}
\mathcal{C} = \left\{g\in L^{0}_{+}(\mu):
\langle g,Y\rangle\leq 1, \quad \mbox{for each
$Y\in\mathcal{Y}$}\right\}.
\end{equation*}
This says that $\mathcal{C}$ is the polar of $\mathcal{Y}$,
establishing \eqref{eq:CY0} and thus \eqref{eq:CY00}.

Part (i) of the bipolar theorem, Theorem \ref{thm:bp}, along with
\eqref{eq:CY0}, imply that $\mathcal{C}$ is a closed, convex and solid
subset of $L^{0}_{+}(\mu)$ (since it is equal to the polar of a set)
as claimed. Part (ii) of Theorem \ref{thm:bp} gives
$\mathcal{C}^{\circ\circ}\supseteq\mathcal{C}$ with
$\mathcal{C}^{\circ\circ}$ the smallest closed, convex, solid set
containing $\mathcal{C}$. But since $\mathcal{C}$ is itself closed,
convex and solid, we have \eqref{eq:Cbipolar}.

\end{proof}

\begin{remark}
\label{rem:Cprop}

There are other ways to obtain the closed, convex and solid properties
of $\mathcal{C}$. First, the equivalence \eqref{eq:Cchar0} along with
Fatou's lemma yields that the set $\mathcal{C}$ is closed with respect
to the topology of convergence in measure $\mu$. To see this, let
$(g^{n})_{n\in\mathbb{N}}$ be a sequence in $\mathcal{C}$ which
converges $\mu$-a.e. to an element $g\in L^{0}_{+}(\mu)$. For
arbitrary $Y\in\mathcal{Y}$ we obtain, via Fatou's lemma and the fact
that $g^{n}\in\mathcal{C}$ for each $n\in\mathbb{N}$,
\begin{equation*}
\int_{\mathbf{\Omega}}gY\ud\mu \leq
\liminf_{n\to\infty}\int_{\mathbf{\Omega}}g^{n}Y\ud\mu \leq 1,
\end{equation*}
so by \eqref{eq:Cchar0}, $g\in\mathcal{C}$, and thus $\mathcal{C}$ is
closed. Further, it is straightforward to establish the convexity of
$\mathcal{C}$ (inherited from the convexity of $\mathcal{X}$) from its
definition. Finally, solidity of $\mathcal{C}$ is also clear: if one
can dominate an element $g\in\mathcal{C}$ with a self-financing wealth
process, then one can also dominate any smaller process with the same
portfolio.
  
\end{remark}

\subsection{Convexity of the dual domain}
\label{subsec:cdd}

We now turn to the dual side of the analysis. The first step is to
establish convexity properties of the sets $\mathcal{R}$ and
$\mathcal{Y}$. Here, the particular structure of the dual variables in
\eqref{eq:R} and \eqref{eq:Y} comes into play.

\begin{lemma}
\label{lem:cry}

The sets $\mathcal{R}$ and $\mathcal{Y}$ of \eqref{eq:mcR} and
\eqref{eq:mcY} are convex.  
  
\end{lemma}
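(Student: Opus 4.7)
The plan is to handle the two sets separately: convexity of $\mathcal{R}$ reduces immediately to convexity of $\mathcal{S}$, while convexity of $\mathcal{Y}$ requires a constructive argument that re-packages a convex combination $\lambda Y^{1}+(1-\lambda)Y^{2}$ into a single triple $(\beta,R,S)$ of the form appearing in \eqref{eq:Y}, exploiting the particular product structure of the deflators.

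For $\mathcal{R}$, fix $R^{1},R^{2}\in\mathcal{R}$ and $\lambda\in[0,1]$, and set $R:=\lambda R^{1}+(1-\lambda)R^{2}$. Each $R^{i}X$ is a supermartingale for every $X\in\mathcal{X}$ (as noted just after the definition of $\mathcal{R}(y)$), so the convex combination $RX$ is also a supermartingale, giving $R\in\mathcal{S}$. The trivial representation $R=\exp\bigl(-\int_{0}^{\cdot}0\,\ud\kappa_{s}\bigr)\,R$, with $\beta\equiv 0\in\mathcal{B}$ and $S=R\in\mathcal{S}$, then exhibits $R$ as an element of $\mathcal{R}$.

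For $\mathcal{Y}$, fix $Y^{i}=\beta^{i}R^{i}=\beta^{i}\exp\bigl(-\int_{0}^{\cdot}\beta^{i}_{s}\,\ud\kappa_{s}\bigr)S^{i}\in\mathcal{Y}$ for $i=1,2$ and $\lambda\in[0,1]$, and set $Y:=\lambda Y^{1}+(1-\lambda)Y^{2}$ together with $R:=\lambda R^{1}+(1-\lambda)R^{2}$. Each $R^{i}>0$, so $R>0$, and the natural candidate triple is
\begin{equation*}
\beta \,:=\, \frac{Y}{R} \,=\, \frac{\lambda R^{1}}{R}\beta^{1}+\frac{(1-\lambda)R^{2}}{R}\beta^{2}, \qquad A \,:=\, \int_{0}^{\cdot}\beta_{s}\,\ud\kappa_{s}, \qquad S\,:=\,\e^{A}R,
\end{equation*}
which by construction yields $R=\e^{-A}S$ and $Y=\beta R=\beta\e^{-A}S$. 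The process $\beta$ is a random convex combination of $\beta^{1}$ and $\beta^{2}$, hence non-negative, adapted, c\`adl\`ag, and dominated by $\beta^{1}+\beta^{2}$; this gives $\int_{0}^{t}\beta_{s}\,\ud\kappa_{s}\leq\int_{0}^{t}(\beta^{1}_{s}+\beta^{2}_{s})\,\ud\kappa_{s}<\infty$ almost surely, so $\beta\in\mathcal{B}$.

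The only non-trivial step left, and the one I expect to be the main obstacle, is verifying that $S\in\mathcal{S}$. The key observation is that by Lemma \ref{lem:sbc} each $M^{i}:=XR^{i}+\int_{0}^{\cdot}X_{s}Y^{i}_{s}\,\ud\kappa_{s}$ is a supermartingale for any $X\in\mathcal{X}$, and therefore so is the convex combination $M:=\lambda M^{1}+(1-\lambda)M^{2}=XR+\int_{0}^{\cdot}X_{s}Y_{s}\,\ud\kappa_{s}$. The product-rule identity derived in the proof of Lemma \ref{lem:sbc}, now applied to the triple $(\beta,R,S)$ just constructed (note that $XR$ is a semimartingale as a convex combination of such), gives $M=1+\int_{0}^{\cdot}\e^{-A_{s}}\,\ud(X_{s}S_{s})$, which on algebraic inversion yields $XS=1+\int_{0}^{\cdot}\e^{A_{s-}}\,\ud M_{s}$. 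Since $\e^{A}$ is non-negative, adapted and locally bounded (because $A_{t}<\infty$ almost surely at each finite $t$), the stochastic integral of $\e^{A_{-}}$ against the non-negative supermartingale $M$ is a local supermartingale that is bounded below, hence a supermartingale; consequently $XS$ is a supermartingale and $S\in\mathcal{S}$. The only subtle point is the clean application of the product rule when $\kappa$ carries jumps, but this is handled verbatim by the computation already performed in the proof of Lemma \ref{lem:sbc}.
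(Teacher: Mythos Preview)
Your proof is correct but follows a genuinely different route from the paper's, and the differences are worth noting.

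For $\mathcal{R}$, the paper retains the underlying deflators $S^{i}$ from the representations $R^{i}=\exp\bigl(-\int_{0}^{\cdot}\beta^{i}_{s}\ud\kappa_{s}\bigr)S^{i}$, sets $\overline{S}:=\lambda S^{1}+(1-\lambda)S^{2}\in\mathcal{S}$, observes $\overline{R}\leq\overline{S}$, and defines $\widetilde{\beta}\in\mathcal{B}$ via $\overline{R}=\exp\bigl(-\int_{0}^{\cdot}\widetilde{\beta}_{s}\ud\kappa_{s}\bigr)\overline{S}$. Your argument is shorter: since $\mathcal{R}\subseteq\mathcal{S}$ and $\mathcal{S}$ is convex, the trivial choice $\beta\equiv0$, $S=R$ already exhibits the convex combination as an element of $\mathcal{R}$. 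Your shortcut is valid but discards the specific $\overline{S}$ that the paper carries forward into the $\mathcal{Y}$ step.

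For $\mathcal{Y}$, the paper keeps that same $\overline{S}$ and reduces the question to solving the functional equation
\[
\bar{\beta}\exp\Bigl(-\int_{0}^{\cdot}\bar{\beta}_{s}\ud\kappa_{s}\Bigr)=\widehat{\beta}\exp\Bigl(-\int_{0}^{\cdot}\widetilde{\beta}_{s}\ud\kappa_{s}\Bigr)
\]
for $\bar{\beta}$, appealing to monotonicity of the exponential for existence and uniqueness; membership in $\mathcal{S}$ is never re-examined because $\overline{S}$ was already known to lie there. You instead manufacture a \emph{new} $S:=\e^{A}R$ from the natural choice $\beta:=Y/R$, and then \emph{prove} $S\in\mathcal{S}$ by inverting the product-rule identity of Lemma~\ref{lem:sbc} and integrating the locally bounded predictable process $\e^{A_{-}}$ against the non-negative supermartingale $M$, yielding a non-negative local supermartingale and hence a supermartingale. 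Your route is more analytic and makes explicit why the constructed $S$ really deflates wealth; the paper's is more algebraic and terser, but leans on the solvability of a functional equation that it asserts rather than verifies in detail. Both approaches inherit the same informality regarding jumps of $\kappa$ that is already present in the paper's proof of Lemma~\ref{lem:sbc}, which you correctly flag.
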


\begin{proof}

Take two elements $S^{1},S^{2}\in\mathcal{S}$ and two elements
$\beta^{1},\beta^{2}\in\mathcal{B}$, and define
$R^{1},R^{2}\in\mathcal{R}$ and $Y^{1},Y^{2}\in\mathcal{Y}$ by
\begin{equation*}
R^{i} :=
\exp\left(-\int_{0}^{\cdot}\beta^{i}_{s}\ud\kappa_{s}\right)S^{i},
\quad Y^{i} := \beta^{i}R^{i}, \quad i=1,2.
\end{equation*}
For two constants $\lambda_{1},\lambda_{2}\geq 0$ such that
$\lambda_{1}+\lambda_{2}=1$, define the convex combinations
\begin{equation*}
\overline{S} := \lambda_{1}S^{1} + \lambda_{2}S^{2}, \quad
\overline{R} := \lambda_{1}R^{1} + \lambda_{2}R^{2}, \quad
\overline{Y} := \lambda_{1}Y^{1} + \lambda_{2}Y^{2}.
\end{equation*}
Observe that $\overline{S}\in\mathcal{S}$ because the set
$\mathcal{S}$ of supermartingale deflators is convex.

Since $\beta^{i}\geq 0,\,i=1,2$ and the set $\mathcal{S}$ is convex,
we have
\begin{equation*}
\overline{R} \leq \lambda_{1}S^{1} + \lambda_{2}S^{2} = \overline{S}
\in \mathcal{S}.  
\end{equation*}
We can therefore define a non-negative process
$\widetilde{\beta}\in\mathcal{B}$ by the relation
\begin{equation}
\overline{R} = \exp\left(-\int_{0}^{\cdot}
\widetilde{\beta}_{s}\ud\kappa_{s}\right)\overline{S}, 
\label{eq:ovlr}
\end{equation}
This shows that $\overline{R}\in\mathcal{R}$, so that $\mathcal{R}$ is
convex, as claimed.

Define a non-negative process $\widehat{\beta}\in\mathcal{B}$ by
\begin{equation}
\overline{Y} = \widehat{\beta}\overline{R}.
\label{eq:ovly1}
\end{equation}
To establish that $\mathcal{Y}$ is convex, we need to show the
existence of a process $\bar{\beta}\in\mathcal{B}$ such that
\begin{equation}
\overline{Y} = \bar{\beta}\exp\left(
-\int_{0}^{\cdot}\bar{\beta}_{s}\ud\kappa_{s}\right)\overline{S}.   
\label{eq:ovly2}
\end{equation}
From \eqref{eq:ovlr}, \eqref{eq:ovly1} and \eqref{eq:ovly2} we thus
require $\bar{\beta}$ to satisfy the relation
\begin{equation*}
\bar{\beta}\exp\left(-\int_{0}^{\cdot}
\bar{\beta}_{s}\ud\kappa_{s}\right) = \widehat{\beta}\exp\left(
-\int_{0}^{\cdot}\widetilde{\beta}_{s}\ud\kappa_{s}\right),   
\end{equation*}
which, given processes $\widetilde{\beta}$ and $\widehat{\beta}$, does
have a unique solution for $\bar{\beta}$, due to the monotonicity of
the exponential function. Thus $\mathcal{Y}$ is convex.

\end{proof}

The next step is to attempt to reach some form of reverse polarity
result to \eqref{eq:CY0}. It is here that the enlargement of the dual
domain from $\mathcal{Y}$ to the set $\mathcal{D}$ of \eqref{eq:Dy}
comes into play.

To see why this enlargement is needed, we first observe that the
implication \eqref{eq:forwardY} extends from $\mathcal{X}$ to
$\mathcal{C}$, so we have
\begin{equation}
Y \in \mathcal{Y} \implies
\langle g,Y\rangle \leq 1, \quad \forall \,
g\in\mathcal{C},
\label{eq:YC}
\end{equation}
which implies that
\begin{equation}
\mathcal{Y} \subseteq \mathcal{C}^{\circ}.
\label{eq:YC0}  
\end{equation}
We do not have the reverse inclusion, because we do not have the
reverse implication to \eqref{eq:YC}, so cannot write a full
bipolarity relation between sets $\mathcal{C}$ and $\mathcal{Y}$. The
enlargement from $\mathcal{Y}$ to the set $\mathcal{D}$ resolves the
issue, yielding the inter-temporal wealth bipolarity of Lemma
\ref{lem:itwbp} below. This procedure, in the spirit of Kramkov and
Schachermayer \cite{ks99}, requires us to establish that the enlarged
domain is closed in an appropriate topology. Here is the relevant
result.

\begin{lemma}
\label{lem:Dclosed}

The enlarged dual domain $\mathcal{D}\equiv\mathcal{D}(1)$ of
\eqref{eq:Dy} is closed with respect to the topology of convergence in
measure $\mu$.
  
\end{lemma}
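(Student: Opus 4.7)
The plan is to extract a $\mu$-a.e.\ limit of dominating deflators via Koml\'os's theorem, exploiting the $L^{1}(\mu)$-boundedness of $\mathcal{Y}$ that follows from the budget constraint, and then to identify the limit as an element of $\mathcal{Y}$ by Fatou convergence of the underlying supermartingales, combined with the convexity structure of Lemma \ref{lem:cry}.

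Given $(h^{n})\subset\mathcal{D}$ converging in $\mu$-measure to $h\in L^{0}_{+}(\mu)$, I pass to a subsequence so that $h^{n}\to h$ $\mu$-a.e. For each $n$ I fix $Y^{n}\in\mathcal{Y}$ with $h^{n}\leq Y^{n}$ $\mu$-a.e., and write $Y^{n}=\beta^{n}R^{n}$ with $\beta^{n}\in\mathcal{B}$, $S^{n}\in\mathcal{S}$, and $R^{n}=\exp\bigl(-\int_{0}^{\cdot}\beta^{n}_{s}\ud\kappa_{s}\bigr)S^{n}$. Specialising the budget constraint \eqref{eq:ihbc} to $X\equiv 1\in\mathcal{X}$ gives $\int_{\mathbf{\Omega}}Y^{n}\ud\mu\leq 1$, so $(Y^{n})$ is bounded in $L^{1}_{+}(\mu)$. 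Koml\'os's theorem then supplies forward convex combinations $\bar Y^{n}=\sum_{k\geq n}\alpha^{n}_{k}Y^{k}$ converging $\mu$-a.e.\ to some $Y\in L^{1}_{+}(\mu)$, and by Lemma \ref{lem:cry} each $\bar Y^{n}\in\mathcal{Y}$. The matching combinations $\bar h^{n}:=\sum_{k\geq n}\alpha^{n}_{k}h^{k}\leq\bar Y^{n}$ still converge $\mu$-a.e.\ to $h$, hence $h\leq Y$ $\mu$-a.e., and it remains to show $Y\in\mathcal{Y}$.

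To do this — the main technical step, and the expected main obstacle — I write $\bar Y^{n}=\bar\beta^{n}\bar R^{n}$ with $\bar R^{n}=\exp\bigl(-\int_{0}^{\cdot}\bar\beta^{n}_{s}\ud\kappa_{s}\bigr)\bar S^{n}$ and $\bar S^{n}:=\sum_{k\geq n}\alpha^{n}_{k}S^{k}\in\mathcal{S}$, a convex combination of supermartingale deflators. Both $\bar R^{n}$ and $\bar S^{n}$ are then non-negative c\`adl\`ag supermartingales with unit initial value. Applying the Koml\'os-type Fatou convergence theorem for supermartingales (F\"ollmer--Kramkov / Delbaen--Schachermayer Appendix A.1, as used in Kramkov--Schachermayer \cite{ks99}), I take a further round of forward convex combinations under which both $\bar S^{n}$ and $\bar R^{n}$ Fatou-converge at every rational time to c\`adl\`ag supermartingales $S$ and $R$ in $\mathcal{S}$ (Fatou-closedness of $\mathcal{S}$ under convex combinations is the standard fact for SMDs). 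These further combinations preserve $\mu$-a.e.\ convergence of $\bar Y^{n}$ to $Y$. Passing to the limit in the identity $\bar R^{n}=\exp\bigl(-\int_{0}^{\cdot}\bar\beta^{n}_{s}\ud\kappa_{s}\bigr)\bar S^{n}$ and invoking the strict monotonicity of the exponential — exactly as in the uniqueness argument in the convexity proof of Lemma \ref{lem:cry} — uniquely determines a process $\beta\in\mathcal{B}$ with $R=\exp\bigl(-\int_{0}^{\cdot}\beta_{s}\ud\kappa_{s}\bigr)S$, and the limit of $\bar Y^{n}=\bar\beta^{n}\bar R^{n}$ is $Y=\beta R$. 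Hence $Y\in\mathcal{Y}$ and $h\in\mathcal{D}$.

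The delicate point, and main obstacle, is arranging a single sequence of convex-combination coefficients that simultaneously drives $\bar S^{n}$, $\bar R^{n}$ and $\bar\beta^{n}$ to a consistent limit while preserving the exponential identity linking them; this is effectively the content of the companion Fatou-convexity Lemma \ref{lem:fatouconvex} flagged after Lemma \ref{lem:sbc}, and is what allows the limiting $Y$ to be placed back into the structural form of $\mathcal{Y}$ rather than merely into a larger set of supermartingale-dominated processes.
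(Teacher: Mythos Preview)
Your proposal is correct and follows essentially the same route as the paper: pass to dominating deflators in $\mathcal{Y}$, take forward convex combinations (you invoke Koml\'os via the $L^{1}(\mu)$-bound, the paper goes straight to the F\"ollmer--Kramkov supermartingale lemma, but this is cosmetic), use the convexity of $\mathcal{Y}$ from Lemma~\ref{lem:cry}, and then rely on Lemma~\ref{lem:fatouconvex} to Fatou-converge the underlying $R^{n},S^{n}$ simultaneously and place the limit back in $\mathcal{Y}$. You have correctly identified that the simultaneous Fatou convergence of $R^{n}$ and $S^{n}$ under a common choice of coefficients, together with the recovery of the exponential link $R=\exp(-\int\beta\,\ud\kappa)S$, is the heart of the matter, exactly as in the paper's companion Lemma~\ref{lem:fatouconvex}.
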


The proof of Lemma \ref{lem:Dclosed} will be given further
below. First, we use the result of the lemma to establish the
bipolarity result below.

\begin{lemma}[Inter-temporal wealth bipolarity]
\label{lem:itwbp}

Given Lemma \ref{lem:Dclosed}, the set $\mathcal{D}$ is a closed,
convex and solid subset of $L^{0}_{+}(\mu)$, and the the sets
$\mathcal{C}$ and $\mathcal{D}$ satisfy the bipolarity relations
\begin{equation}
\mathcal{C} = \mathcal{D}^{\circ}, \quad \mathcal{D} =
\mathcal{C}^{\circ}.
\label{eq:CD0}
\end{equation}

\end{lemma}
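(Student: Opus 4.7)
The plan is to combine the closedness of $\mathcal{D}$ (granted by Lemma \ref{lem:Dclosed}) with the already-established equality $\mathcal{C}=\mathcal{Y}^{\circ}$ from Lemma \ref{lem:Cprop}, and then use the bipolar theorem to characterise $\mathcal{D}$ as $\mathcal{Y}^{\circ\circ}$. The only real structural work is to verify convexity and solidity of $\mathcal{D}$; the rest is bookkeeping with polar operations.

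First I would verify directly from the definition \eqref{eq:Dy} that $\mathcal{D}$ is convex and solid. Solidity is immediate: if $h\in\mathcal{D}$ with $h\leq Y$ for some $Y\in\mathcal{Y}$, and $0\leq h'\leq h$ $\mu$-a.e., then $h'\leq Y$, so $h'\in\mathcal{D}$. Convexity is inherited from the convexity of $\mathcal{Y}$ (Lemma \ref{lem:cry}): given $h_{1}\leq Y_{1}$ and $h_{2}\leq Y_{2}$ with $Y_{i}\in\mathcal{Y}$ and $\lambda\in[0,1]$, the process $\lambda h_{1}+(1-\lambda)h_{2}$ is dominated by $\lambda Y_{1}+(1-\lambda)Y_{2}\in\mathcal{Y}$. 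Closedness in measure $\mu$ is exactly Lemma \ref{lem:Dclosed}. Thus $\mathcal{D}$ is a closed, convex, solid subset of $L^{0}_{+}(\mu)$ containing $\mathcal{Y}$ (trivially, since each $Y\in\mathcal{Y}$ is dominated by itself).

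The next step is to show $\mathcal{D}=\mathcal{Y}^{\circ\circ}$. By Theorem \ref{thm:bp}(ii) the bipolar $\mathcal{Y}^{\circ\circ}$ is the smallest closed, convex, solid subset of $L^{0}_{+}(\mu)$ containing $\mathcal{Y}$, so the properties of $\mathcal{D}$ just established give $\mathcal{Y}^{\circ\circ}\subseteq\mathcal{D}$. For the reverse inclusion, note that $\mathcal{Y}\subseteq\mathcal{Y}^{\circ\circ}$ (again by Theorem \ref{thm:bp}(ii)) and that $\mathcal{Y}^{\circ\circ}$ is solid; hence any $h\in\mathcal{D}$, being dominated $\mu$-a.e.\ by some $Y\in\mathcal{Y}\subseteq\mathcal{Y}^{\circ\circ}$, lies in $\mathcal{Y}^{\circ\circ}$. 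Thus $\mathcal{D}\subseteq\mathcal{Y}^{\circ\circ}$, and the two sets coincide.

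Finally I assemble the bipolarity in \eqref{eq:CD0}. From \eqref{eq:CY00} of Lemma \ref{lem:Cprop} we have $\mathcal{C}^{\circ}=\mathcal{Y}^{\circ\circ}$, so combining with $\mathcal{D}=\mathcal{Y}^{\circ\circ}$ yields $\mathcal{D}=\mathcal{C}^{\circ}$. Taking polars of both sides and applying \eqref{eq:Cbipolar} gives $\mathcal{D}^{\circ}=\mathcal{C}^{\circ\circ}=\mathcal{C}$, which is the second identity in \eqref{eq:CD0}. All the mathematical heavy lifting has been outsourced: the sufficiency-of-the-budget-constraint argument via Stricker--Yan (giving $\mathcal{C}=\mathcal{Y}^{\circ}$) and the Fatou-type closedness of $\mathcal{D}$ (Lemma \ref{lem:Dclosed}); the present lemma is purely a consequence of the bipolar theorem. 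The only point that could trip one up is forgetting that solidity, not just convexity and closedness, is needed in the appeal to Theorem \ref{thm:bp}(ii), but that is exactly why the enlargement from $\mathcal{Y}$ to $\mathcal{D}$ was introduced in the first place.
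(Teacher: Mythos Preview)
Your proof is correct and follows essentially the same route as the paper: establish that $\mathcal{D}$ is closed, convex and solid, identify $\mathcal{D}$ with $\mathcal{Y}^{\circ\circ}$ via the bipolar theorem, and then combine with $\mathcal{C}^{\circ}=\mathcal{Y}^{\circ\circ}$ and $\mathcal{C}^{\circ\circ}=\mathcal{C}$ from Lemma~\ref{lem:Cprop}. The only cosmetic difference is that for the inclusion $\mathcal{D}\subseteq\mathcal{Y}^{\circ\circ}$ you use solidity of $\mathcal{Y}^{\circ\circ}$ directly, whereas the paper routes this through the budget constraint to first get $\mathcal{D}\subseteq\mathcal{C}^{\circ}$; both are equivalent one-line observations.
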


\begin{proof}

For any $h\in\mathcal{D}$ there will exist an element
$Y\in\mathcal{Y}$ such that $h\leq Y,\,\mu$-almost everywhere. Hence,
the implication \eqref{eq:YC} holds true with $\mathcal{D}$ in place
of $\mathcal{Y}$:
\begin{equation*}
h\in\mathcal{D} \implies
\langle g,h\rangle\leq 1, \quad \forall\,g\in\mathcal{C},
\end{equation*}
which yields the analogue of \eqref{eq:YC0}:
\begin{equation}
\mathcal{D} \subseteq \mathcal{C}^{\circ}.
\label{eq:DsubsetC0}  
\end{equation}
Combining \eqref{eq:CY00} and \eqref{eq:DsubsetC0} we have
\begin{equation}
\mathcal{D} \subseteq \mathcal{Y}^{\circ\circ}.
\label{eq:DY001}  
\end{equation}

Part (ii) of the bipolar theorem, Theorem \ref{thm:bp}, says that
$\mathcal{Y}^{\circ\circ} \supseteq \mathcal{Y}$ and that
$\mathcal{Y}^{\circ\circ}$ is the smallest closed, convex, solid set
which contains $\mathcal{Y}$. But $\mathcal{D}$ is also closed, convex
and solid (closed due to Lemma \ref{lem:Dclosed}, convexity following
easily from the convexity of $\mathcal{Y}$, and solidity is obvious),
and by definition $\mathcal{D}\supseteq\mathcal{Y}$, so we also have
\begin{equation}
\mathcal{D} \supseteq \mathcal{Y}^{\circ\circ}.  
\label{eq:DY002}
\end{equation}
Thus, \eqref{eq:DY001} and \eqref{eq:DY002} give
\begin{equation}
\mathcal{D} = \mathcal{Y}^{\circ\circ}.
\label{eq:DY00}
\end{equation}
In other words, in enlarging from $\mathcal{Y}$ to $\mathcal{D}$ we
have succeeded in reaching the bipolar of the former.

Combining \eqref{eq:DY00} and \eqref{eq:CY00} we see that
$\mathcal{D}$ is the polar of $\mathcal{C}$,
\begin{equation}
\mathcal{D} = \mathcal{C}^{\circ}, 
\label{eq:DC0}
\end{equation}
so we have the second equality in \eqref{eq:CD0}. From \eqref{eq:DC0}
we get $\mathcal{D}^{\circ}=\mathcal{C}^{\circ\circ}$ which, combined
with \eqref{eq:Cbipolar}, yields the first equality in \eqref{eq:CD0},
and the proof is complete.

\end{proof}

It remains to prove Lemma \ref{lem:Dclosed}, which we used above. We
recall the concept of Fatou convergence of stochastic processes from
F\"ollmer and Kramkov \cite{fk97}, that will be needed.

\begin{definition}[Fatou convergence]
\label{def:fc}

Let $(Y^{n})_{n\in\mathbb{N}}$ be a sequence of processes on a
stochastic basis
$(\Omega,\mathcal{F}, \mathbb{F}:=(\mathcal{F}_{t})_{t\geq
0},\mathbb{P})$, uniformly bounded from below, and let $\tau$ be a
dense subset of $\mathbb{R}_{+}$. The sequence
$(Y^{n})_{n\in\mathbb{N}}$ is said to be \textit{Fatou convergent on
  $\tau$} to a process $Y$ if
\begin{equation*}
Y_{t} = \limsup_{s\downarrow t,\,s\in\tau}\limsup_{n\to\infty}Y^{n}_{s}
=   \liminf_{s\downarrow t,\,s\in\tau}\liminf_{n\to\infty}Y^{n}_{s},
\quad \mbox{a.s $\forall\,t\geq 0$}.
\end{equation*}
If $\tau=\mathbb{R}_{+}$, the sequence is simply called \textit{Fatou
convergent}. 
  
\end{definition}

The relevant consequence for our purposes is F\"ollmer and Kramkov
\cite[Lemma 5.2]{fk97}, that for a sequence $(S^{n})_{n\in\mathbb{N}}$
of supermartingales, uniformly bounded from below, with
$S^{n}_{0}=0,\,n\in\mathbb{N}$, there is a sequence
$(Y^{n})_{n\in\mathbb{N}}$ of supermartingales, with
$Y^{n}\in\conv(S^{n},S^{n+1},\ldots)$, and a supermartingale $Y$ with
$Y_{0}\leq 0$, such that $(Y^{n})_{n\in\mathbb{N}}$ is Fatou
convergent on a dense subset $\tau$ of $\mathbb{R}_{+}$ to $Y$. Here,
$\conv(S^{n},S^{n+1},\ldots)$ denotes a convex combination
$\sum_{k=n}^{N(n)}\lambda_{k}S^{k}$ for $\lambda_{k}\in[0,1]$ with
$\sum_{k=n}^{N(n)}\lambda_{k}=1$. The requirement that $S^{n}_{0}=0$
is of course no restriction, since for a supermartingale with (say)
$S^{n}_{0}=1$ (as we shall have when we apply these results below for
supermartingales in $\mathcal{Y}$), we can always subtract the initial
value $1$ to reach a process which starts at zero.

To prove Lemma \ref{lem:Dclosed} we shall need the following lemma on
Fatou convergence of convex combinations of elements in
$\mathcal{R},\mathcal{S}$ and, as a consequence, $\mathcal{Y}$. This
result could instead have been developed in the course of proving
Lemma \ref{lem:Dclosed}, but it simplifies the proof of the latter a
great deal to establish it separately.

\begin{lemma}
\label{lem:fatouconvex}

Let $\tau$ be a dense subset of $\mathbb{R}_{+}$. Let
$(\widetilde{R}^{n})_{n\in\mathbb{N}}$ be a sequence in $\mathcal{R}$,
so given by
\begin{equation*}
\widetilde{R}^{n} = \exp\left(-\int_{0}^{\cdot}
\widetilde{\beta}^{n}_{s}\ud\kappa_{s}\right)\widetilde{S}^{n},
\quad n\in\mathbb{N},  
\end{equation*}
for a sequence $(\widetilde{\beta}^{n})_{n\in\mathbb{N}}$ in
$\mathcal{B}$ and a sequence of supermartingale deflators
$(\widetilde{S}^{n})_{n\in\mathbb{N}}$ in $\mathcal{S}$. Then for
each $n\in\mathbb{N}$ there exist convex combinations
$R^{n}\in\conv(\widetilde{R}^{n},\widetilde{R}^{n+1},\ldots)\in\mathcal{R}$,
$S^{n}\in\conv(\widetilde{S}^{n},\widetilde{S}^{n+1},\ldots)\in\mathcal{S}$,
and a process $\beta^{n}\in\mathcal{B}$ such that
\begin{equation}
R^{n} =
\exp\left(-\int_{0}^{\cdot}\beta^{n}_{s}\ud\kappa_{s}\right)S^{n},
\quad n\in\mathbb{N},  
\label{eq:RnSn}
\end{equation}
and such that the sequence $(R^{n})_{n\in\mathbb{N}}$ (respectively,
$(S^{n})_{n\in\mathbb{N}}$) is Fatou convergent on $\tau$ to to a
supermartingale $R\in\mathcal{R}$ (respectively, $S\in\mathcal{S}$),
with
\begin{equation}
R =
\exp\left(-\int_{0}^{\cdot}\beta_{s}\ud\kappa_{s}\right)S,
\label{eq:RS}
\end{equation}
for a process $\beta\in\mathcal{B}$. As a consequence, the sequence of
inter-temporal wealth deflators
$(Y^{n})_{n\in\mathbb{N}}\in \mathcal{Y}$ given by
$Y^{n}=\beta^{n}R^{n}$ is Fatou convergent on $\tau$ to the element
$Y=\beta R\in\mathcal{Y}$.

\end{lemma}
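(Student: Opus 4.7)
The plan is to apply the F\"ollmer--Kramkov Lemma 5.2 from \cite{fk97} iteratively, extracting convex combinations that produce Fatou convergence simultaneously on both the supermartingale deflator side and the $R$-process side, while keeping the convex weights identical on both sides so as to preserve the structural identity \eqref{eq:RnSn}.

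First I would apply this lemma to $(\widetilde{S}^n-1)_{n\in\mathbb{N}}$, which are supermartingales starting at zero and uniformly bounded below by $-1$. This yields convex combinations $S^n := \sum_{k=n}^{N(n)} \lambda^n_k \widetilde{S}^k$, with $\lambda^n_k\geq 0$ and $\sum_k\lambda^n_k=1$, Fatou convergent on a dense $\tau_1 \subseteq \mathbb{R}_+$ to some supermartingale $S$ with $S_0\leq 1$. By convexity of $\mathcal{S}$ each $S^n\in\mathcal{S}$, and that $S\in\mathcal{S}$ follows because for any $X\in\mathcal{X}$ the Fatou limit of the non-negative supermartingales $S^n X$ is itself a supermartingale. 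Using the same weights $\lambda^n_k$, set $R^n := \sum_{k=n}^{N(n)} \lambda^n_k \widetilde{R}^k$. By Lemma \ref{lem:cry}, $R^n\in\mathcal{R}$, and $R^n\leq S^n$ pointwise since each $\widetilde{R}^k\leq\widetilde{S}^k$ and identical weights appear on both sides.

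Next I would apply the same Fatou result to $(R^n-1)_{n\in\mathbb{N}}$, obtaining further tail convex combinations $\widehat{R}^n := \sum_{k=n}^{M(n)} \mu^n_k R^k \in \mathcal{R}$ Fatou convergent on a dense $\tau_2$ to a supermartingale $R$ with $R_0\leq 1$. Setting $\widehat{S}^n := \sum_{k=n}^{M(n)} \mu^n_k S^k \in \mathcal{S}$ with the same weights $\mu^n_k$, Fatou convergence of $(S^n)$ on $\tau_1$ transfers to tail convex combinations, so $(\widehat{S}^n)$ is Fatou convergent on $\tau := \tau_1 \cap \tau_2$ (still dense) to the same limit $S$. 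Relabelling $\widehat{R}^n,\widehat{S}^n$ as $R^n,S^n$, each pair lies in $\mathcal{R}\times\mathcal{S}$ with $R^n\leq S^n$ and $R^n_0 = S^n_0 = 1$, and both sequences Fatou converge on $\tau$. Exactly as in the proof of Lemma \ref{lem:cry}, the inequality $R^n\leq S^n$ yields $\beta^n\in\mathcal{B}$ with $R^n = \exp\bigl(-\int_0^\cdot \beta^n_s\ud\kappa_s\bigr)S^n$, giving \eqref{eq:RnSn}; the analogous construction applied to the limits $R$ and $S$ produces $\beta\in\mathcal{B}$ satisfying \eqref{eq:RS}, whence $R\in\mathcal{R}$. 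Fatou convergence of $Y^n := \beta^n R^n$ to $Y := \beta R \in \mathcal{Y}$ on $\tau$ then follows from $R^n\to R$ on $\tau$ together with the pointwise convergence $\int_0^\cdot \beta^n_s\ud\kappa_s = -\log(R^n/S^n) \to -\log(R/S) = \int_0^\cdot \beta_s\ud\kappa_s$ on $\tau$.

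The principal obstacle is the bookkeeping around the second application of the F\"ollmer--Kramkov Lemma: one must verify that Fatou convergence of $(S^n)$ survives the passage to further tail convex combinations used to force Fatou convergence of $(R^n)$, and that both resulting convergences can be arranged on a common dense subset of $\mathbb{R}_+$. A secondary delicate point is extracting the density $\beta$ from the limit ratio $R/S$ inside $\mathcal{B}$, which is handled exactly as in Lemma \ref{lem:cry} using the monotonicity and c\`adl\`ag properties inherited in the Fatou limit.
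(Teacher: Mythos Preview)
Your approach is essentially the same as the paper's: both rely on F\"ollmer--Kramkov \cite[Lemma 5.2]{fk97} to extract Fatou-convergent convex combinations, use the pointwise inequality $R^{n}\leq S^{n}$ (arising from identical weights) to produce $\beta^{n}\in\mathcal{B}$, and pass to the limit to obtain \eqref{eq:RS}. Your two-stage application of the F\"ollmer--Kramkov lemma, with the same weights carried across at each stage and the intersection $\tau_{1}\cap\tau_{2}$, is a more careful version of what the paper does somewhat elliptically. The one genuine difference in execution is how $S\in\mathcal{S}$ is established: you argue directly that for each $X\in\mathcal{X}$ the Fatou limit of the non-negative supermartingales $S^{n}X$ is itself a supermartingale (which is correct, using right-continuity of $X$ and Fatou's lemma), whereas the paper applies \cite[Lemma 5.2]{fk97} once more to the sequence $X\widetilde{S}^{n}$; your route is cleaner and avoids a further bookkeeping issue with the weights.

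One point deserves a caution, and it is shared with the paper's own argument: the final step, Fatou convergence of $Y^{n}=\beta^{n}R^{n}$ to $Y=\beta R$, does not follow from convergence of the cumulative integrals $\int_{0}^{\cdot}\beta^{n}_{s}\ud\kappa_{s}\to\int_{0}^{\cdot}\beta_{s}\ud\kappa_{s}$ alone, since convergence of primitives need not give pointwise convergence of densities. The paper simply asserts this step; your justification via $-\log(R^{n}/S^{n})$ is more explicit but still does not close the gap. For the downstream application in Lemma~\ref{lem:Dclosed} (where convergence in $\mu=\kappa\times\mathbb{P}$ is what is ultimately needed), this looseness can be repaired, but as written neither argument fully delivers Fatou convergence of $(Y^{n})$ in the sense of Definition~\ref{def:fc}.
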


\begin{proof}

Since $\mathcal{R}$ and $\mathcal{S}$ are both convex sets, the
convex combinations $R^{n},S^{n}$ of the lemma lie in
$\mathcal{R},\mathcal{S}$, respectively. Indeed, by similar reasoning
as in the proof of Lemma \ref{lem:cry}, for non-negative constants
$(\lambda_{k})_{k=n}^{N(n)}$ such that
$\sum_{k=n}^{N(n)}\lambda_{k}=1$, we have
\begin{equation}
R^{n} := \sum_{k=n}^{N(n)}\lambda_{k}\widetilde{R}^{k} =
\sum_{k=n}^{N(n)}\lambda_{k}\exp\left(-\int_{0}^{\cdot}
\widetilde{\beta}^{k}_{s}\ud\kappa_{s}\right)\widetilde{S}^{k} \leq
\sum_{k=n}^{N(n)}\lambda_{k}\widetilde{S}^{k} =: S^{n},
\label{eq:RnSnbeta}
\end{equation}
which shows that $R^{n}\leq S^{n}$, implying $R^{n}\in\mathcal{R}$ and
$S^{n}\in\mathcal{S}$, and implying the existence of
$\beta^{n}\in\mathcal{B}$ such that \eqref{eq:RnSn} holds. From
F\"ollmer and Kramkov \cite[Lemma 5.2]{fk97} there exist
supermartingales $R$ and $S$ such that the sequences
$(R^{n})_{n\in\mathbb{N}}$ and $(S^{n})_{n\in\mathbb{N}}$ Fatou
converge on $\tau$ to $R$ and $S$ respectively.

Define a supermartingale sequence
$(\widetilde{V}^{n})_{n\in\mathbb{N}}$ by
$\widetilde{V}^{n}:=X\widetilde{S}^{n}$, for $X\in\mathcal{X}$. Once
again from \cite[Lemma 5.2]{fk97} there exists a sequence
$(V^{n})_{n\in\mathbb{N}}$ of supermartingales with each
$V^{n}\in\conv(\widetilde{V}^{n},\widetilde{V}^{n+1},\ldots)=
X\conv(\widetilde{S}^{n},\widetilde{S}^{n+1},\ldots)$, and a
supermartingale $V$, such that $(V^{n})_{n\in\mathbb{N}}$ is Fatou
convergent on $\tau$ to $V$. Since
$V^{n}\in X\conv(\widetilde{S}^{n},\widetilde{S}^{n+1},\ldots)$ for
each $n\in\mathbb{N}$, we have $V^{n}=XS^{n}$, for
$S^{n}\in\conv(\widetilde{S}^{n},\widetilde{S}^{n+1},\ldots)$. Because
the sequence $(S^{n})_{n\in\mathbb{N}}$ is Fatou convergent on $\tau$
to the supermartingale $S$, the sequence
$(V^{n})_{n\in\mathbb{N}}=(XS^{n})_{n\in\mathbb{N}}$ is Fatou
convergent on $\tau$ to the supermartingale $V=XS$. Since $XS$ is a
supermartingale and $X\in\mathcal{X}$, we have $S\in\mathcal{S}$.

The same argument as in the last paragraph, now applied to the
supermartingale sequence $(\widetilde{W}^{n})_{n\in\mathbb{N}}$
defined by $\widetilde{W}^{n}:=X\widetilde{R}^{n}$, establishes that
$R\in\mathcal{S}$. But because $R^{n}\leq S^{n},\,\mu$-a.e., we have
$R\leq S,\,\mu$-a.e., so there exists a process $\beta\in\mathcal{B}$
such that \eqref{eq:RS} holds, and thus in fact we have
$R\in\mathcal{R}\subseteq\mathcal{S}$. We have thus established that
the sequence in \eqref{eq:RnSn} Fatou converges to the process $R$ in
\eqref{eq:RS}, and this implies that the sequence
$(Y^{n})_{n\in\mathbb{N}}$ defined by $Y^{n}:=\beta^{n}R^{n}$ must
Fatou converge to a process $\beta R=:Y\in\mathcal{Y}$, since the same
process $\beta^{n}\in\mathcal{B}$ appears in the sequence in
\eqref{eq:RnSn} as well as in the sequence $(Y^{n})_{n\in\mathbb{N}}$,
and the proof is complete.
   
\end{proof}

With this preparation, we can now prove Lemma \ref{lem:Dclosed}.
  
\begin{proof}[Proof of Lemma \ref{lem:Dclosed}]
  
Let $(h^{n})_{n\in\mathbb{N}}$ be a sequence in $\mathcal{D}$,
converging $\mu$-a.e. to some $h\in L^{0}_{+}(\mu)$. We want to show
that $h\in\mathcal{D}$.

Since $h^{n}\in\mathcal{D}$, for each $n\in\mathbb{N}$ we have
$h^{n}\leq\widehat{Y}^{n},\,\mu$-a.e for some element
$\widehat{Y}^{n}\in\mathcal{Y}$ given by
$\widehat{Y}^{n}=\widehat{\beta}^{n}\widehat{R}^{n}$, for a
non-negative process $\widehat{\beta}^{n}\in\mathcal{B}$ and a
supermartingale $\widehat{R}^{n}\in\mathcal{R}$ given by
$\widehat{R}^{n}=\exp\left(-\int_{0}^{\cdot}
\widehat{\beta}^{n}_{s}\ud\kappa_{s}\right)\widehat{S}^{n}$, for a
supermartingale deflator $\widehat{S}^{n}\in\mathcal{S}$.

Consider a convex combination
\begin{equation}
Y^{n} = \sum_{k=n}^{N(n)}\lambda_{k}\widehat{Y}^{k} \geq
\sum_{k=n}^{N(n)}\lambda_{k}h^{k}, \quad n\in\mathbb{N},
\label{eq:Ynconvex}
\end{equation}
for non-negative constants $(\lambda_{k})_{k=n}^{N(n)}$ such that
$\sum_{k=n}^{N(n)}\lambda_{k}=1$.

By convexity of the set $\mathcal{Y}$, we have $Y^{n}\in\mathcal{Y}$
for each $n\in\mathbb{N}$, so there exist processes
$\beta^{n}\in\mathcal{B},R^{n}\in\mathcal{R},S^{n}\in\mathcal{S}$
such that
\begin{equation*}
Y^{n} = \beta^{n}R^{n} = \beta^{n}\exp\left(-\int_{0}^{\cdot}
\beta^{n}_{s}\ud\kappa_{s}\right)S^{n}, \quad n\in\mathbb{N}. 
\end{equation*}

By convexity of the sets $\mathcal{R}$ and $\mathcal{S}$ there will exist
sequences $(\widetilde{R}^{n})_{n\in\mathbb{N}}$ in $\mathcal{R}$ and
$(\widetilde{S}^{n})_{n\in\mathbb{N}}$ in $\mathcal{S}$, such that
$R^{n}\in\conv(\widetilde{R}^{n},\widetilde{R}^{n+1},\ldots)\in\mathcal{R}$,
and  
$S^{n}\in\conv(\widetilde{S}^{n},\widetilde{S}^{n+1},\ldots)\in\mathcal{S}$,
and these convex combinations will in general differ from that in
\eqref{eq:Ynconvex}. We thus have the analogue of \eqref{eq:RnSnbeta}:
\begin{equation*}
R^{n} =
\sum_{k=n}^{\widetilde{N}(n)}\widetilde{\lambda}_{k}\widetilde{R}^{k}
= \sum_{k=n}^{\widetilde{N}(n)}\widetilde{\lambda}_{k}\exp\left(
-\int_{0}^{\cdot}\widetilde{\beta}^{k}_{s}\ud\kappa_{s}\right)
\widetilde{S}^{k} \leq
\sum_{k=n}^{\widetilde{N}(n)}\widetilde{\lambda}_{k}\widetilde{S}^{k}
= S^{n}, \quad n\in\mathbb{N}, 
\end{equation*}
for some sequence $(\widetilde{\beta}^{n})_{n\in\mathbb{N}}$ in
$\mathcal{B}$. and non-negative constants
$(\widetilde{\lambda}_{k})_{k=n}^{\widetilde{N}(n)}$ such that
$\sum_{k=n}^{\widetilde{N}(n)}\widetilde{\lambda}_{k}=1$. By Lemma
\ref{lem:fatouconvex}, the sequences $(R^{n})_{n\in\mathbb{N}}$ in
$\mathcal{R}$ and $(S^{n})_{n\in\mathbb{N}}$ in $\mathcal{S}$ Fatou
converge on a dense subset $\tau$ of $\mathbb{R}_{+}$ to
supermartingales $R\in\mathcal{R}$ and $S\in\mathcal{S}$,
respectively, and such that \eqref{eq:RS} holds for some process
$\beta\in\mathcal{B}$. Then, again by Lemma \ref{lem:fatouconvex}, the
sequence $(Y^{n})_{n\in\mathbb{N}}$ Fatou converges on $\tau$ to
$Y=\beta R\in\mathcal{Y}$. So the first term in \eqref{eq:Ynconvex}
converges to $Y\in\mathcal{Y}$ while the last term converges to $h$ as
$n\to\infty$, so the inequality in \eqref{eq:Ynconvex} gives $h\leq Y$,
and thus $h\in\mathcal{D}$.

\end{proof}

With the inter-temporal wealth bipolarity of Lemma \ref{lem:itwbp}, we
can establish Proposition \ref{prop:abp}.

\begin{proof}[Proof of Proposition \ref{prop:abp}]

From the properties of $\mathcal{C}$ established in Lemma
\ref{lem:Cprop}, we have all the claimed properties of $\mathcal{C}$
in items (i) and (ii). The corresponding assertions for $\mathcal{D}$
follow from Lemma \ref{lem:itwbp}.

For item (iii), consider first the set $\mathcal{D}$. Since the wealth
process $X\equiv 1\in\mathcal{X}$, the constant function
$g\equiv 1\in\mathcal{C}$, and the budget constraint (equivalently,
the polar relation \eqref{eq:bp1}) in this case gives
$\int_{\mathbf{\Omega}}h\ud\mu\leq 1$, so $\mathcal{D}$ is bounded in
$L^{1}(\mu)$ and hence in $L^{0}(\mu)$.

For the $L^{0}$-boundedness of $\mathcal{C}$, we shall find a positive
element $\overline{h}\in\mathcal{D}$ and show that $\mathcal{C}$ is
bounded in $L^{1}(\overline{h}\ud\mu)$, and hence bounded in
$L^{0}(\mu)$. Since the constant supermartingale
$S\equiv 1\in\mathcal{S}$ and since the constant process
$\beta\equiv\alpha>0$ for some positive constant $\alpha$, lies in
$\mathcal{B}$, we can take
$\mathcal{Y}\owns
\overline{Y}_{t}:=\alpha\exp(-\alpha\kappa_{t}),\,t\geq 0$, and then
choose $\mathcal{D}\owns\overline{h}\equiv\overline{Y}$. We see that
$\overline{h}\in\mathcal{D}$ is strictly positive except on a set of
$\mu$-measure zero. Then, the budget constraint (equivalently, the
polar relation \eqref{eq:bp2}) gives
$\int_{\mathbf{\Omega}}g\overline{h}\ud\mu\leq 1$ for any
$g\in\mathcal{C}$. Thus, $\mathcal{C}$ is bounded in
$L^{1}(\overline{h}\ud\mu)$ and hence bounded in $L^{0}(\mu)$.

\end{proof}

\subsection{On approaches to establishing bipolarity}
\label{subsec:oaeb}

In this section we compare the approach we have taken to establishing
the polar relations \eqref{eq:bp1} and \eqref{eq:bp2} in Proposition
\ref{prop:abp}, between the enlarged primal and dual domains
$\mathcal{C}$ and $\mathcal{D}$, with the approach taken by Bouchard
and Pham \cite{bp04}. This is instructive and will indicate how we
have been able to strengthen the statement of the final duality
result, in essence by proving, as opposed to partially assuming, the
polar relations, which is what Bouchard and Pham \cite{bp04} were
compelled to do.

\subsubsection{The Kramkov-Schachermayer approach}
\label{subsubsec:ksa}

Our approach is in the spirit of the recipe created by Kramkov and
Schachermayer \cite{ks99,ks03} for the terminal wealth utility
maximisation problem, adapted to an inter-temporal framework. One
begins with a supermartingale property linking the elements of the
primal and dual domains. (In the terminal wealth problem one has the
admissible wealth processes $X\in\mathcal{X}$ and the supermartingale
deflators $S\in\mathcal{S}$, with $XS$ a supermartingale for each
$X\in\mathcal{X}$ and $S\in\mathcal{S}$.)  Here, we invoke the
additional dual controls $\beta\in\mathcal{B}$, and from these and the
supermartingale deflators we construct the supermartingales
$R\in\mathcal{R}$ and the inter-temporal wealth deflators
$Y\in\mathcal{Y}$ according to the relations in \eqref{eq:R} and
\eqref{eq:Y}, repeated below for the case $y=1$, so for
$S\in\mathcal{S}$:
\begin{equation}
R: = \exp\left(-\int_{0}^{\cdot}\beta_{s}\ud\kappa_{s}\right)S, \quad
Y := \beta R, \quad \beta\in\mathcal{B},\,S\in\mathcal{S}.
\label{eq:RY}
\end{equation}
Observe that the deflators $Y\in\mathcal{Y}$ are given by
$Y=\nu S,\,S\in\mathcal{S}$, with the process $\nu$ given by
\begin{equation}
\nu_{t} :=
\beta_{t}\exp\left(-\int_{0}^{t}\beta_{s}\ud\kappa_{s}\right), \quad
t\geq 0, \quad \beta\in\mathcal{B}.
\label{eq:nu}
\end{equation}
We see that $\nu$ satisfies
\begin{equation*}
\int_{0}^{\infty}\nu_{t}\ud\kappa_{t} = 1 -
\exp\left(-\int_{0}^{\infty}\beta_{t}\ud\kappa_{t}\right) \leq 1,
\quad \mbox{almost surely},
\end{equation*}
and hence also
$\mathbb{E}\left[\int_{0}^{\infty}\nu_{t}\ud\kappa_{t}\right] \leq 1$
or, in the notation of \eqref{eq:notation},
\begin{equation}
\langle\nu,1\rangle \leq 1.
\label{eq:nuconstraint1}
\end{equation}
This structure of dual variables for wealth-path-dependent utility
maximisation problems, namely a multiplicative auxiliary control which
augments the classical deflators and which satisfies a constraint of
the form in \eqref{eq:nuconstraint1}, is not uncommon, and we shall
see a similar feature shortly when we describe the Bouchard and Pham
\cite{bp04} approach. The key insight that arises in our approach is
that this auxiliary control must have the very specific structure in
\eqref{eq:nu}, which confers convexity to the dual domain.

From \eqref{eq:RY} and the properties of $S\in\mathcal{S}$, we get
that the process $M$ in \eqref{eq:fsc} is a supermartingale, and in
turn this gives the budget constraint \eqref{eq:ihbc}, repeated below
for the case $x=y=1$, as a necessary condition for admissibility of a
wealth process:
\begin{equation*}
\mathbb{E}\left[\int_{0}^{\infty}X_{t}Y_{t}\ud\kappa_{t}\right] \leq
1, \quad \forall\, X\in\mathcal{X},\, Y\in\mathcal{Y}.  
\end{equation*}
Then, enlarging the primal domain from $\mathcal{X}$ to $\mathcal{C}$,
Lemma \ref{lem:suffX} establishes that the budget constraint is also a
sufficient condition for admissibility, so we obtain the polar
properties of Lemma \ref{lem:Cprop} for $\mathcal{C}$
\begin{equation*}
\mathcal{C} = \mathcal{Y}^{\circ}, \quad \mathcal{C}^{\circ} =
\mathcal{Y}^{\circ\circ}, \quad \mathcal{C}^{\circ\circ} =
\mathcal{C}, 
\end{equation*}
which imply that $\mathcal{C}$ is a closed, convex and solid (CCS)
subset of $L^{0}_{+}(\mu)$. 

Now to the dual side of the story. Using the particular form of the
dual variables in \eqref{eq:RY} we established in Lemma \ref{lem:cry}
that the dual domain $\mathcal{Y}$ is convex. This convexity is passed
on to the enlarged dual domain $\mathcal{D}$. Then, again using the
structure in \eqref{eq:RY}, and in particular that the deflators
$Y\in\mathcal{Y}$ contain the supermartingales
$R\in\mathcal{R},S\in\mathcal{S}$, we are able to exploit Fatou
convergence of supermartingales to show that $\mathcal{D}$ is closed
with respect to the topology of convergence in $\mu$-measure. This,
along with the convexity and (obvious) solidity of $\mathcal{D}$,
shows that $\mathcal{D}$ is also a CCS subset of $L^{0}_{+}(\mu)$,
matching the property we obtained for $\mathcal{C}$. In particular, we
obtain the key result that the enlargement from $\mathcal{Y}$ to
$\mathcal{D}$ has taken as to the bipolar of the original dual domain:
\begin{equation*}
\mathcal{D} = \mathcal{Y}^{\circ\circ}.  
\end{equation*}
This result then readily combines with the earlier polarity properties
of $\mathcal{C}$ to establish the perfect bipolarity relations
\eqref{eq:bp1} and \eqref{eq:bp2}.

The message is that we have made the Kramkov and Schachermayer
\cite{ks99,ks03} prescription for obtaining bipolarity work: begin
with a supermartingale property to arrive at the correct definition of
the dual variables, make no assumptions regarding convexity and closed
properties of either the primal or dual domains, show that with a
natural enlargement of these domains to obtain solid sets, all the
required CCS properties of the domains, and hence bipolarity,
follows. This bipolarity is then the bedrock of the subsequent
program for the proof of the duality theorem, as we shall see in
Section \ref{sec:pdt}.

This methodology is to be contrasted with the approach in \cite{bp04},
which we now describe.

\subsubsection{The Bouchard-Pham approach}
\label{subsubsec:bpa}

The first difference between our methodology and that of Bouchard and
Pham \cite{bp04} is that in \cite{bp04}, the dual domain (let us call
in $\mathcal{D}^{\mathrm{BP}}$) is \textit{defined} as the polar of
the primal domain. Over a finite horizon $T<\infty$, the dual
variables $Y^{\mathrm{BP}}$ and dual domain are thus defined according
to
\begin{equation*}
\mathcal{D}^{\mathrm{BP}} := \left\{Y^{\mathrm{BP}}\geq 0:
\mathbbm{E}\left[\int_{0}^{T}X_{t}Y^{\mathrm{BP}}_{t}\ud\kappa_{t}\right]\leq
1, \forall\,X\in\mathcal{X}\right\},
\end{equation*}
(see the definition of the set $\mathcal{D}(y)$ in \cite[Page
584]{bp04}). In other words, 
\begin{equation}
\mathcal{D}^{\mathrm{BP}} := \mathcal{X}^{0},
\label{eq:BPpolar}
\end{equation}
by assumption. This automatically confers the CCS property to the dual
domain, but the statement of the result is weakened, having been
obtained by definition. The reason that this approach had to be
adopted, we conjecture, is that the authors of \cite{bp04} did not
have to hand the specific structure of the dual variables in
\eqref{eq:RY} that emerges in our approach.

This conjecture is reinforced by the reasoning which now follows. In a
subsequent refinement Bouchard and Pham \cite{bp04} show that, under
an assumption called {\bf (Hf)} (namely, that $\kappa$ decomposes into
a continuous density plus a linear combination of indicator functions
of the form $\mathbbm{1}_{\{\tau\leq t\}},\,t\in[0,T]$, for any
$\mathbb{F}$-stopping time $\tau$), processes of the form
$\nu^{\mathrm{BP}}Z^{\mathcal{M}}$ lie in their dual domain, where
$Z^{\mathcal{M}}$ is the density process of an ELMM, and
$\nu^{\mathrm{BP}}$ is any process satisfying
$\langle\nu^{\mathrm{BP}},1\rangle_{T}:=\mathbb{E}\left[\int_{0}^{T}
  \nu^{\mathrm{BP}}_{t}\ud\kappa_{t}\right]\leq 1$. The similarity
with the structure we have in \eqref{eq:nuconstraint1} is clear. If we
denote the set of processes $\nu^{\mathrm{BP}}Z^{\mathcal{M}}$ by
$\mathcal{Z}^{\mathrm{BP}}$, then under their additional assumption
{\bf (Hf)}, Bouchard and Pham \cite{bp04} are able to re-cast their
dual problem as a minimisation over the convex hull of
$\mathcal{Z}^{\mathrm{BP}}$. This, therefore, is the analogue, under
NFLVR and over a finite horizon, of the dual structure we have used,
but with two caveats. First, they have to use the convex hull of
$\mathcal{Z}^{\mathrm{BP}}$, because the set
$\mathcal{Z}^{\mathrm{BP}}$ is not known to be convex in
general. Second, this lack of convexity is due to the fact that the
authors of \cite{bp04} do not have the particular structure of the
auxiliary dual control $\nu^{\mathrm{BP}}$ that we have found in
\eqref{eq:nu}, a structure that was crucial in our establishing the
convexity of our dual domain. All that is known about the processes
$\nu^{\mathrm{BP}}$ is that they satisfy
$\langle\nu^{\mathrm{BP}},1\rangle_{T}\leq 1$, and this is not enough
to afford a proof of convexity of $\mathcal{Z}^{\mathrm{BP}}$.

Finally, the discussion above also explains why the bulk of the
analysis in \cite{bp04} is carried out on the primal side of the
problem. Since the definition in \eqref{eq:BPpolar} confers the CCS
property to the dual domain by assumption, the remaining work in
\cite{bp04} is concerned with enlarging the primal domain to confer
solidity and proving the remaining polarity relation, as can be
verified by examining \cite[Section 5]{bp04}. 

In summary, we are able to strengthen the duality statement in
\cite{bp04} to any horizon and under NUPBR, by making the broad
pattern of the Kramkov and Schachermayer \cite{ks99,ks03} program for
bipolarity work, without having to assume the associated properties of
either the primal or dual domain. Instead, we begin with a natural
supermartingale property linking the primal and dual elements, thus
identifying the natural dual space for the problem, along with its
particular structure, so that the closed and convex features of the
domains, from which the existence and uniqueness of the optimisers are
ultimately deduced, are demonstrated, as opposed to being assumed.

\section{Proofs of the duality theorems}
\label{sec:pdt}

In this section we prove the abstract duality of Theorem
\ref{thm:adt}, from which the concrete duality of Theorem
\ref{thm:itwd} is then deduced. Throughout this section, we have in
place the result of Proposition \ref{prop:abp}, as this bipolarity is
the starting point of the duality proof. The proof of Theorem
\ref{thm:adt} proceeds via a series of lemmas. The procedure has a
similar flavour to that of Kramkov and Schachermayer \cite{ks99,ks03}
for an abstract duality proof in the context of the terminal wealth
utility maximisation problem, with variations where appropriate, and
with an additional result, Proposition \ref{prop:owp}, which gives the
additional characterisation \eqref{eq:owp} of the optimal wealth
process as well as the uniformly integrable martingale property of the
process
$\widehat{M}:=\widehat{X}(x)\widehat{R}(y) +
\int_{0}^{\cdot}\widehat{X}_{s}(x)\widehat{Y}_{s}(y)\ud\kappa_{s}$. This
proposition also establishes that the process
$\widehat{X}(x)\widehat{R}(y)$ is a potential, and that its limiting
value is $\lim_{t\to\infty}\widehat{X}_{t}(x)\widehat{R}_{t}(y)=0$
almost surely.

Let us state the basic properties that are taken as given throughout
this section.

\begin{fact}
\label{fact:bf}
  
Throughout this section, assume that the utility function satisfies
the Inada conditions \eqref{eq:inada}, that the sets $\mathcal{C}$ and
$\mathcal{D}$ satisfy all the properties in Proposition
\ref{prop:abp}, and that the abstract primal and dual value functions
in \eqref{eq:vfabs} and \eqref{eq:dvfabs} satisfy the minimal
conditions in \eqref{eq:minimal}.

\end{fact}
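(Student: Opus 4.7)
The statement labelled \texttt{fact:bf} is not a claim to be proved but a declaration of the standing hypotheses in force throughout Section \ref{sec:pdt}. It bundles together three ingredients that have already been either imposed (Inada conditions on $U$), established (the bipolarity, solidity, convexity and closedness of $\mathcal{C}$ and $\mathcal{D}$ in Proposition \ref{prop:abp}), or assumed as the minimal well-posedness condition ($u(x)>-\infty$ for all $x>0$ and $v(y)<\infty$ for all $y>0$). Consequently, there is no theorem-style argument to sketch: the role of the \texttt{fact} environment here is bookkeeping, ensuring that subsequent lemmas and the proof of Theorem \ref{thm:adt} can be stated without re-quoting the full list of hypotheses each time.

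What I would do in place of a proof is record, very briefly, how each of the three ingredients will be deployed in the sequel, so the reader understands why it is convenient to collect them at this point. The Inada conditions and the convex conjugacy machinery they entail (strict monotonicity, strict convexity, $V'=-I$, and the bi-dual relation) will be used to pass between marginal utility and the dual optimiser, and to extract strict concavity/convexity of $u$ and $-v$ from that of $U$ and $V$. The bipolarity and solidity of $\mathcal{C}$ and $\mathcal{D}$, together with their $L^{0}(\mu)$-boundedness (and the $L^{1}(\mu)$-boundedness of $\mathcal{D}$), will underpin the existence of optimisers via Koml\'os-type arguments and Fatou convergence, since they guarantee that minimising/maximising sequences admit convex combinations converging $\mu$-a.e.\ to an element of the same set. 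Finally, the finiteness conditions \eqref{eq:minimal} on $u$ and $v$ will be needed to rule out degenerate cases in the proof of conjugacy \eqref{eq:conjugacy}, and to invoke Kramkov--Schachermayer-type arguments guaranteeing that $v$ inherits the behaviour at the endpoints needed for the bidual identity.

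In short, no obstacle is anticipated here because nothing is being asserted beyond the aggregation of prior hypotheses; the genuine mathematical work is deferred to the subsequent lemmas and to the proofs of Theorem \ref{thm:adt} and Proposition \ref{prop:owp}, where these standing assumptions enter at well-identified points.
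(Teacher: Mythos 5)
You are right that Fact \ref{fact:bf} is not a provable claim but a declaration of standing hypotheses, and the paper likewise offers no proof --- it simply states the Fact and notes that all subsequent lemmata take it as given. Your reading of its role and of how each ingredient is deployed in Section \ref{sec:pdt} matches the paper's intent.
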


\textit{All subsequent lemmata and propositions in this section
  implicitly take Fact \ref{fact:bf} as given.}

The first step is to establish weak duality. 

\begin{lemma}[Weak duality]
\label{lem:weakdual}
  
The primal and dual value functions $u(\cdot)$ and $v(\cdot)$ of
\eqref{eq:vfabs} and \eqref{eq:dvfabs} satisfy the weak duality bounds
\begin{equation}
v(y) \geq \sup_{x>0}[u(x)-xy], \quad y>0, \quad
\mbox{equivalently} \quad u(x) \leq \inf_{y>0}[v(y)+xy], \quad
x>0. 
\label{eq:weakd}
\end{equation}
As a result, $u(x)$ is finitely valued for all $x>0$. Moreover, we
have the limiting relations
\begin{equation}
\limsup_{x\to\infty}\frac{u(x)}{x} \leq 0, \quad
\liminf_{y\to\infty}\frac{v(y)}{y} \geq 0. 
\label{eq:limiting}
\end{equation}

\end{lemma}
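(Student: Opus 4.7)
The plan is to derive the weak duality bound directly from the Fenchel--Young inequality \eqref{eq:VUbound} combined with the budget constraint encoded in the bipolarity relations of Proposition \ref{prop:abp}. For any $g\in\mathcal{C}(x)$ and $h\in\mathcal{D}(y)$, the pointwise inequality $U(g)\leq V(h)+gh$ holds $\mu$-a.e. Integrating against $\mu$ (the left-hand side being well-defined in $[-\infty,\infty)$ by the assumption $u(x)>-\infty$, the right-hand side being bounded below) yields
\[
\int_{\mathbf{\Omega}}U(g)\ud\mu \;\leq\; \int_{\mathbf{\Omega}}V(h)\ud\mu + \langle g,h\rangle.
\]
By Proposition \ref{prop:abp}, since $\mathcal{C}(x)=x\mathcal{C}$ and $\mathcal{D}(y)=y\mathcal{D}$ with $\mathcal{C}=\mathcal{D}^{\circ}$, we have $\langle g,h\rangle\leq xy$. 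Taking the supremum over $g\in\mathcal{C}(x)$ and then the infimum over $h\in\mathcal{D}(y)$ gives $u(x)\leq v(y)+xy$, equivalently \eqref{eq:weakd}.

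Finiteness of $u(x)$ is then immediate: by hypothesis $u(x)>-\infty$ for all $x>0$, while the weak duality bound and the hypothesis $v(y)<\infty$ for all $y>0$ give $u(x)\leq v(y)+xy<\infty$ (pick any $y$).

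For the limiting relations, I would divide the basic inequality $u(x)\leq v(y)+xy$ by $x$ and by $y$ separately and pass to the limit. Fixing $y>0$ and sending $x\to\infty$, the finiteness of $v(y)$ gives $\limsup_{x\to\infty}u(x)/x\leq y$; since $y>0$ was arbitrary, letting $y\downarrow 0$ yields $\limsup_{x\to\infty}u(x)/x\leq 0$. Symmetrically, rewriting as $u(x)-xy\leq v(y)$, dividing by $y$ and fixing $x>0$, the finiteness of $u(x)$ gives $\liminf_{y\to\infty}v(y)/y\geq -x$; letting $x\downarrow 0$ yields $\liminf_{y\to\infty}v(y)/y\geq 0$.

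There is no real obstacle here: the argument is a clean application of Fenchel--Young and the bipolarity established in Proposition \ref{prop:abp}. The only mild subtlety worth flagging in the write-up is the justification that the integrals and the sup/inf operations interact correctly --- in particular that $\int_{\mathbf{\Omega}}U(g)\ud\mu$ is well-defined (possibly $-\infty$, but at least $u(x)$ is finite on the relevant range, from the finiteness step above), so that one may freely rearrange and take the supremum over $g\in\mathcal{C}(x)$ without concern for $\infty-\infty$ indeterminacies.
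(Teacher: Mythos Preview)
Your proof is correct and follows essentially the same approach as the paper: Fenchel--Young pointwise, integrate, apply the budget constraint $\langle g,h\rangle\leq xy$ from Proposition~\ref{prop:abp}, then optimise over $g$ and $h$. Your treatment of the finiteness of $u(x)$ and the limiting relations is in fact slightly more direct than the paper's (which routes finiteness through concavity of $u$ and leaves the limits as an exercise), but the substance is identical.
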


\begin{proof}

Recall the inequality \eqref{eq:vubound1}. By the same argument
carried out in the measure space $(\mathbf{\Omega},\mathcal{G},\mu)$
we have, for any $g\in\mathcal{C}(x)$ and $h\in\mathcal{D}(y)$, using
the polarity relations in \eqref{eq:bp1} and \eqref{eq:bp2},
\begin{eqnarray}
\int_{\mathbf{\Omega}}U(g)\ud\mu & \leq &
\int_{\mathbf{\Omega}}U(g)\ud\mu + xy -
\int_{\mathbf{\Omega}}gh\ud\mu \nonumber \\
& = & \int_{\mathbf{\Omega}}(U(g)-gh)\ud\mu + xy \nonumber \\ 
& \leq & \int_{\mathbf{\Omega}}V(h)\ud\mu + xy, \quad x,y>0, 
\label{eq:ucbound}         
\end{eqnarray}
the last inequality a consequence of \eqref{eq:VUbound}. Maximising
the left-hand-side of \eqref{eq:ucbound} over $g\in\mathcal{C}(x)$ and
minimising the right-hand-side over $h\in\mathcal{D}(y)$ gives
$u(x)\leq v(y)+xy$ for all $x,y>0$, and \eqref{eq:weakd} follows.

The assumption that $v(y)<\infty$ for all $y>0$ immediately yields
that $u(x)$ is finitely valued for some $x>0$. Since $U(\cdot)$ is
strictly increasing and strictly concave, and given the convexity of
$\mathcal{C}$, these properties are inherited by $u(\cdot)$, which is
therefore finitely valued for all $x>0$. Finally, the relations in
\eqref{eq:weakd} easily lead to those in \eqref{eq:limiting}.
  
\end{proof}

Above, we obtained concavity and monotonicity of $u(\cdot)$ by using
convexity of $\mathcal{C}$ and the properties of $U(\cdot)$. Similar
arguments show that $v(\cdot)$ is strictly decreasing and strictly
convex. We shall see these properties reproduced in proofs of
existence and uniqueness of the optimisers for $u(\cdot),v(\cdot)$.

The next step is to give a compactness lemma for the dual
domain.

\begin{lemma}[Compactness lemma for $\mathcal{D}$]
\label{lem:Dcompact}

Let $(\tilde{h}^{n})_{n\in\mathbb{N}}$ be a sequence in
$\mathcal{D}$. Then there exists a sequence $(h^{n})_{n\in\mathbb{N}}$
with $h^{n}\in\conv(\tilde{h}^{n}, \tilde{h}^{n+1},\ldots)$, which
converges $\mu$-a.e. to an element $h\in\mathcal{D}$ that is
$\mu$-a.e. finite.

\end{lemma}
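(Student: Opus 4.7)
The plan is to combine the $L^1(\mu)$-boundedness of $\mathcal{D}$ (established in Proposition \ref{prop:abp}(iii)) with a Koml\'os-type extraction of almost-sure convergent convex combinations, and then use the convexity and closedness in $\mu$-measure of $\mathcal{D}$ (also from Proposition \ref{prop:abp}) to conclude that the limit lies in $\mathcal{D}$.

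More precisely, I would first note that, since $(\tilde{h}^{n})_{n\in\mathbb{N}}\subseteq\mathcal{D}$ and $\mathcal{D}$ is $L^{1}(\mu)$-bounded, we have $\sup_{n}\int_{\mathbf{\Omega}}\tilde{h}^{n}\ud\mu\leq C$ for some finite $C$. I then invoke a version of the Koml\'os/von Weizs\"acker subsequence-splitting theorem, or equivalently the well-known lemma of Delbaen and Schachermayer (stated as Lemma A1.1 in \cite{ds94}): for any sequence of non-negative measurable functions that is bounded in $L^{1}(\mu)$, one can find forward convex combinations $h^{n}\in\conv(\tilde{h}^{n},\tilde{h}^{n+1},\ldots)$ that converge $\mu$-a.e.\ to some $h\in L^{0}_{+}(\mu)$ with $h$ finite $\mu$-a.e. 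Finiteness $\mu$-a.e.\ of the limit $h$ is automatic from Fatou's lemma, since $\int_{\mathbf{\Omega}}h\ud\mu\leq\liminf_{n}\int_{\mathbf{\Omega}}h^{n}\ud\mu\leq C<\infty$.

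It then remains to show $h\in\mathcal{D}$. Each convex combination $h^{n}$ lies in $\mathcal{D}$ by the convexity of $\mathcal{D}$ (Proposition \ref{prop:abp}(i)), so $(h^{n})_{n\in\mathbb{N}}$ is a sequence in $\mathcal{D}$ converging $\mu$-a.e., hence in $\mu$-measure, to $h$. Since $\mathcal{D}$ is closed in the topology of convergence in measure $\mu$ (again Proposition \ref{prop:abp}(i)), we conclude $h\in\mathcal{D}$.

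The only real subtlety is the invocation of the Koml\'os-type lemma in the abstract finite measure space $(\mathbf{\Omega},\mathcal{G},\mu)$, but this causes no difficulty: the result is purely measure-theoretic and holds on any $\sigma$-finite measure space. Thus the proof reduces essentially to citing Proposition \ref{prop:abp}(i) and (iii) and a standard forward-convex-combination lemma, with no new analytic work required beyond what has already been set up in Section \ref{sec:bpr}.
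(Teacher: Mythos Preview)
Your proposal is correct and follows essentially the same route as the paper: invoke the Delbaen--Schachermayer forward-convex-combination lemma (adapted to the finite measure space $(\mathbf{\Omega},\mathcal{G},\mu)$), use boundedness of $\mathcal{D}$ to guarantee $\mu$-a.e.\ finiteness of the limit, and use convexity of $\mathcal{D}$ to keep the $h^{n}$ inside $\mathcal{D}$. The only cosmetic difference is in the last step: you cite closedness of $\mathcal{D}$ in $\mu$-measure from Proposition~\ref{prop:abp}(i), whereas the paper instead verifies $h\in\mathcal{D}$ directly via the polar characterisation $\mathcal{D}=\mathcal{C}^{\circ}$, using Fatou to show $\int_{\mathbf{\Omega}}gh\,\mathrm{d}\mu\leq 1$ for every $g\in\mathcal{C}$; since that Fatou argument is precisely how closedness is obtained, the two are equivalent.
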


\begin{proof}

Delbaen and Schachermayer \cite[Lemma A1.1]{ds94} (adapted from a
probability space to the finite measure space
$(\mathbf{\Omega},\mathcal{G},\mu)$) implies the existence of a
sequence $(h^{n})_{n\in\mathbb{N}}$, with
$h^{n}\in\conv(\tilde{h}^{n}, \tilde{h}^{n+1},\ldots)$, which
converges $\mu$-a.e. to an element $h$ that is $\mu$-a.e. finite
because $\mathcal{D}$ is bounded in $L^{0}(\mu)$ (the finiteness also
from \cite[Lemma A1.1]{ds94}). By convexity of $\mathcal{D}$, each
$h^{n},\,n\in\mathbb{N}$ lies in $\mathcal{D}$. Finally, by Fatou's
lemma, for every $g\in\mathcal{C}$ we have
\begin{equation*}
\int_{\mathbf{\Omega}}gh\ud\mu =
\int_{\mathbf{\Omega}}\liminf_{n\to\infty}gh^{n}\ud\mu \leq
\liminf_{n\to\infty}\int_{\mathbf{\Omega}}gh^{n}\ud\mu \leq 1,  
\end{equation*}
so that $h\in\mathcal{D}$.

\end{proof}

Results in the style of Lemma \ref{lem:Dcompact} are standard in these
duality proofs. We will see a similar result for the primal domain
$\mathcal{C}$ shortly.

The next step in the chain of results we need is a uniform
integrability result for the family
$(V^{-}(h))_{h\in\mathcal{D}(y)}$. This will facilitate a proof of
existence and uniqueness of the dual minimiser, and of the conjugacy
for the value functions by establishing the first relation in
\eqref{eq:conjugacy}.

\begin{lemma}[Uniform integrability of
$(V^{-}(h))_{h\in\mathcal{D}(y)}$]
\label{lem:Vminush}

The family $(V^{-}(h))_{h\in\mathcal{D}(y)}$ is uniformly integrable,
for any $y>0$.
  
\end{lemma}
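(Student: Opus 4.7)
The plan is to establish uniform integrability of $\{V^{-}(h):h\in\mathcal{D}(y)\}$ on the finite measure space $(\mathbf{\Omega},\mathcal{G},\mu)$ via its classical equivalent formulation: $L^{1}(\mu)$-boundedness plus equiabsolute continuity. The single analytic ingredient will be a sublinear domination of $V^{-}$ by the identity, which, combined with the $L^{1}(\mu)$-bound on $\mathcal{D}(y)$ from Proposition \ref{prop:abp}(iii), absorbs possible concentrations of mass.

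The first step is the following calibration of $V^{-}$: for every $\epsilon>0$ there exists a finite constant $c_{\epsilon}$ such that $V^{-}(y)\leq \epsilon y + c_{\epsilon}$ for all $y>0$. This uses only the Inada conditions on $U$, which translate, via $V^{\prime}(y) = -I(y) = -(U^{\prime})^{-1}(y)$, into $V^{\prime}(y)\to 0$ as $y\to\infty$. Fixing $\epsilon>0$, pick $y_{\epsilon}$ such that $|V^{\prime}(y)|\leq\epsilon$ for $y\geq y_{\epsilon}$; integrating $V^{\prime}$ from $y_{\epsilon}$ up to $y\geq y_{\epsilon}$ yields $V(y)\geq V(y_{\epsilon})-\epsilon(y-y_{\epsilon})$, whence $V^{-}(y)\leq \epsilon y + |V(y_{\epsilon})| + \epsilon y_{\epsilon}$ on $[y_{\epsilon},\infty)$. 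On the complementary range $(0,y_{\epsilon}]$, the continuous function $V^{-}$ is bounded (with finite limit $\max(-U(\infty),0)$ as $y\downarrow 0$), and absorbing that bound into $c_{\epsilon}$ yields the global sublinear estimate.

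With this bound in hand, given any $\eta>0$ I would choose $\epsilon=\eta/(2y)$ and the associated $c_{\epsilon}$. For any $h\in\mathcal{D}(y)$ and any measurable $A\subseteq\mathbf{\Omega}$, using the $L^{1}(\mu)$-bound $\int_{\mathbf{\Omega}}h\,\ud\mu\leq y$ from Proposition \ref{prop:abp}(iii),
\begin{equation*}
\int_{A}V^{-}(h)\,\ud\mu \;\leq\; \frac{\eta}{2y}\int_{A}h\,\ud\mu + c_{\epsilon}\mu(A) \;\leq\; \frac{\eta}{2} + c_{\epsilon}\mu(A).
\end{equation*}
Choosing $\delta = \eta/(2c_{\epsilon})$ renders the right-hand side strictly less than $\eta$ whenever $\mu(A)<\delta$, uniformly over $h\in\mathcal{D}(y)$. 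Taking $A=\mathbf{\Omega}$ with $\epsilon=1$ gives in addition the uniform $L^{1}(\mu)$-bound $\int_{\mathbf{\Omega}}V^{-}(h)\,\ud\mu\leq y + c_{1}\mu(\mathbf{\Omega})$. Together, these two properties are precisely the equiabsolute-continuity characterisation of uniform integrability on the finite measure space $(\mathbf{\Omega},\mathcal{G},\mu)$.

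The main obstacle, to the extent there is one, is the sublinear calibration of $V^{-}$; the rest is routine. Unlike the subsequent existence arguments for the dual optimiser, no Koml\'os-type or Fatou-convergence compactness is required here: the conclusion is a purely pointwise asymptotic fact about $V^{-}$ (driven by $V^{\prime}(\infty)=0$) coupled with the $L^{1}(\mu)$-boundedness of $\mathcal{D}(y)$ established in Proposition \ref{prop:abp}(iii).
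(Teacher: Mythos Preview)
Your proof is correct and takes a genuinely different route from the paper. The paper uses the de la Vall\'ee--Poussin criterion: it takes $\varphi$ to be the inverse of $-V$, shows $\int_{\mathbf{\Omega}}\varphi(V^{-}(h))\ud\mu\leq\varphi(0)+y$ for all $h\in\mathcal{D}(y)$, and then verifies $\varphi(x)/x\to\infty$ via l'H\^opital and the Inada condition $I(\infty)=0$. You instead work with the dual statement directly: the same Inada condition $V^{\prime}(\infty)=0$ gives the sublinear bound $V^{-}(z)\leq\epsilon z+c_{\epsilon}$, and you feed this into the $\epsilon$--$\delta$ characterisation of uniform integrability (equiabsolute continuity plus $L^{1}$-boundedness) on the finite measure space. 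Both arguments rest on exactly the same two ingredients---the asymptotic flatness of $V$ at infinity and the $L^{1}(\mu)$-bound on $\mathcal{D}(y)$ from Proposition~\ref{prop:abp}(iii)---but your version is slightly more self-contained in that it avoids invoking de la Vall\'ee--Poussin as a black box, while the paper's version has the virtue of making the superlinear test function $\varphi$ explicit (which can be handy if one later needs quantitative control).
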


The style of the proof is along identical lines to Kramkov and
Schachermayer \cite[Lemma 3.2]{ks99}, but we give the proof for
completeness.

\begin{proof}[Proof of Lemma \ref{lem:Vminush}]

Since $V(\cdot)$ is decreasing, we need only consider the case where
$V(\infty):=\lim_{y\to\infty}V(y)=-\infty$ (otherwise there is nothing
to prove). Let $\varphi:(-V(0),-V(\infty))\mapsto(0,\infty)$ denote the
inverse of $-V(\cdot)$. Then $\varphi(\cdot)$ is strictly
increasing. For any $h\in\mathcal{D}(y)$ (so
$\int_{\mathbf{\Omega}}h\ud\mu\leq y$) we have, for all $y>0$,
\begin{equation*}
\int_{\mathbf{\Omega}}\varphi(V^{-}(h))\ud\mu \leq \varphi(0) +
\int_{\mathbf{\Omega}}\varphi(-V(h))\ud\mu = \varphi(0) +
\int_{\mathbf{\Omega}}h\ud\mu \leq \varphi(0) + y.
\end{equation*}
Then, using l'H\^opital's rule and the change of variable
$\varphi(x)=y\iff x=-V(y)$, and recalling the function
$I(\cdot)=-V^{\prime}(\cdot)$ (the inverse of marginal utility
$U^{\prime}(\cdot)$), we have
\begin{equation}
\lim_{x\to -V(\infty)}\frac{\varphi(x)}{x} =
\lim_{x\to\infty}\frac{\varphi(x)}{x} =
\lim_{y\to\infty}\frac{y}{-V(y)} =
\lim_{y\to\infty}\frac{1}{I(y)} = +\infty, 
\label{eq:phioverx}
\end{equation}
on using the Inada conditions \eqref{eq:inada}. The
$L^{1}(\mu)$-boundedness of $\mathcal{D}(y)$ means we can apply the de
la Vall\'ee-Poussin theorem (Pham \cite[Theorem A.1.2]{pham09}) which,
combined with \eqref{eq:phioverx}, implies the uniform integrability
of the family $(V^{-}(h))_{h\in\mathcal{D}(y)}$.

\end{proof}

One can can now proceed to prove either existence of a unique
optimiser in the dual problem, or conjugacy of the value functions. We
proceed first with the former, followed by conjugacy.

\begin{lemma}[Dual existence]
\label{lem:dualexis}

The optimal solution $\widehat{h}(y)\in\mathcal{D}(y)$ to the dual
problem \eqref{eq:dvfabs} exists and is unique, so that $v(\cdot)$ is
strictly convex.  
  
\end{lemma}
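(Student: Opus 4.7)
The plan is to follow the classical Kramkov--Schachermayer recipe for dual existence, tailored to the present setting where the compactness of $\mathcal{D}$ (Lemma \ref{lem:Dcompact}) and the uniform integrability of $(V^{-}(h))_{h\in\mathcal{D}(y)}$ (Lemma \ref{lem:Vminush}) are already in hand.

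First I would take a minimising sequence $(\tilde{h}^{n})_{n\in\mathbb{N}}$ in $\mathcal{D}(y)$, so that $\int_{\mathbf{\Omega}}V(\tilde{h}^{n})\ud\mu\downarrow v(y)$ as $n\to\infty$. By Lemma \ref{lem:Dcompact}, there exists a sequence of convex combinations $h^{n}\in\conv(\tilde{h}^{n},\tilde{h}^{n+1},\ldots)$ converging $\mu$-a.e. to some $\mu$-a.e. finite $\widehat{h}(y)\in\mathcal{D}(y)$. Since $V(\cdot)$ is convex, $\int_{\mathbf{\Omega}}V(h^{n})\ud\mu\leq \sup_{k\geq n}\int_{\mathbf{\Omega}}V(\tilde{h}^{k})\ud\mu$, so $(h^{n})_{n\in\mathbb{N}}$ is itself a minimising sequence. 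Continuity of $V(\cdot)$ gives $V(h^{n})\to V(\widehat{h}(y))$ $\mu$-a.e.

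Next I would pass to the limit. Write $V(h^{n})=V^{+}(h^{n})-V^{-}(h^{n})$. Fatou's lemma gives $\int_{\mathbf{\Omega}}V^{+}(\widehat{h}(y))\ud\mu\leq \liminf_{n\to\infty}\int_{\mathbf{\Omega}}V^{+}(h^{n})\ud\mu$. Meanwhile, by the uniform integrability of $(V^{-}(h))_{h\in\mathcal{D}(y)}$ from Lemma \ref{lem:Vminush}, combined with $\mu$-a.e. convergence of $V^{-}(h^{n})$ to $V^{-}(\widehat{h}(y))$, Vitali's convergence theorem yields $\int_{\mathbf{\Omega}}V^{-}(h^{n})\ud\mu\to \int_{\mathbf{\Omega}}V^{-}(\widehat{h}(y))\ud\mu$. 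Combining these,
\begin{equation*}
\int_{\mathbf{\Omega}}V(\widehat{h}(y))\ud\mu \leq \liminf_{n\to\infty}\int_{\mathbf{\Omega}}V(h^{n})\ud\mu = v(y).
\end{equation*}
Since $\widehat{h}(y)\in\mathcal{D}(y)$ the reverse inequality is automatic, so $\widehat{h}(y)$ is a dual minimiser.

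For uniqueness, I would exploit strict convexity of $V(\cdot)$. If $\widehat{h}^{1},\widehat{h}^{2}\in\mathcal{D}(y)$ were two distinct minimisers, then $\bar{h}:=\tfrac{1}{2}(\widehat{h}^{1}+\widehat{h}^{2})\in\mathcal{D}(y)$ by convexity of $\mathcal{D}(y)$, and strict convexity of $V$ on the set $\{\widehat{h}^{1}\neq \widehat{h}^{2}\}$ (which has positive $\mu$-measure) would give $\int_{\mathbf{\Omega}}V(\bar{h})\ud\mu <\tfrac{1}{2}(\int_{\mathbf{\Omega}}V(\widehat{h}^{1})\ud\mu+\int_{\mathbf{\Omega}}V(\widehat{h}^{2})\ud\mu)=v(y)$, a contradiction. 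Strict convexity of $v(\cdot)$ then follows by the same device applied across different $y$: for $y_{1}\neq y_{2}$ and $\lambda\in(0,1)$, the convex combination $\lambda\widehat{h}(y_{1})+(1-\lambda)\widehat{h}(y_{2})$ lies in $\mathcal{D}(\lambda y_{1}+(1-\lambda)y_{2})$ (using $\mathcal{D}(y)=y\mathcal{D}$ and convexity), and since $\widehat{h}(y_{1})\neq \widehat{h}(y_{2})$ on a set of positive $\mu$-measure (otherwise their integrals against the constant function $1\in\mathcal{C}$ would coincide, contradicting the differing initial values), strict convexity of $V$ yields $v(\lambda y_{1}+(1-\lambda)y_{2})<\lambda v(y_{1})+(1-\lambda)v(y_{2})$.

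The main technical obstacle is the passage to the limit in $\int V(h^{n})\ud\mu$, since $V$ is unbounded below and possibly takes the value $-\infty$ at $+\infty$. The remedy is precisely Lemma \ref{lem:Vminush}: the uniform integrability of the negative part $V^{-}$ along $\mathcal{D}(y)$ is what upgrades Fatou (which alone would only produce an inequality in the wrong direction for $V^{-}$) into a genuine limit, allowing the desired lower semicontinuity of $h\mapsto\int V(h)\ud\mu$ along the convergent sequence.
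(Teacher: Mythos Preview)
Your proof is correct and follows essentially the same route as the paper: take a minimising sequence, pass to convex combinations via Lemma~\ref{lem:Dcompact}, split $V=V^{+}-V^{-}$, apply Fatou to $V^{+}$ and uniform integrability (Lemma~\ref{lem:Vminush}) to $V^{-}$, and conclude by strict convexity of $V$. One small imprecision: your justification that $\widehat{h}(y_{1})\neq\widehat{h}(y_{2})$ on a set of positive $\mu$-measure (``otherwise their integrals against $1\in\mathcal{C}$ would coincide, contradicting the differing initial values'') does not work, since $\int_{\mathbf{\Omega}}\widehat{h}(y)\ud\mu\leq y$ is only an inequality; the correct argument is that equality $\mu$-a.e.\ would force $v(y_{1})=v(y_{2})$, contradicting the strict monotonicity of $v(\cdot)$ (which follows from $V$ being strictly decreasing and $\mathcal{D}(y)=y\mathcal{D}$).
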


\begin{proof}

Fix $y>0$. Let $(h^{n})_{n\in\mathbb{N}}$ be a minimising sequence
in $\mathcal{D}(y)$ for $v(y)<\infty$. That is
\begin{equation}
\lim_{n\to\infty}\int_{\mathbf{\Omega}}V(h^{n})\ud\mu = v(y) < \infty.
\label{eq:maxseqdual}
\end{equation}
By the compactness lemma for $\mathcal{D}$ (and thus also for
$\mathcal{D}(y)=y\mathcal{D}$), Lemma \ref{lem:Dcompact}, we can find
a sequence $(\widehat{h}^{n})_{n\in\mathbb{N}}$ of convex
combinations, so
$\mathcal{D}(y)\owns\widehat{h}^{n}\in \conv(h^{n},
h^{n+1},\ldots),\,n\in\mathbb{N}$, which converges $\mu$-a.e. to some
element $\widehat{h}(y)\in\mathcal{D}(y)$. We claim that
$\widehat{h}(y)$ is the dual optimiser. That is, that we have
\begin{equation}
\int_{\mathbf{\Omega}}V(\widehat{h}(y))\ud\mu = v(y).
\label{eq:dualoptimiser}
\end{equation}
From convexity of $V(\cdot)$ and \eqref{eq:maxseqdual} we deduce that
\begin{equation*}
\lim_{n\to\infty}\int_{\mathbf{\Omega}}V(\widehat{h}^{n})\ud\mu \leq
\lim_{n\to\infty}\int_{\mathbf{\Omega}}V(h^{n})\ud\mu = v(y),
\end{equation*}
which, combined with the obvious inequality
$v(y)\leq
\lim_{n\to\infty}\int_{\mathbf{\Omega}}V(\widehat{h}^{n})\ud\mu$ means
that we also have, further to \eqref{eq:maxseqdual},
\begin{equation*}
\lim_{n\to\infty}\int_{\mathbf{\Omega}}V(\widehat{h}^{n})\ud\mu =
v(y).  
\end{equation*}
In other words
\begin{equation}
\lim_{n\to\infty}\int_{\mathbf{\Omega}}V^{+}(\widehat{h}^{n})\ud\mu -
\lim_{n\to\infty}\int_{\mathbf{\Omega}}V^{-}(\widehat{h}^{n})\ud\mu =
v(y) < \infty,    
\label{eq:Vplusminus}
\end{equation}
and note therefore that both integrals in \eqref{eq:Vplusminus} are
finite.

From Fatou's lemma, we have
\begin{equation}
\lim_{n\to\infty}\int_{\mathbf{\Omega}}V^{+}(\widehat{h}^{n})\ud\mu
\geq \int_{\mathbf{\Omega}}V^{+}(\widehat{h}(y))\ud\mu.  
\label{eq:Vplus}
\end{equation}
From Lemma \ref{lem:Vminush} we have uniform integrability of
$(V^{-}(\widehat{h}^{n}))_{n\in\mathbb{N}}$, so that
\begin{equation}
\lim_{n\to\infty}\int_{\mathbf{\Omega}}V^{-}(\widehat{h}^{n})\ud\mu
= \int_{\mathbf{\Omega}}V^{-}(\widehat{h}(y))\ud\mu.  
\label{eq:Vminus}
\end{equation}
Thus, using \eqref{eq:Vplus} and \eqref{eq:Vminus} in
\eqref{eq:Vplusminus}, we obtain
\begin{equation*}
v(y) \geq \int_{\mathbf{\Omega}}V(\widehat{h}(y))\ud\mu,
\end{equation*}
which, combined with the obvious inequality
$v(y)\leq\int_{\mathbf{\Omega}}V(\widehat{h}(y))\ud\mu$, yields
\eqref{eq:dualoptimiser}. The uniqueness of the dual optimiser
follows from the strict convexity of $V(\cdot)$, as does the strict
convexity of $v(\cdot)$. For this last claim, fix $y_{1}<y_{2}$ and
$\lambda\in(0,1)$, note that
$\lambda\widehat{h}(y_{1}) +
(1-\lambda)\widehat{h}(y_{2})\in\mathcal{D}(\lambda y_{1} +
(1-\lambda)y_{2})$ (yet must be sub-optimal for
$v(\lambda y_{1}+(1-\lambda)y_{2})$ as it is not guaranteed to equal
$\widehat{h}(\lambda y_{1}+(1-\lambda)y_{2})$) and therefore, using
the strict convexity of $V(\cdot)$,
\begin{equation*}
v(\lambda y_{1} + (1-\lambda)y_{2})  \leq
\int_{\mathbf{\Omega}}V\left(\lambda\widehat{h}(y_{1}) +
  (1-\lambda)\widehat{h}(y_{2})\right)\ud\mu < \lambda v(y_{1}) +
(1-\lambda)v(y_{2}). 
\end{equation*}

\end{proof}

We now establish conjugacy of the value functions. The method is
similar to the classical method of proof in Kramkov and Schachermayer
\cite[Lemma 3.4]{ks99}, and works by bounding the elements in the
primal domain to create a compact set for the weak$*$ topology
$\sigma(L^{\infty},L^{1})$ on $L^{\infty}(\mu)$,\footnote{Recall that
  a sequence $(g^{n})_{n\in\mathbb{N}}$ in $L^{\infty}(\mu)$ converges
  to $g\in L^{\infty}(\mu)$ with respect to the weak$*$ topology
  $\sigma(L^{\infty},L^{1})$ if and only if
  $(\langle g^{n},h\rangle)_{n\in\mathbb{N}}$ converges to
  $\langle g,h\rangle$ for each $h\in L^{1}(\mu)$.} so as to apply the
minimax theorem, involving a maximisation over a compact set and a
minimisation over a subset of a vector space. This uses the fact that
the dual domain is bounded in $L^{1}(\mu)$.

For the convenience of the reader here is the minimax theorem as we
shall apply it (see Strasser \cite[Theorem 45.8]{strasser85}).

\begin{theorem}[Minimax]
\label{thm:rminimax}

Let $\mathcal{X}$ be a $\sigma(E^{\prime},E)$-compact convex subset of
the topological dual $E^{\prime}$ of a normed vector space $E$, and
let $\mathcal{Y}$ be a convex subset of $E$. Assume that
$f:\mathcal{X}\times\mathcal{Y}\to\mathbb{R}$ satisfies the following
conditions:

\begin{enumerate}

\item $x\mapsto f(x,y)$ is continuous and concave on $\mathcal{X}$ for
  every $y\in\mathcal{Y}$;

\item $y\mapsto f(x,y)$ is convex on $\mathcal{Y}$ for every
  $x\in\mathcal{X}$.
  
\end{enumerate}

Then:
\begin{equation*}
\sup_{x\in\mathcal{X}}\inf_{y\in\mathcal{Y}}f(x,y) =
\inf_{y\in\mathcal{Y}}\sup_{x\in\mathcal{X}}f(x,y). 
\end{equation*}
  
\end{theorem}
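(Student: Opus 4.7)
The plan is to prove the nontrivial inequality
\[
\sup_{x\in\mathcal{X}}\inf_{y\in\mathcal{Y}}f(x,y) \;\geq\; \inf_{y\in\mathcal{Y}}\sup_{x\in\mathcal{X}}f(x,y),
\]
since the reverse bound is automatic from $f(x,y)\leq\sup_{x^{\prime}}f(x^{\prime},y)$. Fix $\alpha<\inf_{y}\sup_{x}f(x,y)$ and, for each $y\in\mathcal{Y}$, set $B_{y}:=\{x\in\mathcal{X}:f(x,y)\geq\alpha\}$. Each $B_{y}$ is convex (by concavity of $f(\cdot,y)$) and $\sigma(E^{\prime},E)$-closed (by continuity of $f(\cdot,y)$), hence $\sigma(E^{\prime},E)$-compact as a closed subset of $\mathcal{X}$. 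If the family $(B_{y})_{y\in\mathcal{Y}}$ enjoys the finite intersection property, then compactness gives $\bigcap_{y}B_{y}\neq\emptyset$; any $x^{*}$ in this intersection satisfies $\inf_{y}f(x^{*},y)\geq\alpha$, so that $\sup_{x}\inf_{y}f(x,y)\geq\alpha$, and letting $\alpha\uparrow\inf_{y}\sup_{x}f(x,y)$ closes the argument.

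The heart of the proof is thus the finite intersection property. Given $y_{1},\ldots,y_{n}\in\mathcal{Y}$, I would define $F:\mathcal{X}\to\mathbb{R}^{n}$ by $F(x):=(f(x,y_{1}),\ldots,f(x,y_{n}))$ and form the convex set $K:=F(\mathcal{X})-\mathbb{R}^{n}_{+}\subseteq\mathbb{R}^{n}$ (convex because each coordinate of $F$ is concave on the convex set $\mathcal{X}$, and subtracting the orthant preserves this property). Suppose for contradiction that $\bigcap_{i=1}^{n}B_{y_{i}}=\emptyset$, equivalently $\vec\alpha:=(\alpha,\ldots,\alpha)\notin K$. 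Finite-dimensional Hahn--Banach separation then yields a nonzero $\lambda=(\lambda_{1},\ldots,\lambda_{n})\in\mathbb{R}^{n}$ with $\langle\lambda,\vec\alpha\rangle\geq\sup_{z\in K}\langle\lambda,z\rangle$; since $K$ contains arbitrarily large negative values in each coordinate, necessarily $\lambda_{i}\geq 0$ for every $i$, and after normalising $\sum_{i}\lambda_{i}=1$.

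Setting $\bar y:=\sum_{i}\lambda_{i}y_{i}\in\mathcal{Y}$ (by convexity of $\mathcal{Y}$) and invoking convexity of $f(x,\cdot)$, the separation inequality then delivers, for every $x\in\mathcal{X}$,
\[
\alpha \;=\; \langle\lambda,\vec\alpha\rangle \;\geq\; \sum_{i=1}^{n}\lambda_{i}f(x,y_{i}) \;\geq\; f(x,\bar y),
\]
so $\sup_{x}f(x,\bar y)\leq\alpha$, contradicting $\alpha<\inf_{y}\sup_{x}f(x,y)$. The main obstacle is this separation step: one must justify that the separating functional can be chosen with non-negative coordinates, and this is precisely what the recession direction $-\mathbb{R}^{n}_{+}$ built into $K$ secures. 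Once this linear-algebraic point is in place, the remainder is a routine compactness-plus-Jensen manipulation. Since the statement is the classical Ky Fan / Sion minimax theorem, in practice one would simply invoke Strasser \cite[Theorem~45.8]{strasser85}, as the paper does.
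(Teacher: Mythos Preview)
Your proof is correct, but note that the paper does not actually prove this theorem: it is stated as a classical result and attributed to Strasser \cite[Theorem 45.8]{strasser85}, with no argument given. The paper simply records the statement ``for the convenience of the reader'' before applying it in Lemma \ref{lem:conjugacy}. So there is no paper proof to compare against.

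That said, your sketch is a sound rendition of the standard Ky Fan--Sion argument: the easy inequality is immediate; for the hard one you fix a level $\alpha$ below the min--sup value, realise the problem as showing a family of weak*-closed (hence compact) convex superlevel sets has nonempty intersection, reduce via compactness to the finite intersection property, and establish the latter by a finite-dimensional separation argument in which the recession cone $-\mathbb{R}^{n}_{+}$ forces the separating functional to have non-negative coordinates, after which convexity of $f(x,\cdot)$ lets you pass to the barycentre $\bar y\in\mathcal{Y}$ and derive a contradiction. All steps are standard and correctly justified; in particular, your observation that the recession direction built into $K$ is what pins down the sign of the separating vector is exactly the crux. Your closing remark, that in practice one simply cites the result, matches precisely what the paper does.
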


Here is the conjugacy result for the primal and dual value functions.

\begin{lemma}[Conjugacy]
\label{lem:conjugacy}
  
The dual value function in \eqref{eq:dvfabs} satisfies the conjugacy
relation
\begin{equation*}
v(y) = \sup_{x>0}[u(x)-xy], \quad \mbox{for each $y>0$},
\end{equation*}
where $u(\cdot)$ is the primal value function in \eqref{eq:vfabs}.

\end{lemma}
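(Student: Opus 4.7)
By weak duality (Lemma \ref{lem:weakdual}) we already have $v(y)\geq\sup_{x>0}[u(x)-xy]$, so the task is to establish the reverse inequality $v(y)\leq\sup_{x>0}[u(x)-xy]$ for each fixed $y>0$. Rather than applying the minimax theorem (Theorem \ref{thm:rminimax}) directly, the plan is to exhibit a specific $\hat{x}>0$ (depending on $y$) such that $v(y)\leq u(\hat{x})-\hat{x}y$, built from the dual optimiser $\hat{h}=\hat{h}(y)\in\mathcal{D}(y)$ supplied by Lemma \ref{lem:dualexis}, together with the bipolarity relations of Proposition \ref{prop:abp}.

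First, since the Inada conditions give $I(0+)=+\infty$ and hence $V(0+)=+\infty$, and since $v(y)=\int_{\mathbf{\Omega}}V(\hat{h})\ud\mu<\infty$, the dual optimiser $\hat{h}$ must be strictly positive $\mu$-a.e. Consequently $\hat{g}:=I(\hat{h})=-V'(\hat{h})\in L^{0}_{+}(\mu)$ is well defined and finite $\mu$-a.e., with the pointwise Fenchel identity $V(\hat{h})=U(\hat{g})-\hat{g}\hat{h}$. I then derive a variational inequality from the optimality of $\hat{h}$: for arbitrary $h\in\mathcal{D}(y)$ and $\lambda\in(0,1)$ the convexity of $\mathcal{D}(y)$ guarantees $h_{\lambda}:=(1-\lambda)\hat{h}+\lambda h\in\mathcal{D}(y)$, and combining minimality of $\hat{h}$ with the convexity of $V$, I would divide by $\lambda$ and pass to the limit $\lambda\downarrow 0$ to obtain
\begin{equation*}
\int_{\mathbf{\Omega}}\hat{g}(h-\hat{h})\ud\mu \leq 0, \quad \forall\, h\in\mathcal{D}(y).
\end{equation*}
Setting $\hat{x}:=\tfrac{1}{y}\int_{\mathbf{\Omega}}\hat{g}\hat{h}\ud\mu$ and rewriting any $h\in\mathcal{D}(y)$ as $h=yh'$ with $h'\in\mathcal{D}$, this reads $\langle\hat{g},h'\rangle\leq\hat{x}$ for every $h'\in\mathcal{D}$. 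The bipolarity $\mathcal{C}=\mathcal{D}^{\circ}$ of Proposition \ref{prop:abp} then yields $\hat{g}\in\mathcal{C}(\hat{x})$, and the complementary relation $\mathcal{D}=\mathcal{C}^{\circ}$ applied to $\hat{h}/y\in\mathcal{D}$ forces $\langle\hat{g},\hat{h}\rangle\leq\hat{x}y$; combined with the definition of $\hat{x}$, this gives equality $\langle\hat{g},\hat{h}\rangle=\hat{x}y$.

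Collecting terms via the Fenchel identity,
\begin{equation*}
v(y) = \int_{\mathbf{\Omega}}V(\hat{h})\ud\mu = \int_{\mathbf{\Omega}}U(\hat{g})\ud\mu - \langle\hat{g},\hat{h}\rangle = \int_{\mathbf{\Omega}}U(\hat{g})\ud\mu - \hat{x}y \leq u(\hat{x}) - \hat{x}y \leq \sup_{x>0}[u(x)-xy],
\end{equation*}
which is the required inequality. The main obstacle I anticipate is the justification of the passage to the limit $\lambda\downarrow 0$ in the derivation of the variational inequality: the difference quotient $[V(h_{\lambda})-V(\hat{h})]/\lambda$ must be controlled uniformly in $\lambda\in(0,1)$ by a $\mu$-integrable majorant, which \emph{a priori} requires integrability of $\hat{g}h$ and $\hat{g}\hat{h}$ before the identity $\langle\hat{g},\hat{h}\rangle=\hat{x}y$ has been established. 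I would resolve this apparent circularity by an approximation argument, replacing $\hat{h}$ by a truncation $\hat{h}\vee\tfrac{1}{n}\wedge n$ on which the derivative bound becomes trivial, deriving the variational inequality on the truncation, and passing to the limit using the uniform integrability of $(V^{-}(h))_{h\in\mathcal{D}(y)}$ from Lemma \ref{lem:Vminush} together with monotone convergence on the positive part.
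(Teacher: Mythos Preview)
Your approach is genuinely different from the paper's. The paper applies the minimax theorem to the truncated primal balls $\mathcal{B}_{n}:=\{g\in L^{0}_{+}(\mu):g\leq n\}$, obtaining $\sup_{x>0}[u(x)-xy]=\lim_{n}v_{n}(y)$ with $v_{n}(y):=\inf_{h\in\mathcal{D}(y)}\int V_{n}(h)\,\mathrm{d}\mu$, and then shows $\lim_{n}v_{n}(y)\geq v(y)$ via the compactness lemma and the uniform integrability of $V^{-}$. You instead try to extract a feasible primal element $\hat{g}=I(\hat{h})$ from the dual minimiser via a first-order (variational) inequality. This is a natural idea, but as written it has two gaps, one minor and one substantive.

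The minor one: the Inada conditions give $-V'(0+)=I(0+)=+\infty$, not $V(0+)=+\infty$; indeed $V(0+)=U(\infty)$ can be finite (e.g.\ $U(x)=-x^{-1}$). So your argument for $\hat{h}>0$ $\mu$-a.e.\ fails when $U$ is bounded above. It can be repaired: perturb $\hat{h}$ towards the strictly positive element $\bar{h}\in\mathcal{D}$ exhibited in the proof of Proposition~\ref{prop:abp}(iii) and use $V'(0+)=-\infty$ to contradict optimality if $\mu(\{\hat{h}=0\})>0$.

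The substantive gap is the one you flag yourself, and your proposed fix does not close it. The difference quotient $\lambda^{-1}[V(h_{\lambda})-V(\hat{h})]$ decreases to $-\hat{g}(h-\hat{h})$, so monotone convergence requires the positive part at some $\lambda_{0}$ to be integrable; this amounts to $\int V^{+}(h)\,\mathrm{d}\mu<\infty$, which fails for general $h\in\mathcal{D}(y)$ when $U(\infty)=+\infty$. Replacing $\hat{h}$ by a two-sided truncation does not help, because the truncated element is no longer the minimiser and therefore yields no variational inequality. More fundamentally, even granting the variational inequality, you need $\hat{x}=y^{-1}\langle\hat{g},\hat{h}\rangle<\infty$ to invoke $\mathcal{C}=\mathcal{D}^{\circ}$. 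By the pointwise Fenchel identity this is equivalent to $\int U(\hat{g})\,\mathrm{d}\mu<\infty$, and without asymptotic elasticity (which is not assumed here, only $v(y)<\infty$ for all $y$) there is no direct bound on $\int U^{+}(I(\hat{h}))\,\mathrm{d}\mu$ that does not already presuppose $\hat{g}\in\mathcal{C}(\hat{x})$ for some finite $\hat{x}$. This is precisely the circularity the minimax route is designed to avoid: by first restricting to bounded $g$, all integrability is trivial, and the delicate limit is taken only in the dual variable where Lemma~\ref{lem:Vminush} supplies uniform control. In the paper's logical order the relations $U'(\hat{g}(x))=\hat{h}(y)$ and $\langle\hat{g},\hat{h}\rangle=xy$ are consequences of conjugacy (Lemma~\ref{lem:derivatives}), not inputs to it.
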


\begin{proof}

For $n\in\mathbb{N}$ denote by $\mathcal{B}_{n}$ the set of elements
in $L^{0}_{+}(\mu)$ lying in a ball of radius $n$:
\begin{equation*}
\mathcal{B}_{n} := \left\{g\in L^{0}_{+}(\mu): g\leq
n,\,\mu-\mathrm{a.e.}\right\}. 
\end{equation*}
The sets $(\mathcal{B}_{n})_{n\in\mathbb{N}}$ are
$\sigma(L^{\infty},L^{1})$-compact. Because each $h\in\mathcal{D}(y)$
is $\mu$-integrable, $\mathcal{D}(y)$ is a closed, convex subset of
the vector space $L^{1}(\mu)$, so we apply the minimax theorem as
given in Theorem \ref{thm:rminimax} to the compact set
$\mathcal{B}_{n}$ ($n$ fixed) and the set $\mathcal{D}(y)$, with the
function $f(g,h):=\int_{\mathbf{\Omega}}(U(g)-gh)\ud\mu$, for
$g\in\mathcal{B}_{n},\,h\in\mathcal{D}(y)$, to give
\begin{equation}
\sup_{g\in\mathcal{B}_{n}}\inf_{h\in\mathcal{D}(y)}
\int_{\mathbf{\Omega}}(U(g)-gh)\ud\mu
= \inf_{h\in\mathcal{D}(y)}\sup_{g\in\mathcal{B}_{n}}
\int_{\mathbf{\Omega}}(U(g)-gh)\ud\mu.
\label{eq:mm}
\end{equation}
By the bipolarity relation $\mathcal{C}=\mathcal{D}^{\circ}$ in
\eqref{eq:bp1}, an element $g\in L^{0}_{+}(\mu)$ lies in
$\mathcal{C}(x)$ if and only if
$\sup_{h\in\mathcal{D}(y)}\int_{\mathbf{\Omega}}gh\ud\mu\leq
xy$. Thus, the limit as $n\to\infty$ on the left-hand-side of
\eqref{eq:mm} is given as
\begin{equation}
\label{eq:lhsmm}
\lim_{n\to\infty}\sup_{g\in\mathcal{B}_{n}}\inf_{h\in\mathcal{D}(y)}
\int_{\mathbf{\Omega}}(U(g)-gh)\ud\mu =
\sup_{x>0}\sup_{g\in\mathcal{C}(x)}\left(\int_{\mathbf{\Omega}}U(g)\ud\mu
- xy\right) = \sup_{x>0}[u(x)-xy].
\end{equation}
Now consider the right-hand-side of \eqref{eq:mm}. Define
\begin{equation*}
V_{n}(y):= \sup_{0<x\leq n}[U(x)-xy], \quad y>0, \quad n\in\mathbb{N}.
\end{equation*}
The right-hand-side of \eqref{eq:mm} is then given as
\begin{equation*}
\inf_{h\in\mathcal{D}(y)}\sup_{g\in\mathcal{B}_{n}}
\int_{\mathbf{\Omega}}(U(g)-gh)\ud\mu =
\inf_{h\in\mathcal{D}(y)}\int_{\mathbf{\Omega}}V_{n}(h)\ud\mu =:
v_{n}(y),  
\end{equation*}
so that taking the limit as $n\to\infty$ and equating this with the
limit obtained in \eqref{eq:lhsmm}, we have
\begin{equation}
\lim_{n\to\infty}v_{n}(y) = \sup_{x>0}[u(x)-xy] \leq v(y),
\label{eq:limun}  
\end{equation}
with the inequality due to the weak duality bound in
\eqref{eq:weakd}. Consequently, we will be done if we can now show
that we also have
\begin{equation*}
\lim_{n\to\infty}v_{n}(y) \geq v(y).  
\end{equation*}

Evidently, $(v_{n}(y))_{n\in\mathbb{N}}$ is an increasing sequence
satisfying the limiting inequality in \eqref{eq:limun}. Let
$(\tilde{h}^{n})_{n\in\mathbb{N}}$ be a minimising sequence in
$\mathcal{D}(y)$ for $\lim_{n\to\infty}v_{n}(y)$, so such that
\begin{equation*}
\lim_{n\to\infty}\int_{\mathbf{\Omega}}V_{n}(\tilde{h}^{n})\ud\mu =
\lim_{n\to\infty}v_{n}(y).
\end{equation*}
The compactness lemma for $\mathcal{D}$, Lemma \ref{lem:Dcompact},
implies the existence of a sequence $(h^{n})_{n\in\mathbb{N}}$ in
$\mathcal{D}(y)$, with
$h^{n}\in\conv(\tilde{h}^{n},\tilde{h}^{n+1},\ldots)$, which converges
$\mu$-a.e. to an element $h\in\mathcal{D}(y)$. Now, $V_{n}(y)=V(y)$
for $y\geq I(n)$, where $I(\cdot)=-V^{\prime}(\cdot)$ is the inverse
of $U^{\prime}(\cdot)$ (and $V_{n}(\cdot)\to V(\cdot)$ as
$n\to\infty$). So we deduce from Lemma \ref{lem:Vminush} that
the sequence $(V^{-}_{n}(h^{n}))_{n\in\mathbb{N}}$ is uniformly
integrable, and hence that
\begin{equation}
\lim_{n\to\infty}\int_{\mathbf{\Omega}}V^{-}_{n}(h^{n})\ud\mu =
\int_{\mathbf{\Omega}}V^{-}(h)\ud\mu.  
\label{eq:Vminush1}
\end{equation}
On the other hand, from Fatou's lemma, we have
\begin{equation}
\lim_{n\to\infty}\int_{\mathbf{\Omega}}V^{+}_{n}(h^{n})\ud\mu \geq
\int_{\mathbf{\Omega}}V^{+}(h)\ud\mu,
\label{eq:Vplush1}
\end{equation}
so \eqref{eq:Vminush1} and \eqref{eq:Vplush1} give
\begin{equation}
\lim_{n\to\infty}\int_{\mathbf{\Omega}}V_{n}(h^{n})\ud\mu \geq
\int_{\mathbf{\Omega}}V(h)\ud\mu.
\label{eq:limVh}
\end{equation}
Finally, using convexity of $V_{n}(\cdot)$ and \eqref{eq:limVh}, we obtain
\begin{equation*}
\lim_{n\to\infty}v_{n}(y) =
\lim_{n\to\infty}\int_{\mathbf{\Omega}}V_{n}(\tilde{h}^{n})\ud\mu \geq
\lim_{n\to\infty}\int_{\mathbf{\Omega}}V_{n}(h^{n})\ud\mu \geq
\int_{\mathbf{\Omega}}V(h)\ud\mu \geq v(y),
\end{equation*}
and the proof is complete.

\end{proof}

We now move on to the primal side of the analysis. The first step is
an analogous compactness result to Lemma \ref{lem:Dcompact}, this time
for the primal domain. The proof is identical to the proof of Lemma
\ref{lem:Dcompact} so is omitted.

\begin{lemma}[Compactness lemma for $\mathcal{C}$]
\label{lem:Ccompact}

Let $(\tilde{g}^{n})_{n\in\mathbb{N}}$ be a sequence in
$\mathcal{C}$. Then there exists a sequence $(g^{n})_{n\in\mathbb{N}}$
with $g^{n}\in\conv(\tilde{g}^{n}, \tilde{g}^{n+1},\ldots)$, which
converges $\mu$-a.e. to an element $g\in\mathcal{C}$ that is
$\mu$-a.e. finite.

\end{lemma}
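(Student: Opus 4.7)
The plan is to mirror, step by step, the proof of the compactness lemma for $\mathcal{D}$ (Lemma \ref{lem:Dcompact}), replacing the three structural ingredients with their primal analogues. Specifically, I would use the $L^{0}(\mu)$-boundedness of $\mathcal{C}$ from Proposition \ref{prop:abp}(iii), the convexity of $\mathcal{C}$ from Proposition \ref{prop:abp}(i), and the polar characterisation $\mathcal{C}=\mathcal{D}^{\circ}$ from \eqref{eq:bp1}.

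First, I would invoke the Koml\'os-type lemma of Delbaen and Schachermayer \cite[Lemma A1.1]{ds94}, adapted from a probability space to the finite measure space $(\mathbf{\Omega},\mathcal{G},\mu)$. Because $\mathcal{C}$ is bounded in $L^{0}(\mu)$, this produces a sequence $(g^{n})_{n\in\mathbb{N}}$ with $g^{n}\in\conv(\tilde g^{n},\tilde g^{n+1},\ldots)$ converging $\mu$-a.e.\ to some $g\in L^{0}_{+}(\mu)$ which, by the same boundedness, is $\mu$-a.e.\ finite. Convexity of $\mathcal{C}$ then ensures that each $g^{n}\in\mathcal{C}$.

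The only remaining task is to verify that the limit $g$ itself lies in $\mathcal{C}$. By the bipolarity relation $\mathcal{C}=\mathcal{D}^{\circ}$, this reduces to checking that $\langle g,h\rangle\leq 1$ for every $h\in\mathcal{D}$. For such an $h$, Fatou's lemma combined with $g^{n}\in\mathcal{C}=\mathcal{D}^{\circ}$ gives
\begin{equation*}
\int_{\mathbf{\Omega}}gh\ud\mu = \int_{\mathbf{\Omega}}\liminf_{n\to\infty}g^{n}h\ud\mu \leq \liminf_{n\to\infty}\int_{\mathbf{\Omega}}g^{n}h\ud\mu \leq 1,
\end{equation*}
so $g\in\mathcal{D}^{\circ}=\mathcal{C}$, completing the proof.

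I do not anticipate any genuine obstacle, which is precisely why the authors declare the proof identical to that of Lemma \ref{lem:Dcompact} and omit it. All the real work was concentrated in Section \ref{sec:bpr}, where the bipolarity relations and the $L^{0}(\mu)$-boundedness of $\mathcal{C}$ were established; once these are in hand, the compactness lemma is a mechanical consequence of the Koml\'os-type argument plus Fatou's lemma.
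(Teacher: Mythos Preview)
Your proposal is correct and follows exactly the approach the paper intends: the authors explicitly state that the proof is identical to that of Lemma~\ref{lem:Dcompact} and omit it, and you have faithfully transcribed that argument with the roles of $\mathcal{C}$ and $\mathcal{D}$ interchanged, using the $L^{0}(\mu)$-boundedness and convexity of $\mathcal{C}$ together with the polar relation $\mathcal{C}=\mathcal{D}^{\circ}$ and Fatou's lemma.
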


To prove existence of a unique primal optimiser we also need a result
analogous to Lemma \ref{lem:Vminush}, on the uniform integrability of
a sequence $(U^{+}(g^{n}))_{n\in\mathbb{N}}$ for
$g^{n}\in\mathcal{C}(x)$. The proof is in the style of Kramkov and
Schachermayer \cite[Lemma 1]{ks03}.

\begin{lemma}[Uniform integrability of
$(U^{+}(g^{n}))_{n\in\mathbb{N}},\,g^{n}\in\mathcal{C}(x)$] 
\label{lem:Uplusg}

Let $(g^{n})_{n\in\mathbb{N}}$ be a sequence in $\mathcal{C}(x)$, for
any fixed $x>0$. The sequence $(U^{+}(g^{n}))_{n\in\mathbb{N}}$ is
uniformly integrable.
  
\end{lemma}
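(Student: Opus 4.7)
\textbf{Proof proposal for Lemma \ref{lem:Uplusg}.}

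The plan is to argue by contradiction, in the spirit of Kramkov and Schachermayer \cite[Lemma 1]{ks03}, exploiting the fact that by Lemma \ref{lem:weakdual}, $u(x) < \infty$ for all $x > 0$. The core idea is that failure of uniform integrability of $U^+(g^n)$ would allow one to extract, via convex combinations, a limit $\hat g \in \mathcal{C}(x)$ whose utility integral is infinite, contradicting $u(x) < \infty$.

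Concretely, I would first assume for contradiction that $(U^+(g^n))_{n\in\mathbb{N}}$ is not uniformly integrable. Since $U^+$ is increasing, the tail concentration can be reformulated in terms of the processes themselves: after passing to a subsequence, there exist $c>0$ and levels $M_n\uparrow\infty$ such that
\begin{equation*}
\int_{\{g^n > M_n\}} U(g^n)\ud\mu \;\geq\; c \qquad \text{for every }n,
\end{equation*}
where one can arrange $U(M_n) > 0$ for $n$ large, using strict monotonicity and the Inada condition $U^{\prime}(0)=\infty$. I would then apply the compactness lemma for $\mathcal{C}$ (Lemma \ref{lem:Ccompact}) to obtain convex combinations $\hat g^n \in \conv(g^n,g^{n+1},\ldots) \subseteq \mathcal{C}(x)$ converging $\mu$-a.e.\ to some $\hat g\in\mathcal{C}(x)$.

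The main obstacle is to transfer the tail concentration from the varying level sets $\{g^n > M_n\}$ to a single limiting set on which $\hat g$ is large, despite the fact that convex combinations can smear out high values. To handle this, I would employ a Brannath--Schachermayer--type forcing argument (via a double indexing and diagonal extraction in the style of \cite[Lemma A1.1]{ds94} adapted to $(\mathbf{\Omega},\mathcal{G},\mu)$) so that the convex combinations themselves concentrate on a set of positive $\mu$-measure on which the limit equals $+\infty$. The negative part $U^-(\hat g^n)$ would be controlled using the $L^1(\overline h\ud\mu)$-boundedness of $\mathcal{C}(x)$ (for the strictly positive $\overline h\in\mathcal{D}$ exhibited in the proof of Proposition \ref{prop:abp}) together with Fatou's lemma applied to the conjugate bound $U(g) \leq V(\overline h) + g\overline h$. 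Combining these would yield $\int_{\mathbf{\Omega}}U(\hat g)\ud\mu = +\infty$, contradicting $u(x) \geq \int_{\mathbf{\Omega}}U(\hat g)\ud\mu$ and the finiteness $u(x)<\infty$ from Lemma \ref{lem:weakdual}. Consequently $(U^+(g^n))_{n\in\mathbb{N}}$ must be uniformly integrable.
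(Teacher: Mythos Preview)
Your proposal has a genuine gap. You aim to contradict $u(x)<\infty$ for the \emph{fixed} $x$ by producing, via convex combinations inside $\mathcal{C}(x)$, a limit $\hat g\in\mathcal{C}(x)$ with $\int_{\mathbf{\Omega}}U(\hat g)\ud\mu=+\infty$. But this cannot succeed: by Proposition \ref{prop:abp}(iii) the set $\mathcal{C}(x)$ is bounded in $L^{0}(\mu)$, so Lemma \ref{lem:Ccompact} guarantees the limit $\hat g$ is $\mu$-a.e.\ \emph{finite}, and in fact the ``forcing'' you invoke cannot prevent the smearing you yourself identify. Failure of uniform integrability of $(U^{+}(g^{n}))_{n}$ is perfectly compatible with every convex-combination limit in $\mathcal{C}(x)$ having finite utility integral: think of $g^{n}$ supported on disjoint sets $A_{n}$ with $\mu(A_{n})\to 0$, where any Koml\'os-type limit is zero.

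The paper (following \cite[Lemma 1]{ks03}) uses a different mechanism. From non-UI one extracts $\alpha>0$ and \emph{disjoint} sets $A_{n}$ with $\int_{\mathbf{\Omega}}U^{+}(g^{n})\mathbbm{1}_{A_{n}}\ud\mu\geq\alpha$, then \emph{sums} rather than averages: $f^{n}:=x_{0}+\sum_{k=1}^{n}g^{k}\mathbbm{1}_{A_{k}}$. The bipolarity $\mathcal{C}=\mathcal{D}^{\circ}$ and disjointness give $f^{n}\in\mathcal{C}(x_{0}+nx)$, while $\int_{\mathbf{\Omega}}U(f^{n})\ud\mu\geq\alpha n$. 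The contradiction is then with the \emph{asymptotic} weak-duality bound $\limsup_{z\to\infty}u(z)/z\leq 0$ from \eqref{eq:limiting}, not with finiteness of $u$ at the fixed $x$. The essential point you are missing is that one must leave $\mathcal{C}(x)$ and exploit the growth of $u$ along the ray $z\to\infty$.
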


\begin{proof}

Fix $x>0$. If $U(\infty)\leq 0$ there is nothing to prove, so assume
$U(\infty)>0$.

If the sequence $(U^{+}(g^{n}))_{n\in\mathbb{N}}$ is not uniformly
integrable, then, passing if need be to a subsequence still denoted by 
$(g^{n})_{n\in\mathbb{N}}$, we can find a constant $\alpha>0$ and a
disjoint sequence $(A_{n})_{n\in\mathbb{N}}$ of sets of
$(\mathbf{\Omega},\mathcal{G})$ (so
$A_{n}\in\mathcal{G},\,n\in\mathbb{N}$ and $A_{i}\cap A_{j}=\emptyset$
if $i\neq j$) such that
\begin{equation*}
\int_{\mathbf{\Omega}}U^{+}(g^{n})\mathbbm{1}_{A_{n}}\ud\mu \geq
\alpha, \quad n\in\mathbb{N}.  
\end{equation*}
(See for example Pham \cite[Corollary A.1.1]{pham09}.) Define a
sequence $(f^{n})_{n\in\mathbb{N}}$ of elements in $L^{0}_{+}(\mu)$ by
\begin{equation*}
f^{n} := x_{0} + \sum_{k=1}^{n}g^{k}\mathbbm{1}_{A_{k}},
\end{equation*}
where $x_{0}:=\inf\{x>0:\,U(x)\geq 0\}$.

For any $h\in\mathcal{D}$ (so satisfying
$\int_{\mathbf{\Omega}}h\ud\mu\leq 1$) we have
\begin{equation*}
\int_{\mathbf{\Omega}}f^{n}h\ud\mu =
\int_{\mathbf{\Omega}}\left(x_{0} +
\sum_{k=1}^{n}g^{k}\mathbbm{1}_{A_{k}}\right)h\ud\mu \leq x_{0} +
\sum_{k=1}^{n}\int_{\mathbf{\Omega}}g^{k}h\mathbbm{1}_{A_{k}}\ud\mu
\leq x_{0} + nx.
\end{equation*}
Thus, $f^{n}\in\mathcal{C}(x_{0}+nx),\,n\in\mathbb{N}$.

On the other hand, since $U^{+}(\cdot)$ is non-negative and
non-decreasing,
\begin{eqnarray*}
\int_{\mathbf{\Omega}}U(f^{n})\ud\mu & = &
\int_{\mathbf{\Omega}}U^{+}(f^{n})\ud\mu \\
& = & \int_{\mathbf{\Omega}}U^{+}\left(x_{0} +
\sum_{k=1}^{n}g^{k}\mathbbm{1}_{A_{k}}\right)\ud\mu \\
& \geq & \int_{\mathbf{\Omega}}
U^{+}\left(\sum_{k=1}^{n}g^{k}\mathbbm{1}_{A_{k}}\right)\ud\mu \\
& = & \sum_{k=1}^{n}\int_{\mathbf{\Omega}}
U^{+}\left(g^{k}\mathbbm{1}_{A_{k}}\right)\ud\mu \geq \alpha n.
\end{eqnarray*}
Therefore,
\begin{equation*}
\limsup_{z\to\infty}\frac{u(z)}{z} =
\limsup_{n\to\infty}\frac{u(x_{0}+nx)}{x_{0}+nx} \geq
\limsup_{n\to\infty}\frac{\int_{\mathbf{\Omega}}U(f^{n})\ud\mu}{x_{0}+nx}
\geq \limsup_{n\to\infty}\left(\frac{\alpha n}{x_{0}+nx}\right) =
\frac{\alpha}{x} > 0,
\end{equation*}
which contradicts the limiting weak duality bound in
\eqref{eq:limiting}. This contradiction establishes the result.
  
\end{proof}

One can can now proceed to prove existence of a unique optimiser in
the primal problem. The method of proof is similar to the proof of
dual existence, Lemma \ref{lem:dualexis}, with adjustments for
maximisation as opposed to minimisation and concavity of $U(\cdot)$
replacing convexity of $V(\cdot),$ so is included just for
completeness.

\begin{lemma}[Primal existence]
\label{lem:primexis}

The optimal solution $\widehat{g}(x)\in\mathcal{C}(x)$ to the primal
problem \eqref{eq:vfabs} exists and is unique, so that $u(\cdot)$ is
strictly concave.  
  
\end{lemma}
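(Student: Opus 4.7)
The plan is to mirror the proof of dual existence (Lemma \ref{lem:dualexis}), replacing convexity/minimisation arguments with concavity/maximisation ones, and invoking the compactness Lemma \ref{lem:Ccompact} for $\mathcal{C}$ together with the uniform integrability result Lemma \ref{lem:Uplusg} in place of their dual-side counterparts.

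Fix $x>0$ and take a maximising sequence $(g^{n})_{n\in\mathbb{N}}$ in $\mathcal{C}(x)$, so that $\int_{\mathbf{\Omega}}U(g^{n})\,\ud\mu\to u(x)$, which is finite by Lemma \ref{lem:weakdual}. By Lemma \ref{lem:Ccompact} applied to $\mathcal{C}(x)=x\mathcal{C}$, there exist convex combinations $\widehat{g}^{n}=\sum_{k=n}^{N(n)}\lambda_{k}^{n}g^{k}\in\conv(g^{n},g^{n+1},\ldots)\subseteq\mathcal{C}(x)$ that converge $\mu$-a.e.\ to some $\mu$-a.e.\ finite $\widehat{g}(x)\in\mathcal{C}(x)$. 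Concavity of $U(\cdot)$ gives
\begin{equation*}
\int_{\mathbf{\Omega}}U(\widehat{g}^{n})\,\ud\mu \geq \sum_{k=n}^{N(n)}\lambda_{k}^{n}\int_{\mathbf{\Omega}}U(g^{k})\,\ud\mu,
\end{equation*}
and since the right-hand side is a convex combination of terms tending to $u(x)$, it also tends to $u(x)$. Coupled with the trivial bound $\int_{\mathbf{\Omega}}U(\widehat{g}^{n})\,\ud\mu\leq u(x)$, this shows $\lim_{n\to\infty}\int_{\mathbf{\Omega}}U(\widehat{g}^{n})\,\ud\mu=u(x)$.

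Next, decompose $U=U^{+}-U^{-}$. By Lemma \ref{lem:Uplusg} the family $(U^{+}(\widehat{g}^{n}))_{n\in\mathbb{N}}$ is uniformly integrable, so Vitali's theorem yields $\lim_{n}\int_{\mathbf{\Omega}}U^{+}(\widehat{g}^{n})\,\ud\mu = \int_{\mathbf{\Omega}}U^{+}(\widehat{g}(x))\,\ud\mu$, while Fatou's lemma applied to the non-negative sequence $(U^{-}(\widehat{g}^{n}))_{n\in\mathbb{N}}$ gives $\liminf_{n}\int_{\mathbf{\Omega}}U^{-}(\widehat{g}^{n})\,\ud\mu \geq \int_{\mathbf{\Omega}}U^{-}(\widehat{g}(x))\,\ud\mu$. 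Combining,
\begin{equation*}
u(x) = \lim_{n\to\infty}\int_{\mathbf{\Omega}}U(\widehat{g}^{n})\,\ud\mu \leq \int_{\mathbf{\Omega}}U(\widehat{g}(x))\,\ud\mu \leq u(x),
\end{equation*}
so $\widehat{g}(x)$ attains the supremum. Uniqueness follows from the strict concavity of $U(\cdot)$ combined with the convexity of $\mathcal{C}(x)$: two distinct optimisers would yield an admissible midpoint with strictly larger expected utility. The same strict-concavity argument applied to $\lambda\widehat{g}(x_{1})+(1-\lambda)\widehat{g}(x_{2})\in\mathcal{C}(\lambda x_{1}+(1-\lambda)x_{2))$ for $x_{1}<x_{2}$ and $\lambda\in(0,1)$ delivers strict concavity of $u(\cdot)$.

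The main obstacle will be the asymmetric treatment of $U^{+}$ and $U^{-}$: unlike the dual side, where $L^{1}(\mu)$-boundedness of $\mathcal{D}$ paired with de la Vall\'ee-Poussin delivered uniform integrability of $(V^{-}(h^{n}))_{n\in\mathbb{N}}$, here $\mathcal{C}$ is only $L^{0}(\mu)$-bounded, and Lemma \ref{lem:Uplusg} is tailored to $U^{+}$ rather than $U^{-}$. Consequently the $U^{-}$ contribution has to be handled one-sidedly by Fatou; fortunately this is precisely the direction needed for an upper bound on $\int_{\mathbf{\Omega}}U(\widehat{g}(x))\,\ud\mu$, and finiteness of $\int_{\mathbf{\Omega}}U^{-}(\widehat{g}(x))\,\ud\mu$ is guaranteed by the standing assumption $u(x)>-\infty$.
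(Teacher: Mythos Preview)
Your proof is correct and follows essentially the same approach as the paper's own proof: maximising sequence, pass to convex combinations via Lemma \ref{lem:Ccompact}, use concavity of $U$ to show the convex combinations are also maximising, then handle $U^{+}$ via the uniform integrability of Lemma \ref{lem:Uplusg} and $U^{-}$ via Fatou, concluding with the strict-concavity argument for uniqueness and for strict concavity of $u(\cdot)$. The only cosmetic differences are that you invoke Vitali's theorem by name where the paper simply appeals to uniform integrability, and you spell out the convex-combination inequality $\int U(\widehat{g}^{n})\geq\sum\lambda_{k}^{n}\int U(g^{k})$ explicitly.
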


\begin{proof}

Fix $x>0$. Let $(g^{n})_{n\in\mathbb{N}}$ be a maximising sequence
in $\mathcal{C}(x)$ for $u(x)<\infty$ (the finiteness proven in
Lemma \ref{lem:weakdual}). That is
\begin{equation}
\lim_{n\to\infty}\int_{\mathbf{\Omega}}U(g^{n})\ud\mu = u(x) < \infty.
\label{eq:maxseqprimal}
\end{equation}

By the compactness lemma for $\mathcal{C}$ (and thus also for
$\mathcal{C}(x)=x\mathcal{C}$), Lemma \ref{lem:Ccompact}, we can find
a sequence $(\widehat{g}^{n})_{n\in\mathbb{N}}$ of convex
combinations, so
$\mathcal{C}(x)\owns\widehat{g}^{n}\in \conv(g^{n},
g^{n+1},\ldots),\,n\in\mathbb{N}$, which converges $\mu$-a.e. to some
element $\widehat{g}(x)\in\mathcal{C}(x)$. We claim that
$\widehat{g}(x)$ is the primal optimiser. That is, that we have
\begin{equation}
\int_{\mathbf{\Omega}}U(\widehat{g}(x))\ud\mu = u(x).
\label{eq:primaloptimiser}
\end{equation}
By concavity of $U(\cdot)$ and \eqref{eq:maxseqprimal} we have
\begin{equation*}
\lim_{n\to\infty}\int_{\mathbf{\Omega}}U(\widehat{g}^{n})\ud\mu \geq
\lim_{n\to\infty}\int_{\mathbf{\Omega}}U(g^{n})\ud\mu = u(x),
\end{equation*}
which, combined with the obvious inequality
$u(x)\geq
\lim_{n\to\infty}\int_{\mathbf{\Omega}}U(\widehat{g}^{n})\ud\mu$ means
that we also have, further to \eqref{eq:maxseqprimal},
\begin{equation*}
\lim_{n\to\infty}\int_{\mathbf{\Omega}}U(\widehat{g}^{n})\ud\mu =
u(x).  
\end{equation*}
In other words
\begin{equation}
\lim_{n\to\infty}\int_{\mathbf{\Omega}}U^{+}(\widehat{g}^{n})\ud\mu -
\lim_{n\to\infty}\int_{\mathbf{\Omega}}U^{-}(\widehat{g}^{n})\ud\mu =
u(x) < \infty,    
\label{eq:Uplusminus}
\end{equation}
and note therefore that both integrals in \eqref{eq:Uplusminus} are
finite.

From Fatou's lemma, we have
\begin{equation}
\lim_{n\to\infty}\int_{\mathbf{\Omega}}U^{-}(\widehat{g}^{n})\ud\mu
\geq \int_{\mathbf{\Omega}}U^{-}(\widehat{g}(x))\ud\mu.  
\label{eq:Uminus}
\end{equation}
From Lemma \ref{lem:Uplusg} we have uniform integrability of
$(U^{+}(\widehat{g}^{n}))_{n\in\mathbb{N}}$, so that
\begin{equation}
\lim_{n\to\infty}\int_{\mathbf{\Omega}}U^{+}(\widehat{g}^{n})\ud\mu
= \int_{\mathbf{\Omega}}U^{+}(\widehat{g}(x))\ud\mu.  
\label{eq:Uplus}
\end{equation}
Thus, using \eqref{eq:Uminus} and \eqref{eq:Uplus} in
\eqref{eq:Uplusminus}, we obtain
\begin{equation*}
u(x) \leq \int_{\mathbf{\Omega}}U(\widehat{g}(x))\ud\mu,
\end{equation*}
which, combined with the obvious inequality
$u(x)\geq\int_{\mathbf{\Omega}}U(\widehat{g}(x))\ud\mu$, yields
\eqref{eq:primaloptimiser}. The uniqueness of the primal optimiser
follows from the strict concavity of $U(\cdot)$, as does the strict
concavity of $u(\cdot)$. For this last claim, fix $x_{1}<x_{2}$ and
$\lambda\in(0,1)$, note that
$\lambda\widehat{g}(x_{1}) +
(1-\lambda)\widehat{g}(x_{2})\in\mathcal{C}(\lambda x_{1} +
(1-\lambda)x_{2})$ (yet must be sub-optimal for
$u(\lambda x_{1}+(1-\lambda)x_{2})$ as it is not guaranteed to equal
$\widehat{g}(\lambda x_{1}+(1-\lambda)x_{2})$) and therefore, using
the strict concavity of $U(\cdot)$,
\begin{equation*}
u(\lambda x_{1} + (1-\lambda)x_{2})  \geq
\int_{\mathbf{\Omega}}U\left(\lambda\widehat{g}(x_{1}) +
  (1-\lambda)\widehat{g}(x_{2})\right)\ud\mu > \lambda u(x_{1}) +
(1-\lambda)u(x_{2}). 
\end{equation*}

\end{proof}

We now move on to further characterise the derivatives of the value
functions, as well as the primal and dual optimisers. The first result
is on the derivative of the primal value value function $u(\cdot)$ at
zero (equivalently, the derivative of the dual value function
$v(\cdot)$ at infinity). The proof of the following lemma is in the
style of Kramkov and Schachermayer \cite[Lemma 3.5]{ks99}.

\begin{lemma}
\label{lem:uprime0}

The derivatives of the primal value function in \eqref{eq:vfabs} at
zero and of the dual value function in \eqref{eq:dvfabs} at infinity
are given by
\begin{equation}
u^{\prime}(0) := \lim_{x\downarrow 0}u^{\prime}(x) = +\infty, \quad
-v^{\prime}(\infty) := \lim_{y\to\infty}(-v^{\prime}(y)) = 0.   
\label{eq:uprime0}
\end{equation}

\end{lemma}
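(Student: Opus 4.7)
The plan is to exploit the conjugacy relation of Lemma \ref{lem:conjugacy} between the primal and dual value functions. The two claims in \eqref{eq:uprime0} are Legendre-conjugate to one another, so I would first establish $-v'(\infty)=0$ and then deduce $u'(0+)=+\infty$ from it.

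To prove $-v'(\infty)=0$, I begin by observing that $v(\cdot)$ is non-increasing on $(0,\infty)$. Indeed, for $0<y_{1}\leq y_{2}$ and any $h\in\mathcal{D}(y_{1})=y_{1}\mathcal{D}$, writing $h=y_{1}\tilde h$ with $\tilde h\in\mathcal{D}$, we have $h\leq y_{2}\tilde h\in y_{2}\mathcal{D}$, so by solidity of $\mathcal{D}$ (Proposition \ref{prop:abp}(i)) it follows that $h\in\mathcal{D}(y_{2})$. Thus $\mathcal{D}(y_{1})\subseteq\mathcal{D}(y_{2})$ and hence $v(y_{2})\leq v(y_{1})$. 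Together with the strict convexity of $v(\cdot)$ from Lemma \ref{lem:dualexis}, the right derivative $v'(\cdot)$ is well defined, non-decreasing and non-positive, so $v'(\infty):=\lim_{y\to\infty}v'(y)$ exists in $(-\infty,0]$; by a standard convex-analytic argument, this also equals the asymptotic slope $\lim_{y\to\infty}v(y)/y$. On the other hand, the weak duality bound of Lemma \ref{lem:weakdual} gives $v(y)\geq u(x)-xy$ for every $x,y>0$; dividing by $y$ and letting $y\to\infty$ with $x$ fixed yields $\liminf_{y\to\infty}v(y)/y\geq -x$, and sending $x\downarrow 0$ gives $\liminf_{y\to\infty}v(y)/y\geq 0$. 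Combining, $v'(\infty)=0$, which is the second relation in \eqref{eq:uprime0}.

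The first relation $u'(0+)=+\infty$ now follows by a conjugacy-based contradiction. By Lemma \ref{lem:primexis}, $u(\cdot)$ is strictly concave on $(0,\infty)$, so its right derivative $u'(\cdot)$ is strictly decreasing and $u'(0+):=\lim_{x\downarrow 0}u'(x)$ exists in $(0,+\infty]$. Suppose to the contrary that $u'(0+)=M<\infty$; then for any fixed $y>M$ we would have $u'(x)<M<y$ for every $x>0$, so $x\mapsto u(x)-xy$ is strictly decreasing on $(0,\infty)$, and the conjugacy of Lemma \ref{lem:conjugacy} gives
\begin{equation*}
v(y)\;=\;\sup_{x>0}[u(x)-xy]\;=\;\lim_{x\downarrow 0}[u(x)-xy]\;=\;u(0+),\quad y>M.
\end{equation*}
If $u(0+)$ is finite, $v(\cdot)$ is constant on $(M,\infty)$, contradicting its strict convexity; if $u(0+)=-\infty$, then $v\equiv-\infty$ on $(M,\infty)$, contradicting $v(y)\geq u(x_{0})-x_{0}y>-\infty$ which comes from weak duality with any fixed $x_{0}>0$ for which $u(x_{0})>-\infty$.

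The core technical input is the infinite-horizon weak duality bound combined with the monotonicity and convexity of $v$; this is what pins down the asymptotic slope $v'(\infty)=0$, and I expect it to be the main hurdle. The remainder is standard convex-analytic bookkeeping carried out over the bipolar-structured domains already constructed in Proposition \ref{prop:abp}.
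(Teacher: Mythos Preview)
Your proof is correct, and it takes a somewhat different route from the paper for the key step $-v'(\infty)=0$. The paper proceeds in the Kramkov--Schachermayer style: it writes $-v'(\infty)=\lim_{y\to\infty}(-v(y)/y)$, expresses $-v(y)$ via the definition $-v(y)=\sup_{h\in\mathcal{D}(y)}\int_{\mathbf{\Omega}}(-V(h))\ud\mu$, and then bounds this using the linear majorant $-V(z)\leq C_{\epsilon}+\epsilon z$ (from the Inada condition $-V'(\infty)=0$) together with the $L^{1}(\mu)$-bound $\int_{\mathbf{\Omega}}h\ud\mu\leq y$ for $h\in\mathcal{D}(y)$. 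You instead bypass the internal structure of the dual problem entirely: you combine the monotonicity $v'\leq 0$ (obtained from solidity and the inclusion $\mathcal{D}(y_{1})\subseteq\mathcal{D}(y_{2})$) with the already-established limiting weak duality relation $\liminf_{y\to\infty}v(y)/y\geq 0$ from Lemma~\ref{lem:weakdual}, and the standard convex-analytic identity $v'(\infty)=\lim_{y\to\infty}v(y)/y$. Your argument is more self-contained at this stage and avoids re-invoking the $L^{1}(\mu)$-boundedness of $\mathcal{D}$ and the Inada behaviour of $V$; the paper's approach, on the other hand, makes transparent exactly where those structural ingredients enter. For $u'(0+)=+\infty$ the paper simply asserts equivalence by conjugacy, whereas you spell out the contradiction argument in detail; both are fine.
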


\begin{proof}

By the conjugacy result in Lemma \ref{lem:conjugacy} between the value
functions, the assertions in \eqref{eq:uprime0} are equivalent. We
shall prove the second assertion.

The function $-v(\cdot)$ is strictly concave and strictly
increasing, so there is a finite non-negative limit
$-v^{\prime}(\infty):=\lim_{y\to\infty}(-v^{\prime}(y))$. Because
$-V(\cdot)$ is increasing with $\lim_{y\to\infty}(-V^{\prime}(y))=0$, for
any $\epsilon>0$ there exists a number $C_{\epsilon}>0$ such that
$-V(y)\leq C_{\epsilon}+\epsilon y,\,\forall\,y>0$. Using this, the
$L^{1}(\mu)$-boundedness of $\mathcal{D}$ (so that
$\int_{\mathbf{\Omega}}h\ud\mu\leq y,\,\forall\,h\in\mathcal{D}(y)$)
and l'H\^opital's rule, we have, with
$\int_{\mathbf{\Omega}}\ud\mu=:\delta>0$,
\begin{eqnarray*}
0 \leq \lim_{y\to\infty}-v^{\prime}(y) =
\lim_{y\to\infty}\frac{-v(y)}{y} & = &
\lim_{y\to\infty}\sup_{h\in\mathcal{D}(y)}\int_{\mathbf{\Omega}}
\frac{-V(h)}{y}\ud\mu \\
& \leq & \lim_{y\to\infty}\sup_{h\in\mathcal{D}(y)}\int_{\mathbf{\Omega}}
\frac{C_{\epsilon}+\epsilon h}{y}\ud\mu \\
& \leq & \lim_{y\to\infty}\left(\frac{C_{\epsilon}\delta}{y} + \epsilon\right)
= \epsilon,
\end{eqnarray*}
and taking the limit as $\epsilon\downarrow 0$ gives the result.
  
\end{proof}

The final step in the series of lemmas that will furnish us with the
proof of the abstract duality of Theorem \ref{thm:adt} is to
characterise the derivative of the primal value value function
$u(\cdot)$ at infinity (equivalently, the derivative of the dual value
function $v(\cdot)$ at zero) along with a duality characterisation of
the primal and dual optimisers.

\begin{lemma}
\label{lem:derivatives}
  
\begin{enumerate}
  
\item The derivatives of the primal value function in \eqref{eq:vfabs}
at infinity and of the dual value function in \eqref{eq:dvfabs} at
zero are given by
\begin{equation}
u^{\prime}(\infty) := \lim_{x\to\infty}u^{\prime}(x) = 0, \quad
-v^{\prime}(0) := \lim_{y\downarrow 0}(-v^{\prime}(y)) = +\infty.   
\label{eq:mvprime0}
\end{equation}

\item For any fixed $x>0$, with $y=u^{\prime}(x)$ (equivalently
$x=-v^{\prime}(y)$), the primal and dual optimisers
$\widehat{g}(x),\widehat{h}(y)$ are related by
\begin{equation}
U^{\prime}(\widehat{g}(x)) = \widehat{h}(y) =
\widehat{h}(u^{\prime}(x)), \quad \mu\mbox{-a.e.},
\label{eq:primaldual}
\end{equation}
and satisfy
\begin{equation}
\int_{\mathbf{\Omega}}\widehat{g}(x)\widehat{h}(y)\ud\mu = xy =
xu^{\prime}(x).
\label{eq:saturation}
\end{equation}

\item The derivatives of the value functions satisfy the relations
\begin{equation}
xu^{\prime}(x) =
\int_{\mathbf{\Omega}}U^{\prime}(\widehat{g}(x))\widehat{g}(x)\ud\mu,
\quad yv^{\prime}(y) =
\int_{\mathbf{\Omega}}V^{\prime}(\widehat{h}(y))\widehat{h}(y)\ud\mu,
\quad x,y >0.
\label{eq:derivatives}
\end{equation}

\end{enumerate}

\end{lemma}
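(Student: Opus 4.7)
The plan has three parts tracking the three items of the lemma.

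For item (1), I would first establish $u'(\infty)=0$ directly and then obtain $-v'(0)=+\infty$ by Fenchel inversion. Since $u$ is strictly concave (Lemma \ref{lem:primexis}) and continuously differentiable (which follows from strict convexity of $v$, Lemma \ref{lem:dualexis}, by standard convex analysis), the subtangent bound $u'(x)\leq (u(x)-u(x_0))/(x-x_0)$ holds for any $0<x_0<x$. Letting $x\to\infty$ and invoking the weak-duality bound $\limsup_{x\to\infty} u(x)/x\leq 0$ from Lemma \ref{lem:weakdual} gives $u'(\infty)\leq 0$; since $u$ is increasing, $u'(\infty)=0$. By Fenchel--Legendre duality, the minimum in $u(x)=\inf_y[v(y)+xy]$ is attained at $y=u'(x)$, equivalently $x=-v'(y)$, so $-v'$ is the functional inverse of $u'$ on $(0,\infty)$. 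Combined with $u'(0)=\infty$ from Lemma \ref{lem:uprime0} and $u'(\infty)=0$ just shown, $u'$ is a bijection from $(0,\infty)$ to $(0,\infty)$; therefore so is its inverse $-v'$, giving $-v'(0)=+\infty$.

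For item (2), I would revisit the weak-duality chain \eqref{eq:ucbound} at the optimisers $\widehat{g}(x)$ and $\widehat{h}(y)$:
\begin{equation*}
u(x)-xy \;=\; \int_{\mathbf{\Omega}} U(\widehat{g})\,\ud\mu - xy \;\leq\; \int_{\mathbf{\Omega}}\bigl(U(\widehat{g})-\widehat{g}\widehat{h}\bigr)\ud\mu \;\leq\; \int_{\mathbf{\Omega}} V(\widehat{h})\,\ud\mu \;=\; v(y).
\end{equation*}
With $y=u'(x)$, the conjugacy of Lemma \ref{lem:conjugacy} is attained, so $u(x)-xy=v(y)$, which forces both inequalities to collapse to equalities. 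Equality in the first gives $\int_{\mathbf{\Omega}}\widehat{g}\widehat{h}\,\ud\mu=xy$, which is \eqref{eq:saturation}. Equality in the second reads $V(\widehat{h})=U(\widehat{g})-\widehat{g}\widehat{h}$ $\mu$-a.e., which by the equality case of \eqref{eq:VUbound} is exactly $U'(\widehat{g})=\widehat{h}$ $\mu$-a.e., i.e., \eqref{eq:primaldual}.

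Item (3) is then immediate from item (2) by substitution. Using $U'(\widehat{g}(x))=\widehat{h}(y)$ and the saturation \eqref{eq:saturation},
\begin{equation*}
\int_{\mathbf{\Omega}} U'(\widehat{g}(x))\widehat{g}(x)\,\ud\mu = \int_{\mathbf{\Omega}} \widehat{h}(y)\widehat{g}(x)\,\ud\mu = xy = xu'(x).
\end{equation*}
Dually, $V'(\widehat{h}(y))=-I(\widehat{h}(y))=-\widehat{g}(x)$, so $\int_{\mathbf{\Omega}} V'(\widehat{h})\widehat{h}\,\ud\mu=-\int_{\mathbf{\Omega}}\widehat{g}\widehat{h}\,\ud\mu=-xy=yv'(y)$, using $x=-v'(y)$ from item (1). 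The main obstacle is the Fenchel--Legendre inversion underlying item (1): it requires invoking from classical convex analysis that $u$ and $v$ are continuously differentiable and that the infimum in $u(x)=\inf_y[v(y)+xy]$ is attained at $y=u'(x)$, both of which follow from the strict concavity/convexity established in Lemmata \ref{lem:primexis} and \ref{lem:dualexis} but must be invoked carefully. Once the inversion is in place, items (2) and (3) reduce to tight accounting within the weak-duality chain.
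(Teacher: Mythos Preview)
Your proof is correct, and items (2) and (3) match the paper's argument almost verbatim: collapse the weak-duality chain \eqref{eq:ucbound} at the conjugate point $y=u'(x)$, read off \eqref{eq:saturation} from the first equality and \eqref{eq:primaldual} from the second, then substitute to obtain \eqref{eq:derivatives}.

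The genuine difference is in item (1). The paper does \emph{not} prove $u'(\infty)=0$ first and invert; instead it establishes items (2) and (3) before (1), then proves $-v'(0)=+\infty$ directly by a Fatou argument exploiting the $L^{1}(\mu)$-boundedness of $\mathcal{D}$. Concretely, from \eqref{eq:derivatives} it deduces $-V'(\widehat{h}(y))\in\mathcal{C}(-v'(y))$, hence by bipolarity $-v'(y)\geq\int_{\mathbf{\Omega}}-V'(\widehat{h}(y))h\,\ud\mu$ for every $h\in\mathcal{D}$; writing $\widehat{h}(y)=y\widehat{h}^{y}$ with $\widehat{h}^{y}\in\mathcal{D}$, the bound $\int_{\mathbf{\Omega}}\widehat{h}^{y}\,\ud\mu\leq 1$ and Fatou give $\liminf_{y\downarrow 0}\widehat{h}^{y}<\infty$ $\mu$-a.e., and a second application of Fatou together with $-V'(0)=+\infty$ forces $-v'(0)=+\infty$. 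Your route via the subtangent bound $u'(x)\leq (u(x)-u(x_{0}))/(x-x_{0})$ and the weak-duality limit $\limsup_{x\to\infty}u(x)/x\leq 0$ is more elementary and self-contained, needing neither items (2)--(3) nor the $L^{1}$-structure of $\mathcal{D}$; the cost is that you must invoke the convex-analytic fact that strict convexity of $v$ forces continuous differentiability of $u$ (so that the Fenchel inversion $-v'=(u')^{-1}$ is available), whereas the paper's approach sidesteps this by working entirely on the dual side.
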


\begin{proof}

Recall the inequality \eqref{eq:VUbound}, which also applies to the
value functions because they are also conjugate by Lemma
\ref{lem:conjugacy}. We thus have, in addition to \eqref{eq:VUbound},
\begin{equation}
v(y) \geq u(x) -xy, \quad \forall\,x,y>0, \quad \mbox{with equality
iff $y=u^{\prime}(x)$}.
\label{eq:vubound}
\end{equation}
With $\widehat{g}(x)\in\mathcal{C}(x),\,x>0$ and
$\widehat{h}(y)\in\mathcal{D}(y),\,y>0$ denoting the primal and dual
optimisers, the bipolarity relations \eqref{eq:bp1} and \eqref{eq:bp2}
imply that we have
\begin{equation*}
\int_{\mathbf{\Omega}}\widehat{g}(x)\widehat{h}(y)\ud\mu \leq xy, \quad
x,y >0.  
\end{equation*}
Using this as well as \eqref{eq:VUbound} and \eqref{eq:vubound} we
have
\begin{equation}
 0 \leq \int_{\mathbf{\Omega}}\left(V(\widehat{h}(y)) -
U(\widehat{g}(x)) + \widehat{g}(x)\widehat{h}(y)\right)\ud\mu \leq
v(y) - u(x) + xy, \quad x,y>0,
\label{eq:fundineq}
\end{equation}
The right-hand-side of \eqref{eq:fundineq} is zero if and only if
$y=u^{\prime}(x)$, due to \eqref{eq:vubound}, and the non-negative
integrand must then be $\mu$-a.e. zero, which by \eqref{eq:VUbound}
can only happen if \eqref{eq:primaldual} holds, which establishes that
primal-dual relation.

Thus, for any fixed $x>0$ and with $y=u^{\prime}(x)$, and hence
equality in \eqref{eq:fundineq}, we have
\begin{eqnarray*}
0 & = & \int_{\mathbf{\Omega}}\left(V(\widehat{h}(y)) -
U(\widehat{g}(x)) + \widehat{g}(x)\widehat{h}(y)\right)\ud\mu \\
& = & v(y) - u(x) +
\int_{\mathbf{\Omega}}\widehat{g}(x)\widehat{h}(y)\ud\mu \\ 
& = & v(y) - u(x) + xy, \quad y=u^{\prime}(x),
\end{eqnarray*}
which implies that \eqref{eq:saturation} must hold. Inserting the
explicit form of $\widehat{h}(y)=U^{\prime}(\widehat{g}(x))$ into
\eqref{eq:saturation} yields the first relation in
\eqref{eq:derivatives}. Similarly, setting
$\widehat{g}(x)=I(\widehat{h}(y))=-V^{\prime}(\widehat{h}(y))$ into
\eqref{eq:saturation}, with $x=-v^{\prime}(y)$ (equivalent to
$y=u^{\prime}(x)$), yields the second relation in
\eqref{eq:derivatives}.

It remains to establish the relations in \eqref{eq:mvprime0}, which are
equivalent assertions. We shall prove the second one. This will use the
fact that $\mathcal{D}$ is a subset of $L^{1}(\mu)$.

From the second relation in \eqref{eq:derivatives} and the fact that
\begin{equation}
\int_{\mathbf{\Omega}}gh\ud\mu\leq xy, \quad
\forall\,g\in\mathcal{C}(x),h\in\mathcal{D}(y), \quad x,y>0,
\label{eq:absbc}
\end{equation}
we see that, for any $y>0$, we have
$-V^{\prime}(\widehat{h}(y))\in\mathcal{C}(-v^{\prime}(y))$. Thus, for
any $h\in\mathcal{D}$, \eqref{eq:absbc} implies that
\begin{equation}
-v^{\prime}(y) \geq
\int_{\mathbf{\Omega}}-V^{\prime}(\widehat{h}(y))h\ud\mu, \quad
\forall\,h\in\mathcal{D},
\label{eq:uprimeineq}
\end{equation}
which we shall make use of shortly.

Since $\mathcal{D}(y)$ is a subset of $L^{1}(\mu)$, we have
$\int_{\mathbf{\Omega}}\widehat{h}(y)\ud\mu\leq y$, and hence
\begin{equation}
\int_{\mathbf{\Omega}}\frac{\widehat{h}(y)}{y}\ud\mu\leq 1, \quad
\forall\,y>0.
\label{eq:whgxineq}
\end{equation}
Using Fatou's lemma in \eqref{eq:whgxineq} we have
\begin{equation*}
1 \geq \liminf_{y\downarrow 0}
\int_{\mathbf{\Omega}}\frac{\widehat{h}(y)}{y}\ud\mu \geq
\int_{\mathbf{\Omega}}\liminf_{y\downarrow 0}
\left(\frac{\widehat{h}(y)}{y}\right)\ud\mu, 
\end{equation*}
which, given that $\widehat{h}(y)/y$ is non-negative, gives that
$\liminf_{y\downarrow 0}(\widehat{h}(y)/y)<\infty,\,\mu$-a.e.
Therefore, writing $\widehat{h}(y)=:y\widehat{h}^{y}$, which defines a
unique element $\widehat{h}^{y}\in\mathcal{D}$, we have
\begin{equation*}
\widehat{h}^{0} := \liminf_{y\downarrow 0}\widehat{h}^{y} = 
\liminf_{y\downarrow 0}\frac{\widehat{h}(y)}{y} < \infty, \quad
\mu\mbox{-a.e.}   
\end{equation*}
Using this property and applying Fatou's lemma to \eqref{eq:uprimeineq}
we obtain, on using $-V^{\prime}(0)=+\infty$,
\begin{equation*}
+\infty \geq \liminf_{y\downarrow 0}(-v^{\prime}(y)) \geq 
\liminf_{y\downarrow 0}
\int_{\mathbf{\Omega}}-V^{\prime}(y\widehat{h}^{y})h\ud\mu \geq
\int_{\mathbf{\Omega}}\liminf_{y\downarrow 0}
(-V^{\prime}(y\widehat{h}^{y}))h\ud\mu = +\infty, 
\end{equation*}
which gives us the second relation in \eqref{eq:mvprime0}.

\end{proof}

We have now established all results that give the duality in Theorem
\ref{thm:adt}, so let us confirm this.

\begin{proof}[Proof of Theorem \ref{thm:adt}]

Lemma \ref{lem:conjugacy} implies the relations \eqref{eq:conjugacy}
of item (i). The statements in item (ii) are implied by Lemma
\ref{lem:primexis} and Lemma \ref{lem:dualexis}. Items (iii) and (iv)
follow from Lemma \ref{lem:uprime0} and Lemma \ref{lem:derivatives}.

\end{proof}

We are almost ready to prove the concrete duality in Theorem
\ref{thm:itwd}, because Theorem \ref{thm:adt} readily implies nearly
all of the assertions of Theorem \ref{thm:itwd}. The outstanding
assertion is the characterisation of the optimal wealth process in
\eqref{eq:owp} and the associated uniformly integrable martingale
property of the process
$\widehat{M}:=\widehat{X}(x)\widehat{R}(y) +
\int_{0}^{\cdot}\widehat{X}_{s}(x)\widehat{Y}_{s}(y)\ud\kappa_{s}$. So
we proceed to establish these assertions in the proposition below,
which turns out to be interesting in its own right. We take as given
the other assertions of Theorem \ref{thm:itwd}, and in particular the
optimal budget constraint in \eqref{eq:oihbc}. We shall confirm the
proof of Theorem \ref{thm:itwd} in its entirety after the proof of the
next result.

\begin{proposition}[Optimal wealth process]
\label{prop:owp}

Given the saturated budget constraint equality in \eqref{eq:oihbc},
the optimal wealth process is characterised by \eqref{eq:owp}. The
process
\begin{equation*}
\widehat{M}_{t} := \widehat{X}_{t}(x)\widehat{R}_{t}(y) +
\int_{0}^{t}\widehat{X}_{s}(x)\widehat{Y}_{s}(y)\ud\kappa_{s}, \quad
0\leq t <\infty, 
\end{equation*}
is a uniformly integrable martingale, converging to an integrable
random variable $\widehat{M}_{\infty}$, so the martingale extends to
$[0,\infty]$. The process $\widehat{X}(x)\widehat{R}(y)$ is a
potential, that is, a non-negative supermartingale satisfying
$\lim_{t\to\infty}\mathbb{E}[\widehat{X}_{t}(x)\widehat{R}_{t}(y)]=0$.
Moreover, $\widehat{X}_{\infty}(x)\widehat{R}_{\infty}(y)=0$, almost
surely.

\end{proposition}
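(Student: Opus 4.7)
The plan is to build everything on the saturated budget constraint \eqref{eq:oihbc} combined with the supermartingale property of $\widehat{M}$ supplied by Lemma \ref{lem:sbc} (applied to $X=\widehat{X}(x)\in\mathcal{X}(x)$ and $Y=\widehat{Y}(y)\in\mathcal{Y}(y)$). Since $\widehat{M}$ is a non-negative supermartingale with $\widehat{M}_{0}=xy$, we have $\mathbb{E}[\widehat{M}_{t}]\leq xy$ for every $t\geq 0$, equivalently
\begin{equation*}
\mathbb{E}[\widehat{X}_{t}(x)\widehat{R}_{t}(y)] \leq xy - \mathbb{E}\left[\int_{0}^{t}\widehat{X}_{s}(x)\widehat{Y}_{s}(y)\ud\kappa_{s}\right].
\end{equation*}
Passing to the limit $t\uparrow\infty$, the monotone convergence theorem together with the saturation equality \eqref{eq:oihbc} drives the right-hand side to zero, yielding $\lim_{t\to\infty}\mathbb{E}[\widehat{X}_{t}(x)\widehat{R}_{t}(y)]=0$, which is precisely the claimed potential property.

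Next I would extract the almost sure limits. The process $\widehat{X}(x)\widehat{R}(y)$ is a non-negative supermartingale, so by Doob's convergence theorem it converges almost surely to an integrable limit $\widehat{X}_{\infty}(x)\widehat{R}_{\infty}(y)\geq 0$. Fatou's lemma combined with the potential property gives
\begin{equation*}
\mathbb{E}[\widehat{X}_{\infty}(x)\widehat{R}_{\infty}(y)] \leq \liminf_{t\to\infty}\mathbb{E}[\widehat{X}_{t}(x)\widehat{R}_{t}(y)]=0,
\end{equation*}
and non-negativity forces $\widehat{X}_{\infty}(x)\widehat{R}_{\infty}(y)=0$ almost surely.

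For the martingale and uniform integrability assertion, the map $t\mapsto\mathbb{E}[\widehat{M}_{t}]$ is non-increasing (supermartingale property), equals $xy$ at $t=0$, and tends to $0+xy=xy$ as $t\to\infty$ by the preceding steps; hence it is constant equal to $xy$, and a supermartingale with constant expectation is a martingale. Since $t\mapsto\int_{0}^{t}\widehat{X}_{s}\widehat{Y}_{s}\ud\kappa_{s}$ is monotone and $\widehat{X}\widehat{R}$ converges a.s. to zero, $\widehat{M}$ admits the almost sure limit
\begin{equation*}
\widehat{M}_{\infty}=\int_{0}^{\infty}\widehat{X}_{s}(x)\widehat{Y}_{s}(y)\ud\kappa_{s},
\end{equation*}
with $\mathbb{E}[\widehat{M}_{\infty}]=xy=\mathbb{E}[\widehat{M}_{0}]$ by \eqref{eq:oihbc}; for a non-negative martingale this equality of $L^{1}$-norms of terminal and initial values is equivalent to uniform integrability (Scheff\'e), so $\widehat{M}$ extends to a uniformly integrable martingale on $[0,\infty]$.

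The representation \eqref{eq:owp} then drops out by conditioning: the UI martingale property gives $\widehat{M}_{t}=\mathbb{E}[\widehat{M}_{\infty}\vert\mathcal{F}_{t}]$, which rearranges to
\begin{equation*}
\widehat{X}_{t}(x)\widehat{R}_{t}(y) = \mathbb{E}\left[\left.\int_{t}^{\infty}\widehat{X}_{s}(x)\widehat{Y}_{s}(y)\ud\kappa_{s}\right\vert\mathcal{F}_{t}\right],\quad t\geq 0.
\end{equation*}
I do not see a serious obstacle; the only point requiring care is the simultaneous handling of the a.s. convergence of the supermartingale $\widehat{X}\widehat{R}$ (via Doob) and the monotone convergence of the pathwise integral, both of which are needed to identify $\widehat{M}_{\infty}$ and then apply the $L^{1}$-equality criterion for UI of a non-negative martingale.
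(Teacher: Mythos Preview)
Your proof is correct and follows essentially the same route as the paper's: both combine the supermartingale property of $\widehat{M}$ from Lemma~\ref{lem:sbc} with the saturated budget constraint to force $\mathbb{E}[\widehat{M}_{t}]\equiv xy$, hence the martingale property, and then extract the potential property, the almost sure vanishing of $\widehat{X}_{\infty}\widehat{R}_{\infty}$, uniform integrability, and finally the representation~\eqref{eq:owp}. The only cosmetic differences are the order in which the steps are taken (you establish the potential property first and deduce the martingale property from constancy of expectation, whereas the paper argues the martingale property first and derives the potential property afterward) and your use of Scheff\'e's lemma for uniform integrability where the paper appeals to a martingale convergence criterion from Protter.
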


\begin{proof}

It simplifies notation if we take $x=y=1$, and is without loss of
generality: although $y=u^{\prime}(x)$ in \eqref{eq:oihbc}, one can
always multiply the utility function by an arbitrary constant so as
to ensure that $u^{\prime}(1)=1$.  We thus have the optimal budget
constraint
\begin{equation}
\mathbb{E}\left[\int_{0}^{\infty}
\widehat{X}_{t}\widehat{Y}_{t}\ud\kappa_{t}\right] =1, 
\label{eq:oihbc1}
\end{equation}
for $\widehat{X}\equiv\widehat{X}(1)\in\mathcal{X}$ and
$\widehat{Y}\equiv\widehat{Y}(1)\in\mathcal{Y}$. Since
$\widehat{X}\in\mathcal{X}$, we know there exists an optimal wealth
process $\widehat{X}\equiv\widehat{X}(1)$ and an associated optimal
trading strategy $\widehat{H}$, such that
$ \widehat{X} = 1 + (\widehat{H}\cdot P)\geq 0$, and such that
$\widehat{M}:=\widehat{X}\widehat{R} +
\int_{0}^{\cdot}\widehat{X}_{s}\widehat{Y}_{s}\ud\kappa_{s}$ is a
supermartingale over $[0,\infty)$. The supermartingale condition, by
the same arguments that led to the derivation of the budget constraint
in Lemma \ref{lem:sbc}, leads to the inequality
$\mathbb{E}\left[\int_{0}^{\infty}
  \widehat{X}_{t}\widehat{Y}_{t}\ud\kappa_{t}\right]\leq1$ instead of
the equality \eqref{eq:oihbc1}. Similarly, if the supermartingale is
strict, we get a strict inequality in place of \eqref{eq:oihbc1}. We
thus deduce that $\widehat{M}$ must be a martingale over
$[0,\infty)$. We shall show that this extends to $[0,\infty]$, along
with the other claims in the lemma.

Since $\widehat{M}$ is a martingale, the (non-negative c\`adl\`ag)
deflated wealth process $\widehat{X}\widehat{R}$ is a martingale minus
a non-decreasing process, so is a non-negative c\`adl\`ag
supermartingale, and thus (by Cohen and Elliott \cite[Corollary
5.2.2]{ce15}, for example) converges to an integrable limiting random
variable
$\widehat{X}_{\infty}\widehat{R}_{\infty}:=
\lim_{t\to\infty}\widehat{X}_{t}\widehat{R}_{t}$ (and moreover
$\widehat{X}_{t}\widehat{R}_{t}\geq
\mathbb{E}[\widehat{X}_{\infty}\widehat{R}_{\infty}],\,t\geq 0$). The
non-decreasing integral in $\widehat{M}$ clearly also converges to an
integrable random variable, by virtue of the budget constraint. Thus,
$\widehat{M}$ also converges to an integrable random variable
$\widehat{M}_{\infty}:=\widehat{X}_{\infty}\widehat{R}_{\infty}+
\int_{0}^{\infty}\widehat{X}_{t}\widehat{Y}_{t}\ud\kappa_{t}$. By
Protter \cite[Theorem I.13]{protter}, the extended martingale over
$[0,\infty]$, $(\widehat{M}_{t})_{t\in[0,\infty]}$ is then uniformly
integrable, as claimed.

The martingale condition gives
\begin{equation*}
\mathbb{E}\left[\widehat{X}_{t}\widehat{R}_{t} +
\int_{0}^{t}\widehat{X}_{s}\widehat{Y}_{s}\ud\kappa_{s}\right] = 1,
\quad 0\leq t <\infty. 
\end{equation*}
Taking the limit as $t\to\infty$, using monotone convergence in the
second term within the expectation and utilising \eqref{eq:oihbc1}
yields
\begin{equation*}
\lim_{t\to\infty}\mathbb{E}[\widehat{X}_{t}\widehat{R}_{t}] = 0,
\end{equation*}
so that $\widehat{X}\widehat{R}$ is a potential, as claimed.

Using the uniform integrability of $\widehat{M}$ and taking the limit as
$t\to\infty$ in $\mathbb{E}[\widehat{M}_{t}]=1,\,t\geq 0$, we have
\begin{equation*}
1 = \lim_{t\to\infty}\mathbb{E}[\widehat{M}_{t}] =
\mathbb{E}\left[\lim_{t\to\infty}\widehat{M}_{t}\right] =
\mathbb{E}[\widehat{X}_{\infty}\widehat{R}_{\infty}] + 1,
\end{equation*}
on using \eqref{eq:oihbc1}. Hence, we get
$\mathbb{E}[\widehat{X}_{\infty}\widehat{R}_{\infty}]=0$ and, since
$\widehat{X}_{\infty}\widehat{R}_{\infty}$ is non-negative, we deduce
that $\widehat{X}_{\infty}\widehat{R}_{\infty}=0$, almost surely as
claimed. 

We can now assemble these ingredients to arrive at the optimal wealth
process formula \eqref{eq:owp}. Applying the martingale condition
again, this time over $[t,u]$ for some $t\geq 0$, we have
\begin{equation*}
\mathbb{E}\left[\left.\widehat{X}_{u}\widehat{R}_{u} +
\int_{0}^{u}\widehat{X}_{s}\widehat{Y}_{s}\ud\kappa_{s}
\right\vert\mathcal{F}_{t}\right] = \widehat{X}_{t}\widehat{R}_{t} + 
\int_{0}^{t}\widehat{X}_{s}\widehat{Y}_{s}\ud\kappa_{s}, \quad 0\leq
t\leq u <\infty. 
\end{equation*}
Taking thew limit as $u\to\infty$ and using the uniform integrability
of $\widehat{M}$ we obtain
\begin{equation*}
\mathbb{E}\left[\left.\lim_{u\to\infty}\left(\widehat{X}_{u}\widehat{R}_{u} +
\int_{0}^{u}\widehat{X}_{s}\widehat{Y}_{s}\ud\kappa_{s}\right)
\right\vert\mathcal{F}_{t}\right] = \widehat{X}_{t}\widehat{R}_{t} +
\int_{0}^{t}\widehat{X}_{s}\widehat{Y}_{s}\ud\kappa_{s}, \quad t\geq 0,
\end{equation*}
which, on using $\widehat{X}_{\infty}\widehat{R}_{\infty}=0$,
re-arranges to
\begin{equation*}
\widehat{X}_{t}\widehat{R}_{t} =
\mathbb{E}\left[\left.\int_{t}^{\infty}\widehat{X}_{s}\widehat{Y}_{s}
\ud\kappa_{s}\right\vert\mathcal{F}_{t}\right], \quad t\geq 0, 
\end{equation*}
which establishes \eqref{eq:owp}, and the proof is complete.
  
\end{proof}

\begin{proof}[Proof of Theorem \ref{thm:itwd}]

Given the definitions of the sets $\mathcal{C}(x)$ and
$\mathcal{D}(y)$ in \eqref{eq:Cx} and \eqref{eq:Dy}, respectively, and
the identification of the abstract value functions in \eqref{eq:vfabs}
and \eqref{eq:dvfabs} with their concrete counterparts in
\eqref{eq:primal} and \eqref{eq:dual}, Theorem \ref{thm:adt} implies
all the assertions of Theorem \ref{thm:itwd}, with the exception of
the optimal wealth process formula \eqref{eq:owp} and the uniform
integrability of $\widehat{M}:=\widehat{X}(x)\widehat{R}(y) +
\int_{0}^{\cdot}\widehat{X}_{s}(x)\widehat{Y}_{s}(y)\ud\kappa_{s}$,
which are established by Proposition \ref{prop:owp}.
  
\end{proof}

\section{Examples}
\label{sec:examples}

We end with two examples. The first uses an incomplete market model
with strict local martingale deflators, which is covered in our
framework. The market features a three-dimensional Bessel process for
the market price of risk (MPR) of a stock which also has a stochastic
volatility. We consider the problem \ref{eq:primal} with the measure
$\kappa$ satisfying $\ud\kappa_{t}=\exp(-\alpha t)\ud t$ for a
constant discount rate $\alpha>0$, so that
\begin{equation}
\kappa_{t} = \frac{1}{\alpha}\left(1-\e^{-\alpha t}\right), \quad
t\geq 0.
\label{eq:kalpha}
\end{equation}
Since $\kappa$ is absolutely continuous with respect to
Lebesgue measure, we use the formalism in Remark \ref{rem:sc}. We then
specialise the example to the Black-Scholes model, to confirm that we
obtain results consistent with the example presented by Bouchard and
Pham \cite[Section 4]{bp04}. The market is of course complete in this
simple case.

We shall use a constant relative risk aversion (CRRA) utility function
of the power form:
\begin{equation}
U(x) = \frac{x^{p}}{p}, \quad p<1,\,p\neq 0, \quad x\in\mathbb{R}_{+}.  
\label{eq:crraU}
\end{equation}
The case $p=0$ corresponds formally to logarithmic utility,
$U(x)=\log(x)$, and setting $p=0$ in the results for the power utility
function does indeed recover the results for logarithmic utility, as
can be verified by carrying out the analysis directly for that case.

\begin{example}[Three-dimensional Bessel process MPR, with stochastic
  volatility and correlation]
\label{examp:3db}

Take an infinite horizon complete stochastic basis
$(\Omega,\mathcal{F},\mathbb{F}:=(\mathcal{F}_{t})_{t\geq
  0},\mathbb{P})$, with $\mathbb{F}$ satisfying the usual hypotheses.
Let $(W,W^{\perp})$ be a two-dimensional Brownian motion. We take
$\mathbb{F}$ to be the augmented filtration generated by
$(W,W^{\perp})$.

Let $B$ denote the process which solves the stochastic differential
equation
\begin{equation*}
\ud B_{t} = \frac{1}{B_{t}}\ud t + \ud W_{t} =: \lambda_{t}\ud t + \ud
W_{t}, \quad B_{0}=1.  
\end{equation*}
The process $B$ is the so-called three-dimensional Bessel
process. The process $\lambda:=1/B$ will be the market price of risk
of a stock with price process $P$ and stochastic volatility process
$\sigma>0$, driven by the correlated Brownian motion
$\widetilde{W}:=\rho W+\sqrt{1-\rho^{2}}W^{\perp}$, and with
$\rho\in[-1,1]$ some $\mathbb{F}$-adapted stochastic correlation. We
need not specify the dynamics of $\sigma$ or $\rho$ any further for the
purposes of the example. The stock price dynamics are given by
\begin{equation*}
\ud P_{t} = \sigma_{t}P_{t}\ud B_{t} = \sigma_{t}P_{t}(\lambda_{t}\ud
t + \ud W_{t}). 
\end{equation*}
Note that this model satisfies the so-called \textit{structure condition} of
Pham et al \cite{prs98}, because $P$ admits the decomposition
$P=P_{0}+L+A$ with $L\in\mathcal{M}^{2}_{0,\mathrm{loc}}$ a locally
square-integrable local martingale null at zero and $A$ a predictable
process of finite variation null at zero, and such that
$A=\int_{0}^{\cdot}\widehat{\lambda}_{s}\ud\langle L\rangle_{s}$ for a
predictable process $\widehat{\lambda}$.

Take a constant relative risk aversion (CRRA) utility function as in
\eqref{eq:crraU}, with the measure $\kappa$ given by
\eqref{eq:kalpha}, so that $\gamma_{t}=\e^{\alpha t},\,t\geq 0$. The
primal value function is
\begin{equation*}
u(x) :=
\sup_{X\in\mathcal{X}(x)}\mathbb{E}\left[\int_{0}^{\infty}\e^{-\alpha
t}U(X_{t})\ud t\right], \quad x>0.
\end{equation*}
The wealth process satisfies
\begin{equation}
\ud X_{t} = \sigma_{t}\pi_{t}(\lambda_{t}\ud t + \ud W_{t}),
\quad X_{0}=x,
\label{eq:Xsde}
\end{equation}
where $\pi=HS$ is the trading strategy expressed in terms of the
wealth placed in the stock, with $H$ the process for the number of
shares.

With $\mathcal{E}(\cdot)$ denoting the stochastic exponential, the
supermartingale deflators in this model are given by local martingale
deflators of the form
\begin{equation}
Z:=\mathcal{E}(-\lambda\cdot W - \psi\cdot W^{\perp}),  
\label{eq:Zpsi}
\end{equation}
for an arbitrary process $\psi$ satisfying
$\int_{0}^{t}\psi^{2}_{s}\ud s<\infty$ almost surely for all
$t\geq 0$, with each such $\psi$ leading to a different deflator: this
market is of course incomplete. Let $\Psi$ denote the set of such
integrands $\psi$. In the case that $\sigma$ and $\rho$ are
deterministic, the market is complete and there is a unique local
martingale deflator $Z^{(0)}:=\mathcal{E}(-\lambda\cdot W)$. It is
well-known (see for instance Larsen \cite[Example 2.2]{larsen09}) that
$Z^{(0)}$ is a strict local martingale and, what is more, that
$Z^{(0)}=\lambda$ and that $\lambda$ is square integrable. The strict
local martingale property is inherited by $Z$ in \eqref{eq:Zpsi}, for
any choice of integrand $\psi$.

The supermartingales $R\in\mathcal{R}$ are given by
$R=\exp\left(-\int_{0}^{\cdot}\beta_{s}\ud s\right)Z$ and the
inter-temporal wealth deflators $Y\in\mathcal{Y}$ by $Y=\beta R$,
that is,
\begin{equation}
Y_{t} = \beta_{t}\exp\left(-\int_{0}^{t}\beta_{s}\ud s\right)Z_{t},
\quad t\geq 0,  
\label{eq:wpdexamp}
\end{equation}
with $\beta\in\mathcal{B}$, so $\int_{0}^{\cdot}\beta_{s}\ud s<\infty$
almost surely. The process $M:=XR+\int_{0}^{\cdot}X_{s}Y_{s}\ud s$ is
given as
\begin{equation}
M_{t} := X_{t}R_{t} + \int_{0}^{t}X_{s}Y_{s}\ud s = x +
\int_{0}^{t}R_{s}(\sigma_{s}\pi_{s}-\lambda_{s}X_{s})\ud W_{s} -
\int_{0}^{t}X_{s}R_{s}\psi_{s}\ud W^{\perp}_{s}, \quad
t\geq 0,
\label{eq:M}
\end{equation}
which is a non-negative local martingale and thus a supermartingale.

The convex conjugate of the utility function is
$V(y):=-y^{q}/q,\,y>0$, where $q<1,\,q\neq 0$ is the conjugate
variable to $p$, satisfying $1-q=(1-p)^{-1}$. The dual value function
is given by
\begin{equation*}
v(y) :=
\inf_{Y\in\mathcal{Y}}\mathbb{E}\left[\int_{0}^{\infty}\e^{-\alpha
t}V(yY_{t}\e^{\alpha t})\ud t\right], \quad y>0.
\end{equation*}
The dual minimisation involves both an optimisation over the local
martingale deflators $Z\in\mathcal{Z}$ as well as over the auxiliary
dual control $\beta\in\mathcal{B}$, since the wealth-path deflators
$Y\in\mathcal{Y}$ are given by \eqref{eq:wpdexamp}.

Denote the unique dual minimiser by $\widehat{Y}\in\mathcal{Y}$,
given by
\begin{equation*}
\widehat{Y} = \widehat{\beta}\exp\left(-\int_{0}^{\cdot}
\widehat{\beta}_{s}\ud s\right)\widehat{Z} = \widehat{\beta}\widehat{R},
\end{equation*}
where $\widehat{\beta}\in\mathcal{B}$ is the optimal auxiliary dual
control, $\widehat{R}\in\mathcal{R}$ denotes the optimal incarnation
of the supermartingale $R$ and $\widehat{Z}$ denotes the optimal local
martingale deflator, given by
\begin{equation*}
\widehat{Z} :=\mathcal{E}(-\lambda\cdot W - \widehat{\psi}\cdot
W^{\perp}),
\end{equation*}
for some optimal integrand $\widehat{\psi}$ in \eqref{eq:Zpsi}. For
use below, define the non-negative martingale $H$ by
\begin{equation}
H_{t} := \mathbb{E}\left[\left.
\int_{0}^{\infty}\e^{-\alpha(1-q)s}\widehat{Y}^{q}_{s}\ud
s\right\vert\mathcal{F}_{t}\right], \quad t\geq 0. 
\label{eq:H}
\end{equation}

Using \eqref{eq:pditwprime}, the optimal wealth process is given by
\begin{equation}
(\widehat{X}_{t}(x))^{-(1-p)} = u^{\prime}(x)\e^{\alpha
t}\widehat{Y}_{t}, \quad t\geq 0.
\label{eq:Xhat1}
\end{equation}
By \eqref{eq:oihbcprime} the optimisers satisfy the saturated budget
constraint
\begin{equation}
\mathbb{E}\left[\int_{0}^{\infty}\widehat{X}_{t}(x)\widehat{Y}_{t}\ud
t\right]=x.
\label{eq:sbc1}
\end{equation}
The relations \eqref{eq:Xhat1} and \eqref{eq:sbc1} yield
\begin{equation}
\widehat{X}_{t}(x) =
\frac{x}{H_{0}}\e^{-\alpha(1-q)t}\widehat{Y}^{-(1-q)}_{t}, \quad t\geq
0.
\label{eq:Xhat2}
\end{equation}
Using the result \eqref{eq:Xhat2} in the right-hand-side of
\eqref{eq:owpprime}, the optimal wealth process then also satisfies
\begin{equation*}
\widehat{X}_{t}(x)\widehat{R}_{t} = \frac{x}{H_{0}}\mathbb{E}\left[\left.
\int_{t}^{\infty}\e^{-\alpha(1-q)s}\widehat{Y}^{q}_{s}\ud
s\right\vert\mathcal{F}_{t}\right], \quad t\geq 0.  
\end{equation*}
More pertinently, the optimal martingale $\widehat{M}$, corresponding
to the process in \eqref{eq:M} at the optimum, is computed as
\begin{equation}
\widehat{M}_{t} := \widehat{X}_{t}(x)\widehat{R}_{t} +
\int_{0}^{t}\widehat{X}_{s}(x)\widehat{Y}_{s}\ud s =
\frac{x}{H_{0}}H_{t}, \quad t\geq 0,
\label{eq:Mhat}
\end{equation}
so is indeed a martingale.

By martingale representation, $\widehat{M}$ will have a stochastic
integral representation which, without loss of generality, can be
written in the form
\begin{equation}
\widehat{M}_{t} = x +
\int_{0}^{t}\widehat{R}_{s}\widehat{X}_{s}(x)(\varphi_{s} 
-q\lambda_{s})\ud W_{s} +
\int_{0}^{t}\widehat{R}_{s}\widehat{X}_{s}(x)\xi_{s}\ud
W^{\perp}_{s}, \quad t\geq 0, 
\label{eq:Mhatsi}
\end{equation}
for some integrands $\varphi,\xi$. Comparing with the representation
in \eqref{eq:M} at the optimum yields the optimal trading strategy in
terms of the optimal portfolio proportion
$\widehat{\theta}:=\widehat{\pi}/\widehat{X}(x)$, and the optimal
integrand $\widehat{\psi}$, as
\begin{equation}
\widehat{\theta}_{t} := \frac{\widehat{\pi}_{t}}{\widehat{X}_{t}(x)} =
\frac{\lambda_{t}}{\sigma_{t}(1-p)} + \frac{\varphi_{t}}{\sigma_{t}}, \quad
\widehat{\psi}_{t} = -\xi_{t}, \quad t\geq 0.  
\label{eq:thetahat}
\end{equation}
In particular, the process $\varphi$ records the correction to the
Merton-type strategy $\lambda/(\sigma(1-p))$ due to the stochastic
volatility and correlation.

This is as far as one can go without computing explicitly the dual
minimiser $\widehat{Y}$, which is typically impossible in closed form
for power utility, except for some special cases such as a
Black-Scholes model (as we shall show further below).

For the special case of logarithmic utility, one can set $p=0$ and
$q=0$ in the results for power utility, which gives that $H=1/\alpha$
is constant, and so $\widehat{M}=x$ is also constant, yielding
\begin{equation*}
\widehat{\theta}_{t} = \frac{\lambda_{t}}{\sigma_{t}}, \quad
\widehat{\psi}_{t} = 0, \quad t\geq 0,
\end{equation*}
giving the classic myopic trading strategy for logarithmic utility
(and the correction to the Merton strategy satisfies
$\varphi=q\lambda=0$ for $q=0$, as it should).

In particular, since $\widehat{\psi}\equiv 0$, the dual optimiser is
given as
\begin{equation}
\widehat{Y} = \widehat{\beta}\exp\left(-\int_{0}^{\cdot}
\widehat{\beta}_{s}\ud s\right)Z^{(0)},  
\label{eq:Yhat}
\end{equation}
for some optimal auxiliary dual control
$\widehat{\beta}\in\mathcal{B}$, with
$Z^{(0)}=\mathcal{E}(-\lambda\cdot W)$ the minimal local martingale
deflator. Moreover, setting $q=0$ in \eqref{eq:Xhat2} and using
$H=1/\alpha$ gives the optimal wealth process in the form
\begin{equation}
\widehat{X}_{t}(x) = \frac{\alpha x\e^{-\alpha t}}{\widehat{Y}_{t}},
\quad t\geq 0.
\label{eq:Xhat3}
\end{equation}
But, using the optimal strategy
$\widehat{\pi}=(\lambda/\sigma)\widehat{X}(x)$ in the wealth SDE
\eqref{eq:Xsde}, we also compute that
\begin{equation}
\widehat{X}_{t}(x) = \frac{x}{Z^{(0)}_{t}}, \quad t\geq 0.  
\label{eq:Xhat4}
\end{equation}
Equating the two expressions for $\widehat{X}(x)$ in \eqref{eq:Xhat3}
and \eqref{eq:Xhat4}, and then using \eqref{eq:Yhat}, yields that the
optimal auxiliary dual control is also constant, and given by
\begin{equation}
\widehat{\beta}_{t} = \alpha, \quad t\geq 0.  
\label{eq:betahat}
\end{equation}

These results for logarithmic utility can of course be obtained by
going directly through the analysis from scratch in the manner
above. Indeed, one can directly compute the dual value function, as
follows. Using the defintion \eqref{eq:dualprime} along with
$V(y)=-(1+\log(y))$ for logarithmic utility, one expresses the dual
value function as
\begin{equation*}
v(y) = \frac{1}{\alpha}\left(V(y)-1\right) +
\inf_{\beta\in\mathcal{B},\psi\in\Psi}\mathbb{E}\left[\int_{0}^{\infty}
\e^{-\alpha t}\left(\int_{0}^{t}(\beta_{s} +
\frac{1}{2}(\lambda^{2}_{s} + \psi^{2}_{s}))\ud s -
\log(\beta_{t})\right)\ud t\right].
\end{equation*}
The optimisations over $\psi$ and $\beta$ can be carried out
separately. Clearly, the term involving $\psi$ is minimised by
$\widehat{\psi}\equiv 0$, while an integration by parts in the
remaining integrals yields
\begin{equation*}
v(y) = \frac{1}{\alpha}\left(V(y)-1\right) +
\inf_{\beta\in\mathcal{B}}\mathbb{E}\left[\int_{0}^{\infty}
\e^{-\alpha t}\left(\frac{1}{2\alpha}\lambda^{2}_{t} +
\frac{\beta_{t}}{\alpha} - \log(\beta_{t})\right)\ud t\right].
\end{equation*}
The minimisation over $\beta$ can then be carried out pointwise,
yielding \eqref{eq:betahat} and giving the dual optimiser for
logarithmic utility:
$\widehat{Y}_{t}=\alpha\exp\left(-\alpha t\right)Z^{(0)}_{t},\,t\geq
0$ as before. Using this dual optimiser in \eqref{eq:pditwprime} gives
\eqref{eq:Xhat4}.

\end{example}

\begin{example}[Black-Scholes model, CRRA utility]
\label{examp:bs}

If we specialise Example \ref{examp:3db} to the case where $\lambda$
and $\sigma$ are constant, we are in a Black-Scholes market and the
computations for power utility can be carried out explicitly. We show
this in order to verify that our formalism reproduces the results of
the example in Bouchard and Pham \cite[Section 4]{bp04}. The market is
now complete, and there is a unique local martingale deflator given by
$Z=\mathcal{E}(-\lambda W)$. The wealth-path deflators take the form
\begin{equation*}
Y = \beta\exp\left(-\int_{0}^{\cdot}\beta_{s}\ud s\right)Z,  
\end{equation*}
for some $\beta\in\mathcal{B}$.

With this structure, the same method as for Example \ref{examp:3db}
yields the same representation \eqref{eq:Xhat2} for the optimal wealth
process, where in this case the dual minimiser is given by
\begin{equation}
\widehat{Y} =
\widehat{\beta}\exp\left(-\int_{0}^{\cdot}\widehat{\beta}_{s}\ud
s\right) \mathcal{E}(-\lambda W),   
\label{eq:Yhatbs}
\end{equation}
for some optimal auxiliary dual control
$\widehat{\beta}\in\mathcal{B}$, and the martingale $H$ in
\eqref{eq:H} has the same representation with the dual minimiser
in \eqref{eq:Yhatbs} in place. 

The process $M$ of \eqref{eq:M} is this time given by the same
expression but without the integral involving $\psi$, so we have
\begin{equation*}
M_{t} := X_{t}R_{t} + \int_{0}^{t}X_{s}Y_{s}\ud s = x +
\int_{0}^{t}R_{s}(\sigma\pi_{s}-\lambda X_{s})\ud W_{s}, \quad
t\geq 0.
\end{equation*}
The optimal martingale $\widehat{M}$ once again has the representation
in \eqref{eq:Mhat}, and has a stochastic integral representation of
the form in \eqref{eq:Mhatsi} but without the integral with respect to
$W^{\perp}$, and we once again find an expression of the form in
\eqref{eq:thetahat} for the optimal trading strategy. Our goal is to
now compute the dual minimiser, by computing $\widehat{\beta}$, and to
thus show that the correction $\varphi$ to the Merton strategy is zero
in this case.

To compute $\widehat{\beta}$ we examine the dual value function, which
is expressed in the form
\begin{equation*}
v(y) =
\inf_{\beta\in\mathcal{B}}V(y)\mathbb{E}\left[\int_{0}^{\infty}
\exp\left(-\alpha(1-q)t-q\int_{0}^{t}\beta_{s}\ud
s\right)\beta^{q}_{t}Z^{q}_{t}\ud t\right]. 
\end{equation*}
Given the constant parameters of the model, one now makes the (not
unreasonable) ansatz that $\widehat{\beta}$ is deterministic, and in
fact constant. With this conjecture, one passes the expectation inside
the integral, uses
\begin{equation}
\mathbb{E}\left[\left.Z^{q}_{u}\right\vert\mathcal{F}_{t}\right] =
\mathcal{E}(-q\lambda
W)_{t}\exp\left(-\frac{1}{2}q(1-q)\lambda^{2}u\right), \quad 0\leq
t\leq u,
\label{eq:ezq}
\end{equation}
and computes the resultant expression to arrive at
\begin{equation*}
v(y) =
\inf_{\beta}V(y)\left(\frac{\beta^{q}}{q\beta + (1-q)(\alpha +
\frac{1}{2}q\lambda^{2})}\right). 
\end{equation*}  
Straightforward differentiation gives the (constant) optimiser as
\begin{equation*}
\widehat{\beta} = \alpha +\frac{1}{2}q\lambda^{2},
\end{equation*}
and \eqref{eq:Yhatbs} then gives the dual minimiser. With this in
place, one expresses the martingale $H$ in the form
\begin{equation*}
H_{t} = \left(\alpha
+\frac{1}{2}q\lambda^{2}\right)^{q}\mathbb{E}\left[\left.
\int_{0}^{\infty}\exp\left(\left(\alpha
+\frac{1}{2}q\lambda^{2}\right)
u\right)Z^{q}_{u}\right\vert\mathcal{F}_{t}\right],
\quad t\geq 0.  
\end{equation*}
Once again, we take the expectation inside the integral and use
\eqref{eq:ezq}, and we arrive at
\begin{equation*}
H_{t} = \left(\alpha +
\frac{1}{2}q\lambda^{2}\right)^{-(1-q)}\mathcal{E}(-q\lambda W)_{t},
\quad t\geq 0.
\end{equation*}
This in turn yields that the optimal martingale $\widehat{M}$ is given
by
\begin{equation*}
\widehat{M}_{t} =  x\frac{H_{t}}{H_{0}} = x \mathcal{E}(-q\lambda W)_{t},
\quad t\geq 0,
\end{equation*}
and the optimal wealth process is given by the representation
\eqref{eq:Xhat2} as
\begin{equation*}
\widehat{X}_{t}(x) = x\frac{\mathcal{E}(-q\lambda W)_{t}}{Z_{t}},
\quad t\geq 0.  
\end{equation*}
Thus, the processes $\widehat{M},\widehat{X}(x)$ are related according
to
\begin{equation*}
\widehat{X}_{t}(x)Z_{t} = \widehat{M}_{t}, \quad t\geq 0.  
\end{equation*}
We can now compute the optimal trading strategy. Using the dynamics of
the wealth process for any strategy $\pi$, given by \eqref{eq:Xsde}
with constant parameters, we have that
\begin{equation}
X_{t}Z_{t} = x + \int_{0}^{t}\left(\sigma\pi_{s}-\lambda
X_{s}\right)\ud W_{s}, \quad t\geq 0.  
\label{eq:XZ1}
\end{equation}
On the other hand, at the optimum, since
$\widehat{X}(x)Z=x\mathcal{E}(-q\lambda W)$, we have
\begin{equation}
\widehat{X}_{t}(x)Z_{t} = x -
q\lambda\int_{0}^{t}\widehat{X}_{s}Z_{s}\ud W_{s}, \quad t\geq 0. 
\label{eq:XZ2}
\end{equation}
Equating \eqref{eq:XZ1} at the optimum with \eqref{eq:XZ2} gives the
optimal trading strategy as
\begin{equation*}
\widehat{\theta}_{t} \equiv
\frac{\widehat{\pi}_{t}}{\widehat{X}_{t}(x)} = \frac{\lambda}{\sigma(1-p)}, \quad t\geq 0,
\end{equation*}
so the optimal strategy is the Merton strategy, as expected.

\end{example}

{\small
\bibliography{ihumfitw_refs}
\bibliographystyle{siam}
}

\end{document}